\newtheorem{lemma}{Lemma}[section]
\newtheorem{theorem}[lemma]{Theorem}
\newtheorem{definition}[lemma]{Definition}
\newtheorem{corollary}[lemma]{Corollary}
\newtheorem{proposition}[lemma]{Proposition}
\newenvironment{proofof}[1]{\smallskip\noindent{\bf Proof of #1}}%
        {\hspace*{\fill}$\Box$\par}
\newcommand{\eps}{\epsilon}
\newcommand{\rr}{\textrm{\sc  RR}\xspace}
\newcommand{\laps}{\textrm{\sc  LAPS}\xspace}
\newcommand{\pf}{\textrm{\sc PF}\xspace}
\newcommand{\etal}{et al.\ }
\newcommand{\dd}{\texttt{d}}
\newcommand{\cR}{{\cal R}}
\newcommand{\cS}{{\cal S}}
\newcommand{\cF}{{\cal F}}
\newcommand{\cI}{{\cal I}}
\newcommand{\oneindi}{{\texttt 1}}
\newcommand{\initOneLiners}{%
    \setlength{\itemsep}{0pt}
    \setlength{\parsep }{0pt}
    \setlength{\topsep }{0pt}
}
\newcommand{\cL}{\mathcal L}
\newcommand{\cA}{\mathcal A}
\newcommand{\cC}{\mathcal C}
\newcommand{\cT}{\mathcal T}
\newcommand{\bone}{{\texttt 1}}
\newcommand{\primal}{\mathsf{PRIMAL}}
\newcommand{\dual}{\mathsf{DUAL}}
\newcommand{\cppf}{\mathsf{CP_{PF}}}
\newcommand{\vx}{\vec{x}}
\newcommand{\betav}{\vec{\beta}}
\newcommand{\xv}{\vec{x}}
\newcommand{\yv}{\vec{y}}
\newcommand{\zerov}{\vec{0}}
\newcommand{\Bbf}{{\bf B}}
\newcommand{\poly}{\mathcal{P}}
\newcommand{\psp}{\textsf{PSP}\xspace}
\newcommand{\qv}{\vec{f}}
\renewcommand{\vec}[1]{\mathbf{#1}}
\newcommand{\lpp}{\mbox{$\mathsf{LP}_\mathsf{primal}$}\xspace}
\newcommand{\lpd}{\mbox{$\mathsf{LP}_\mathsf{dual}$}\xspace}
\newcommand{\slaps}{\textrm{\sc  S-LAPS}\xspace} 
\begin{document}

\title{Competitive Algorithms from Competitive Equilibria: Non-Clairvoyant Scheduling under Polyhedral Constraints}
\author{Sungjin Im\thanks{Electrical Engineering and Computer Science, University of California, Merced CA 95344. {\tt sim3@ucmerced.edu}. This work was done while the author was at Duke. Supported by NSF Award CCF-1008065.}  \and Janardhan Kulkarni \thanks{Department of Computer Science, Duke University , 308 Research Drive, Durham, NC 27708. {\tt kulkarni@cs.duke.edu}. Supported by NSF Awards CCF-1008065 and IIS-0964560.}  \and
Kamesh Munagala\thanks{Department of Computer Science, Duke University, Durham NC 27708-0129. {\tt kamesh@cs.duke.edu}. Supported by an award from Cisco, and by NSF  grants CCF-0745761, CCF-1008065, CCF-1348696, and IIS-0964560.}
}
\date{}
\maketitle

\thispagestyle{empty}

\begin{abstract}
We introduce and study a general scheduling problem that we term the Packing Scheduling problem (\psp). In this problem, jobs can have different arrival times and sizes; a scheduler can process job $j$ at rate $x_j$, subject to arbitrary packing constraints over the set of rates ($\vec{x}$) of the outstanding jobs. The \psp framework captures a variety of scheduling problems, including the classical problems of unrelated machines scheduling, broadcast scheduling, and scheduling jobs of different parallelizability. It also captures scheduling constraints arising in diverse modern environments ranging from individual computer architectures to data centers. More concretely, \psp models multidimensional resource requirements and parallelizability, as well as network bandwidth requirements found in data center scheduling.

\smallskip	
In this paper, we design {\em non-clairvoyant} online algorithms for \psp and its special cases -- in this setting, the scheduler is unaware of the sizes of jobs. Our results are summarized as follows. 
	
\begin{itemize}
\item For minimizing total weighted completion time, we show a $O(1)$-competitive algorithm.  Surprisingly, we achieve this result by applying the well-known Proportional Fairness algorithm (\pf) to perform allocations each time instant. Though \pf has been extensively studied in the context of maximizing fairness in resource allocation, we present the {\em first}  analysis in adversarial and general settings for optimizing job latency. Our result is also the first $O(1)$-competitive algorithm for weighted completion time for several classical non-clairvoyant scheduling problems. 
\item For minimizing total weighted flow time, 
for any constant $\eps >0$, any $O(n^{1-\eps})$-competitive algorithm requires extra speed (resource augmentation) compared to the offline optimum. 
We  show that \pf is a $O(\log n)$-speed $O(\log n)$-competitive non-clairvoyant algorithm, where $n$ is the total number of jobs. We further show that there is an instance of \psp for which no non-clairvoyant algorithm can be $O(n^{1-\eps})$-competitive with $o(\sqrt{ \log n} )$ speed. 

\item For the classical problem of minimizing total flow time for unrelated machines in the non-clairvoyant setting, we present the first online algorithm which is scalable ($(1+\eps)$-speed $O(1)$-competitive for any constant $\eps > 0$). No non-trivial results were known for this setting, and the previous scalable algorithm could handle only related machines. We develop new algorithmic techniques to handle the unrelated machines setting that build on a new single machine scheduling policy. Since unrelated machine scheduling is a special case of \psp, when contrasted with the lower bound for \psp, our result also shows that \psp is significantly harder than perhaps the most general classical scheduling settings. 
\end{itemize}

Our results  for \psp show that instantaneous fair scheduling algorithms can also be effective tools for minimizing the overall job latency, even when the scheduling decisions are non-clairvoyant and constrained by general packing constraints. 
\end{abstract}

\newpage
\setcounter{page}{1}

\section{Introduction}
	\label{sec:intro}

Consider a typical data center setting,  where there is a cluster of machines with a distributed file system implementation (such as HDFS~\cite{hdfs}) layered on top of the cluster. Users submit executables (or jobs) to this cluster. In a typical {\sc MapReduce} implementation such as Hadoop~\cite{hadoop}, each job is a collection of parallel map and reduce tasks requiring certain CPU, disk space, and memory to execute. The job therefore comes with a request for resources in each dimension; these can either be explicitly specified, or can be estimated by the task scheduler from a high-level description of the job.

In a general scheduling scenario that has gained a lot of attention recently (see~\cite{drf} and followup work~\cite{ColeGG,Zaharia08,Ahmad2012,ec2spot,popa2012faircloud,lee2011heterogeneity}), there are $M$ different types of resources. In the context of a data center, these could be CPU, disk, memory, network bandwidth, and so on. The resources are assumed to be infinitely divisible due to the abundance of resources, and there is $R_d$ amount of resource $d$. 

Each job $j$ is associated with resource demand vector $\qv_j = (f_{j1}, f_{j2}, ..., f_{jM})$ so that it requires $f_{jd}$ amount of the $d^{th}$ resource.  At each time instant, the resources must be feasibly allocated among the jobs. If job $j$ is allocated resource vector $(a_{j1}, a_{j2}, \ldots, a_{jM})$ where $a_{jd} \le f_{jd}$,  it is processed at a rate that is determined by its bottleneck resource, so that its rate is $x_j = \min_d (a_{jd}/f_{jd})$. Put differently, the rate vector $\vec{x}$ needs to satisfy the set of packing constraints:
$$ \poly=  \left \{\sum_j x_j f_{jd} \le R_d \ \ \forall d \in [M];  \qquad \vec{x} \le \vec{1}; \qquad \vec{x} \ge 0 \right\}$$

The above resource allocation problem, that we term \textsf{Multi-dimensional Scheduling} is not specific to data centers -- the same formulation has been widely studied in network optimization, where resources correspond to bandwidth on edges and jobs correspond to flows. The bandwidth on any edge must be feasibly allocated to the flows, and the rate of a flow is determined by its bottleneck allocation. For instance, see~\cite{KMT} and copious followup work in the networking community. 

The focus of such resource allocation has typically been instantaneous {\em throughput}~\cite{drf}, {\em fairness}~\cite{drf,popa2012faircloud,lee2011heterogeneity}, and {\em truthfulness}~\cite{drf,ColeGG} -- at each time instant, the total rate must be as large as possible, the vector $\vec{x}$ of rates must be ``fair" to the jobs, and the jobs should not have incentive to misreport their requirements. The scheduling (or temporal) aspect of the problem has largely been ignored. Only recently, in the context of data center scheduling, has {\em response time} been considered as an important metric -- this corresponds to the total completion time or total flow time of the jobs in scheduling parlance. Note that the schedulers in a data center context typically have access to instantaneous resource requirements (the vectors $\vec{f_j}$), but are not typically able to estimate how large the jobs are in advance -- in scheduling parlance, they are {\em non-clairvoyant}. They further are only aware of jobs when they arrive, so that they are {\em online} schedulers.

Though there has been extensive empirical work measuring response times of various natural resource allocation policies for data center scheduling~\cite{drf,Zaharia08,Ahmad2012,popa2012faircloud,lee2011heterogeneity}, there has been very little theoretical analysis of this aspect; see~\cite{Bonald06, KellyMW09} for recent queueing-theoretic analysis of network routing policies. This is the starting point of our paper -- we formalize non-clairvoyant, online scheduling under packing constraints on rates as a general framework that we term \textsf{General Polytope Scheduling Problem} (\psp), and present competitive algorithms for problems in this framework.

\subsection{General Polytope Scheduling Framework}
In this paper, we consider a generalization of the multi-dimensional scheduling problem discussed above. In this framework  that we term \textsf{General Polytope Scheduling Problem} (\psp), the packing constraints on rates can be arbitrary. We show below (Section~\ref{sec:app}) that this framework not only captures multi-dimensional scheduling, but also captures classical scheduling problems such as unrelated machine scheduling (with preemption and migration), fractional broadcast scheduling, as well as scheduling jobs with varying parallelizability -- only some special cases have been studied before. 

\medskip

In \psp, a scheduling instance consists of $n$ jobs, and each job $j$ has weight $w_j$, size $p_j$, and arrives at time $r_j$. At any time instant $t$, the scheduler must assign rates $\{x_j\}$ to the current jobs in the system. Let $x^\cA_j(t)$ denote the rate at which job $j$ is processed at time $t$ by a scheduler/algorithm $\cA$. Job $j$'s completion time $C^\cA_j$ under the schedule of $\cA$ is defined to be the first time $t'$ such that $\int_{t = r_j}^{t'} x^\cA_j(t) \dd t \geq p_j$. Similarly, we define job $j'$ flow time as $F^\cA_j = C^\cA_j - r_j$, which is the length of time job $j$ waits to be completed since its arrival.  When the algorithm $\cA$ and time $t$ are clear from the context, we may drop them from the notation. 

We assume the vector of rates $\vec{x}$ is constrained by a packing polytope $\poly$, where the matrices $H, Q$ have non-negative entries.
\begin{equation} \label{eq:main}
 \poly  =  \Big\{ \vec{x} \ \ | \ \ \vec{x} \le Q \vec{z};  \qquad  H \vec{z} \le \vec{1};  \qquad \vec{x} \ge 0;  \qquad \vec{z} \ge 0 \Big\}
\end{equation}

The class of scheduling algorithms we consider are constrained by several properties, all of which are naturally motivated by modern scheduling applications. 
\begin{itemize}
\item It is  {\em online} and learns about job $j$ only when it arrives. Before this point, $x_j = 0$.
\item It is {\em non-clairvoyant}, {\em i.e.}, does not know a job's size $p_j$ until completing the job.  
\item It is allowed to re-compute $\vec{x}(t)$ at any real time $t$ arbitrarily often. As we will see below, this allows for pre-emption as well as migration across machines at no cost. Though we technically allow infinitely many re-computations, our algorithms will  perform this computation only when jobs either arrive or complete.
\end{itemize}

Without loss of generality, we will assume the matrices $H,Q$ are known in advance to the scheduler and are independent of time, so that $\poly$ itself is time-invariant. One way of enforcing this is to assume that jobs arrive online from a subset of a (possibly countably infinite) universe $U$ of possible jobs, and the matrices $H,Q$ are defined over this universe. This is purely done to simplify our description and notation -- in our applications, the polytope $\poly$ will indeed be defined only over the subset of jobs currently in the system, and the algorithms we design will make no assumptions over future jobs.   

Under these assumptions, we will investigate non-clairvoyant online algorithms that minimize the overall job latency, {\em i.e.},  the total weighted completion time $\sum_j w_j C_j$ (resp. total weighted flow time $\sum_j w_j F_j$). We will compare our algorithm against the optimal offline scheduler that knows the scheduling instance ($w_j, p_j, r_j$ for all jobs $j$) in advance, using the standard notion of competitive ratio. 

\medskip
Our main result is the {\em first} constant competitive non-clairvoyant algorithm for \psp under the weighted completion time metric, which also implies the first such result for all the applications we consider below. Our algorithm is in fact a natural and widely studied method of fair allocation termed {\em proportional fairness} (\pf). In effect, we show that a resource allocation approach to viewing scheduling problems yields insights into designing competitive non-clairvoyant schedules. (We show several other results; see Section~\ref{sec:results}.)

\subsection{Applications of the \psp Framework} 
\label{sec:app}
Before discussing our results in depth, we present several concrete problems that fall in the \psp framework. In each case, we present a mapping to the constraints in $\poly$. We have already seen the special case of multi-dimensional scheduling.  We note that our framework can handle combinations of these problems as well.

\medskip
\noindent {\bf All-or-nothing Multidimensional Scheduling.} 

In multidimensional scheduling, we have assumed that a job needs all resources to execute, and given a fraction of all these resources, it executes at a fraction of the rate. However, in practice, a job often needs to receive its entire requirement in order to be processed~\cite{vmpack, Zaharia08} -- this can be necessitated by the presence of indivisible virtual machines that need to be allocated completely to jobs. Therefore,  a job $j$ is processed at a rate of 1 when it receives the requirement $\vec{f}_j$, otherwise not processed at all.  This all-or-nothing setting was studied recently in~\cite{FoxK13} when there is only one dimension. 
To see how this problem is still captured by \psp, define variables that encode feasible schedules. Let $\cS$ denote the collection of subsets of jobs that can be scheduled simultaneously. Let $z_S$ denote the indicator variable which becomes 1 if and only if $S$ is exactly the set of jobs currently processed. We observe this setting is captured by the following polytope.

\begin{equation}
 \poly = \Big \{x_j \leq \sum_{S: j \in S} z_S  \ \ \forall j; \qquad \sum_{S \in \cS}  z_S \leq 1; \qquad \vec{x} \ge 0 ; \qquad \vec{z} \ge 0 \Big \}
\end{equation}

The solution to $\poly$ is a set of preemptive schedules that process jobs in $S$ for $z_S$ fraction of time.

\medskip
\noindent {\bf Scheduling Jobs with Different Parallelizability over Multiple Resources.} 

In most cluster computing applications, a job is split into several {\em tasks} that are run in parallel. However, jobs may have different parallelizability depending on how efficiently it can be decomposed into tasks \cite{wolf2010flex}. To capture varying degree of parallelizability, an elegant theoretical model a.k.a. arbitrary speed-up curves was introduced by Edmonds \etal\cite{EdmondsCBD03}. In this model, there is only one type of resources, namely homogeneous machines, and a job $j$ is processed at a rate of $\Gamma_{j}(m_j)$ when assigned $m_j$ machines. The parallelizability function $\Gamma_j$ can be different for individual jobs $j$, and is assumed to be non-decreasing, and sub-linear ($\Gamma_{j}(m_j)/m_j$ is non-increasing). Due to the simplicity and generality, this model has received considerable amount of attention \cite{RobertS08,ChanEP09, EdmondsIM11,  EdmondsP12, FoxIM13}.  However, no previous work addresses parallelizability in multiple dimensions and heterogeneous machines. Here we extend $\Gamma_j$ to be a multivariate function that takes the resource vector ${\vec z}_j:= (z_{j1}, z_{j2}, ..., z_{jM})$ of dimension $M$ job $j$ is assigned, and outputs the maximum speed job $j$ can get out of the assignment. The function $\Gamma_j$ is restricted to be concave in any positive direction. Observe that $x_j \leq \Gamma_j( {\bf z}_j)$ can be (approximately) expressed by a set of packing constraints over ${\bf a}_j$ that upper bound $x_j$. (The \psp framework can be generalized to a convex polytope, and our results carry over). Then the obvious extra constraints is  $\sum_j  {\bf z}_j \leq  {\bf 1}$.  This extension can also capture tradeoff between resources (complements or substitutes) that can be combinatorial in nature. For example, a job can boost its execution by using more CPU or memory in response to the available resources.

\medskip
\noindent {\bf Non-clairvoyant Scheduling for Unrelated Machines.} In this problem there are $M$ unrelated machines. Job $j$ is processed at rate $s_{ij}  \in [0, \infty)$ on each machine $i$. (Unrelated machines generalize related machines where machines have different speeds independent of jobs). The online algorithm is allowed to preempt and migrate jobs at any time with no penalty -- without migration, any online algorithm has an arbitrarily large competitive ratio for the total completion time \cite{GuptaIKMP12}. The important constraint is that at any instantaneous time, each machine can schedule only one job, and a job can be processed only on a single machine. 

We can express this problem as a special case of $\psp$ as follows. Let $z_{ij}$ denote the fraction of job $j$ that is scheduled on machine $i$. Then:
$$ \poly = \Big \{ x_j \le \sum_i s_{ij} z_{ij}\ \ \forall j; \qquad \sum_j z_{ij} \le 1 \ \ \forall i; \qquad \sum_i z_{ij} \le 1  \ \ \forall j; \qquad \vec{x} \ge 0 ; \qquad \vec{z} \ge 0 \Big \}$$
Note that any feasible $\vec{z}$ can be decomposed into a convex combination of injective mappings from jobs to machines preserving the rates of all jobs. 
Therefore, any solution to $\poly$ can be feasibly scheduled with preemption and reassignment. As before, the rates $\vec{s_j}$ are only revealed when job $j$ arrives. 
No non-trivial result was known for this problem before our work. The only work related to this problem considered the setting where machines are related and jobs are unweighted~\cite{GuptaIKMP12}. The algorithm used in~\cite{GuptaIKMP12} is a variant of Round Robin; however, as pointed out there, it is not clear how to extend these techniques to take job weights and heterogeneity of machines into account, and this needs fundamentally new ideas. 

\medskip
\noindent {\bf Generalized Broadcast Scheduling.} There are $M$ pages of information (resources) that is stored at the server. The server broadcasts a unit of pages at each time step. When a page $i$ is broadcast, each job $j$ (of total size $p_j$) is processed at rate  $s_{ij}$. The vector $\vec{s_j}$ of rates  is only revealed when job $j$ arrives.  Therefore:
$$ \poly = \Big\{ x_j \le \sum_{i \in [M]} s_{ij} z_i \ \ \forall j; \qquad \sum_{i \in [M]} z_i \le 1; \qquad \vec{x} \ge 0; \qquad \vec{z} \ge 0 \Big \}$$
This setting strictly generalizes classical fractional  broadcast scheduling where it is assumed that for each job $j$, the rate $s_{ij} = 0$ for all pages except one page $i$, and for the page $i$, $s_{ij} = 1$. In general, $s_{ij}$ can be thought of as measuring how much service $i$ makes happy client $j$ -- for motivations, see~\cite{AzarG11, ImNZ12} where more general submodular functions were considered for clairvoyant schedulers in a different setting. 
We note that fractional classical broadcast scheduling is essentially equivalent to the integral case since there is an online rounding procedure~\cite{BansalKN10} that makes the fractional solution integral while increasing each job's flow time by at most a constant factor (omitting technicalities).  The unique feature of broadcast scheduling is that there is no limit on the number of jobs that can be processed simultaneously as long as they ask for the same resource. It has therefore received considerable attention in theory \cite{GandhiKPS06, BansalCS08, BansalKN10, EdmondsIM11, ImM12, BansalCKL14} and has abundant applications in practice such as multicast systems, LAN and wireless systems \cite{Wong88,AcharyaFZ95,AksoyF98}.

\subsection{Our Algorithms and Results}
\label{sec:results}
Our main result is the following; it also yields the first such result for {\em all} the applications discussed in Section~\ref{sec:app} above.
\begin{theorem}
	\label{thm:completion}[Section~\ref{sec:completion}.]
For the weighted completion time objective, there exists a $O(1)$-competitive non-clairvoyant scheduling algorithm for $\psp$.
\end{theorem}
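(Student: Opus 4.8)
The plan is to take the algorithm to be \emph{Proportional Fairness} (\pf) and to analyze it via linear-programming duality, using the competitive-equilibrium structure of \pf to produce the dual solution. At every time $t$, let $W(t)$ be the total weight of the released, unfinished jobs; \pf sets the rate vector to the maximizer of $\sum_{j\text{ alive}}w_j\ln x_j$ over $\vec{x}\in\poly$ supported on the alive jobs, so it is online and non-clairvoyant (it uses only the weights and $\poly$). The first thing I would extract are the KKT/equilibrium conditions of this concave program: multipliers $\bar\gamma_j(t)=w_j/x_j(t)$ on the constraints $\vec{x}\le Q\vec{z}$ and nonnegative prices $\bar\eta_k(t)$ on $H\vec{z}\le\vec{1}$ with $Q^{T}\bar\gamma(t)\le H^{T}\bar\eta(t)$ (equality on the support of $\vec{z}$), the \emph{competitive-equilibrium identity} $\sum_k\bar\eta_k(t)=W(t)$, and the \emph{variational inequality}: for every $\vec{y}\in\poly$ supported on the alive jobs, $\sum_j w_j\,y_j/x_j(t)\le W(t)$.

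Second, I would write the natural time-continuous relaxation of $\sum_j w_jC_j$: variables $x_j(t),z_\ell(t)$ for $t\ge r_j$, with $\vec{x}(t)\le Q\vec{z}(t)$, $H\vec{z}(t)\le\vec{1}$, $\vec{x},\vec{z}\ge 0$ and the completion constraint $\int_{r_j}^{\infty}x_j(t)\,\dd t\ge p_j$, and objective $\sum_j w_jr_j+\sum_j\tfrac{w_j}{p_j}\int_{r_j}^{\infty}(t-r_j)x_j(t)\,\dd t$. The pointwise bound $C_j\ge r_j+\tfrac{1}{p_j}\int_{r_j}^{\infty}(t-r_j)x_j^{\opt}(t)\,\dd t$ shows this relaxation is a lower bound on \opt. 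Since $\sum_j w_jC_j=\sum_j w_jr_j+\sum_j w_jF_j$ for \pf as well, it suffices to exhibit a feasible LP-dual solution of value $\Omega(1)\cdot\big(\sum_j w_jr_j+\sum_j w_jF_j\big)$, where $F_j$ is \pf's flow time. (I prefer this to a running potential argument, since completion-time \opt is awkward to track online.)

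Third is the dual fitting. The dual has variables $\beta_j\ge 0$ (completion), $\gamma_j(t)\ge 0$, $\eta_k(t)\ge 0$, with constraints $\gamma_j(t)+\tfrac{w_j}{p_j}(t-r_j)\ge\beta_j$ for $t\ge r_j$ and $Q^{T}\gamma(t)\le H^{T}\eta(t)$, and objective $\sum_j w_jr_j+\sum_j\beta_jp_j-\int_{0}^{\infty}\sum_k\eta_k(t)\,\dd t$. The key move is to set $\eta(t)$ to a fixed constant times the \pf prices $\bar\eta(t)$ of Step~1; then the second dual constraint holds automatically whenever each $\gamma_j(t)$ is kept below the same constant times $\bar\gamma_j(t)=w_j/x_j(t)$, and the subtracted term reduces to a constant times $\int_0^\infty W(t)\,\dd t=\sum_j w_jF_j$. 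It then remains to choose $\beta_j$ --- and, consistently, the $\gamma_j(t)$, which must vanish once $j$ has completed under \pf --- so that $\sum_j\beta_jp_j$ recovers a constant fraction of $\sum_j w_jF_j$.

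The hard part is exactly this choice of $\beta_j$. The naive setting $\beta_j\asymp w_jF_j/p_j$ forces $\gamma_j(t)$ up to roughly $\beta_j-\tfrac{w_j}{p_j}(t-r_j)$, which can exceed any constant multiple of $w_j/x_j(t)$ on precisely the intervals where \pf processes $j$ slowly; because a job's \pf-rate is not monotone over its lifetime --- it drops when new jobs arrive and rises when jobs complete --- this would inflate the needed scaling of $\eta$ and destroy the bound. I expect the fix to be an amortized choice of $\beta_j$: tie it to a smoothed surrogate for $F_j$ (say the time \pf needs to finish a constant fraction of $j$'s work), and carry out the accounting epoch by epoch between consecutive arrival/completion events, using $\sum_k\bar\eta_k(t)=W(t)$ and the variational inequality to absorb the per-epoch slack against the weight currently in the system. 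Once $\beta_j$ is pinned down, the rest --- the relaxation being a valid lower bound, verifying dual feasibility, and collecting the final constant --- should be routine.
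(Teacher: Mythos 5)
Your high-level architecture matches the paper's: the algorithm is \pf, the analysis is dual fitting against a time-indexed completion-time LP, and the KKT prices with the identity $\sum_k \bar\eta_k(t)=W(t)$ (the paper's Lemma~\ref{lem:sumdual}) supply the capacity duals. You also correctly locate the crux --- choosing the completion-constraint duals so that they recover $\Omega(1)\sum_j w_jF_j$ without violating the per-time constraints --- and correctly observe that $\beta_j\asymp w_jF_j/p_j$ fails because \pf's rates are not monotone. (Minor quibble: the binding case is when $x_j(t)$ is temporarily \emph{large} while $C_j$ is still far away, i.e.\ when $j$ is processed \emph{fast} early; you wrote ``slowly.'') But the mechanism that actually closes the argument is absent, and your sketched fix does not supply it. Setting $\beta_j$ from ``the time \pf needs to finish a constant fraction of $j$'s work'' does not give $\sum_j\beta_jp_j=\Omega(\sum_jw_jF_j)$ (a job can finish half its work quickly and then stall), and more fundamentally, with purely instantaneous prices $\eta(t)=c\,\bar\eta(t)$ and a single scalar $\beta_j$ per job, the constraint at $t=r_j$ caps $\beta_jp_j$ at $c\,w_jp_j/x_j(r_j)$, which can be far below $w_jF_j$; no per-epoch reshuffling of a scalar can escape this.

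The paper resolves this with three devices you would need. First, it compares against the LP of an optimum slowed by a factor $s=32$ ($\primal_s$), which multiplies the capacity duals by $s$ in the dual \emph{constraints} but not in the objective, so those duals can be cheap in the objective yet strong enough for feasibility (Proposition~\ref{p:combine} pays the factor $s$ at the end). Second, the completion dual is not tied to $F_j$ at all: it is accumulated over time as $\alpha_{jt}=w_j$ exactly when $j$ is unsatisfied and its instantaneous fractional progress $q_{jt}/p_j$ is at most the \emph{weighted median} $\zeta_t$ of these quantities over unsatisfied jobs; summing over jobs at each fixed $t$ gives at least $\tfrac12 W_t$, hence $\sum_j\alpha_j\ge\tfrac12\sum_jw_jC_j$, and the indicator lets one replace $q_{jt}/p_j$ by $\zeta_t$ when verifying feasibility via the KKT identity $w_j/x^*_{jt}=B_{\cdot j}\cdot\yv^*_t$. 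Third --- and this is the step your instantaneous-price plan cannot reproduce --- the capacity duals are \emph{suffix sums} $\beta_{dt}=\tfrac1s\sum_{t'\ge t}\zeta_{t'}y^*_{dt'}$, and Lemma~\ref{lem:obj-beta} bounds $\sum_d\beta_{dt}\le\tfrac8sW_t$ by partitioning $[t,\infty)$ into dyadic bands where the unsatisfied weight halves and showing $\sum_{t'}\zeta_{t'}\le 4$ within each band via a weighted-throughput argument. Without the median thresholding, the suffix-sum structure, and the geometric-decay lemma, the proof does not go through.
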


We show this result by a simple algorithm that has been widely studied in the context of fairness in resource allocation, dating back to Nash~\cite{bargaining}. This is the Proportional Fairness (\pf) algorithm~\cite{bargaining,KMT,drf}. Let $\cA_t$ denote the set of jobs alive at time $t$.  At time $t$, the rates are set using the solution to the following convex program (See Section~\ref{sec:algorithm-pf} for more details).
$$ \vec{x^*}(t) = \mbox{argmax} \Big\{ \sum_{j \in \cA_t} w_j \log x_j \ \ | \ \  \vec{x} \in \poly \Big\}$$

To develop intuition, in the case of multi-dimensional scheduling with resource vector $\vec{f}_j$ for job $j$, the \pf algorithm implements a {\em competitive equilibrium} on the jobs. Resource $d$ has price $\lambda_d$ per unit quantity. Job $j$ has budget $w_j$, and sets its rate $x_j$ so that it spends its budget, meaning that $x_j = \frac{w_j}{\sum_d \lambda_d f_{jd}}$. The convex program optimum guarantees that there exists a set of prices $\{\lambda_d\}$ so that the market clears, meaning that all resources with non-zero price are completely allocated. 

In the same setting, when there is $K=1$ dimension, the \pf solution reduces to {\em Max-Min Fairness} -- the resource is allocated to all jobs at the same rate (so that the increase in $f_j x_j$ is the same), with jobs dropping out if $x_j = 1$. Such a solution makes the smallest allocation to any job as large as possible, and is fair in that sense.  Viewed this way, our result seems intuitive -- a competitive non-clairvoyant algorithm needs to behave similarly to round-robin (since it needs to hedge against unknown job sizes), and the max-min fair algorithm implements this idea in a continuous sense. Therefore, fairness seems to be a requirement for competitiveness. However this intuition can be misleading -- in a multi-dimensional setting, not all generalizations of max-min fairness are competitive -- in particular, the popular Dominant Resource Fair (DRF) allocation and its variants~\cite{drf} are $\omega(1)$ competitive. Therefore, though fairness is a requirement, not all fair algorithms are competitive. 

Multidimensional scheduling is not the only application where the ``right"  notion of fairness is not clear. As discussed before, it is not obvious how to generalize the most intuitively fair algorithm Round Robin (or Max-Min Fairness) to unrelated machine scheduling -- in~\cite{GuptaIKMP12}, a couple of natural extensions of Round Robin are considered, and are shown to be $\omega(1)$-competitive for total weighted completion time. In hindsight, fairness was also a key for development of online algorithms in broadcast scheduling~\cite{BansalKN10}. Hence, we find the very existence of a unified, competitive, and fair algorithm for  \psp quite surprising!

\paragraph{Flow Time Objective.} We next consider the weighted flow time objective for \psp. We note that even for classical single machine scheduling, any deterministic algorithm is $\omega(1)$-competitive \cite{BansalC09}. Further, in the unrelated machine setting, there is no online algorithm with a bounded competitive ratio~\cite{GargK07}. 
Hence to obtain positive results, we appeal to speed augmentation which is a popular relaxation of the worst case analysis framework for online scheduling~\cite{kirk}. Here, the online algorithm is given speed $s \ge 1$, and is compared to an optimal scheduler which is given a unit speed. 
More precisely, we compare our algorithm against an optimal omniscient solution which is constrained by the tighter constraint $H \vec{z} \le \frac{1}{s}$.

\begin{theorem}
	\label{thm:flow} [Appendix~\ref{sec:flowtime}.]
For \psp, the PF algorithm is $O(\log n)$-speed, $O(\log n)$-competitive for minimizing the total weighted flow time.  Furthermore, there exists an instance of \psp for which no deterministic non-clairvoyant algorithm is $O(n^{1- \eps})$-competitive for any constant $0 < \eps < 1$ with $o(\sqrt{ \log n})$-speed.
\end{theorem}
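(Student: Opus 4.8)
We sketch a plan for each half of the theorem.

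\medskip
\noindent\textbf{The positive part.} Give \pf speed $s=\Theta(\log n)$, so it schedules inside $\poly$ while \opt is confined to $\poly'=\{\vec x:\vec x\le Q\vec z,\ H\vec z\le\onev/s\}$ and hence $s\cdot\poly'\subseteq\poly$. Write $\vec x(t)$ for \pf's rate vector, $\vec x^{*}(t)$ for \opt's, $W(t)=\sum_{j\in\cA_t}w_j$ for the weight alive under \pf, and $W^{*}(t)$ for the weight alive under \opt; then the total weighted flow of \pf is $\int_0^{\infty}W(t)\,dt$, and likewise for \opt, so it suffices to run the standard amortized local-competitiveness argument: build a potential $\Phi(t)\ge 0$ with $\Phi(0)=\Phi(\infty)=0$, no upward jump at job arrivals (completions only help), and
\begin{equation*}
W(t)+\ddt\Phi(t)\ \le\ O(\log n)\cdot W^{*}(t)
\end{equation*}
at every point of continuity. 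The one structural property of \pf that drives this is Eisenberg--Gale (first-order) optimality: since $\vec x(t)$ maximizes $\sum_{j\in\cA_t}w_j\log x_j$ over $\poly$, concavity gives $\sum_{j\in\cA_t}w_j\,y_j/x_j(t)\le W(t)$ for all $\vec y\in\poly$, and plugging in $\vec y=s\,\vec x^{*}(t)\in\poly$ yields the inequality $(\star)$: $\sum_{j\in\cA_t}w_j\,x^{*}_j(t)/x_j(t)\le W(t)/s$. Equivalently \pf realizes a competitive equilibrium on the alive jobs -- prices $\lambda_j(t)=w_j/x_j(t)$ and constraint prices $\veta(t)\ge\zerov$ with $Q^{\top}\vlam\le H^{\top}\veta$ and $\onev^{\top}\veta(t)=W(t)$ -- which is exactly what lets \pf's spread-out allocation be compared to \opt's possibly concentrated one on the jobs \pf still has alive.

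\medskip
\noindent\textbf{The potential and the main obstacle.} Split $\cA_t$ into the jobs \opt has already finished and those still alive in \opt. On the second group, $(\star)$ already charges \pf's alive weight to \opt's up to an $O(1)$ factor. The work is the first (``lagging'') group: \pf may carry remaining work on jobs \opt cleared long ago, and since the packing matrices can bottleneck many such jobs on the same capacity constraints, \pf drains them slowly. I would take $\Phi(t)=\Theta(\log n)\sum_{j\in\cA_t}w_j\,\psi_j(t)$, with $\psi_j(t)$ a dimensionless, truncated measure of how far \pf is behind on $j$ relative to the rate it can give $j$ (a normalized ratio of $j$'s remaining \pf-work to $x_j(t)$, or a rank-of-$j$ term as in known elapsed-time potentials). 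In the standard skeleton, arrivals add $0$ to $\Phi$ and $w_j$ to both $W$ and $W^{*}$; \opt's processing raises $\Phi$ by at most $O(\log n)W^{*}(t)$ per unit time; and \pf's processing must lower $\Phi$ by at least $W(t)$ per unit time, which is where $(\star)$ and the price relation enter and where the speed $s=\Theta(\log n)$ turns the $1/s$ on the right of $(\star)$ into a constant. The crux -- and the reason the speed obtained is $O(\log n)$ rather than the $\Omega(\sqrt{\log n})$ of the matching lower bound -- is to design $\psi_j$ so that $\Phi$ provably drops fast enough whenever $W(t)\gg(\log n)W^{*}(t)$, robustly against many lagging jobs contending for the same constraints of $\poly$. (Bootstrapping instead from the $O(1)$-competitive completion-time result of Theorem~\ref{thm:completion} on frozen residual instances is tempting but still needs the speedup to bound how far \pf's residual instance lags \opt's, so the potential route seems cleaner.)

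\medskip
\noindent\textbf{The negative part.} Here I would use a recursive, multi-scale instance, in the style of classical non-clairvoyant flow-time lower bounds (those forcing Round-Robin-type schedulers to need speed augmentation), extended to exploit the freedom in $\poly$. The basic gadget is $k$ jobs released together under one shared unit-capacity constraint $\sum_j x_j\le 1$: a speed-$s$ scheduler processes total volume $\le s$ per unit time, so after an epoch it has, by averaging, spent little on a constant fraction of the $k$ jobs, and the adversary then makes those under-served jobs small (instantly done by \opt) and the rest large, costing the scheduler a factor $\approx k$ in alive-weight$\times$time against \opt on that gadget. Since one gadget cannot fool a Round-Robin-like algorithm, one composes such gadgets over $L$ geometrically separated time scales (releasing the next scale inside the stretch where the scheduler is stuck on the current one) so the scheduler cannot round-robin all scales at once with bounded speed: extra speed $s$ lets it fully absorb only $\Theta(s)$ of the $L$ scales, recovering a $2^{\Theta(sL)}=n^{\Theta(s/L)}$ factor of the total $\Theta(n)$ loss. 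Taking the per-scale size parameter to be $2^{\Theta(L)}$ makes $n=2^{\Theta(L^{2})}$, i.e.\ $L=\Theta(\sqrt{\log n})$, so the residual loss stays $\ge n^{1-\eps}$ precisely while $s=o(L)=o(\sqrt{\log n})$. The rigor lies in (a) encoding the nested shared-capacity gadgets, in disjoint time windows, inside one packing polytope over the finite job universe; (b) an averaging/pigeonhole step at each scale robust to the speed-$s$ advantage (sizes chosen so that processing volume $s$ per unit time still leaves $\Omega(k)$ jobs' weight unfinished); and, the main obstacle, (c) simultaneously tuning $k$, $L$, the inter-scale separation and $s$ so that below $\Omega(\sqrt{\log n})$ speed less than an $n^{\eps}$ fraction of the $\Theta(n)$ loss can ever be recovered.
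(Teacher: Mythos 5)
Your inequality $(\star)$ is correct (it is the first-order optimality condition of the Eisenberg--Gale program, and is essentially Lemma~\ref{lem:sumdual} in disguise), but the positive half of your plan stops exactly where the proof has to begin: the potential $\psi_j$ is never defined, and no argument is given that \pf's processing drives $\Phi$ down at rate $W(t)$ when $W(t)\gg(\log n)W^{*}(t)$. This is not a routine omission --- the whole difficulty of the theorem is that \pf can be starving many lagging jobs that share the same rows of $B$, and a ``remaining work over current rate'' potential does not obviously decrease under \pf at all (the rate $x_j(t)$ of a lagging job can itself shrink as new jobs arrive). The route you dismiss as ``tempting but not cleaner'' is in fact the one the paper takes: it keeps the dual-fitting machinery of the completion-time proof, preprocesses the jobs into $O(\log n)$ weight classes and $O(\log n)$ levels of laminar time intervals (losing the $\log n$ factors there), and on each laminar interval reruns the completion-time argument verbatim, setting $\beta_{dt}$ from the KKT duals $y^{*}_{dt}$ and a weighted-median threshold $\zeta_{t}$. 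Until you exhibit a concrete $\Phi$ and prove the drift inequality, the upper bound is not established.

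For the lower bound you have the right parameter count ($L$ levels, $n=2^{\Theta(L^{2})}$, hence $L=\Theta(\sqrt{\log n})$), but your composition is different from the paper's and its key step is asserted rather than proved. A single shared-capacity gadget with $k$ equal-looking jobs does not by itself cost a non-clairvoyant scheduler anything against speed augmentation (round-robin handles it), and your claim that time-separated nesting lets speed $s$ ``absorb only $\Theta(s)$ of the $L$ scales'' is exactly the statement that needs proof. The paper instead builds a depth-$D$ tree ($D=\Theta(\sqrt{\log n})$) in which one hidden ``big'' child per node must be routed through the unique $2$-speed router, speeds multiply along root-to-leaf paths, and job sizes grow geometrically with the number of big ancestors; the crucial invariant is that by time $1/2$ every job has been processed by strictly less than one unit, so all subtrees of the root remain indistinguishable and the adversary can recurse, giving a makespan gap of $\Omega(\sqrt{\log n})$ (the flow-time statement then follows by releasing $N^{1/\eps}$ copies of this instance over time under a mutual-exclusion constraint). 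You would need either to reproduce an indistinguishability invariant of this kind for your time-separated scales, or to adopt the tree construction; as written, step (b) of your plan (the averaging argument robust to speed $s$ across nested scales) is the entire theorem and is missing.
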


We note that this is the first non-trivial flow time result for all the applications mentioned in Section~\ref{sec:app}.  

\paragraph{Unrelated Machine Scheduling.} We finally consider (the special case of) non-clairvoyant scheduling on heterogeneous machines. Recall that here, each job $j$ is processed at a rate of $s_{ij}$ on machine $i$. 
In this case, the above results show $O(1)$-competitive algorithms for total completion time, and $O(\log n)$-speed, $O(\log n)$-competitive algorithm for total flow time. We improve these results to show the {\em first} scalable ($(1+\eps)$-speed $O(1)$-competitive) algorithm for the total (unweighted) flow time objective. 

\begin{theorem} 
	\label{thm:flow-no-scalable} [Section~\ref{sec:unrelated}.]
	For any $\eps >0$, there is a $(1+\eps)$-speed $O(1/ \eps^2)$-competitive non-clairvoyant for the problem of minimizing the total (unweighted) flow time on unrelated machines.
\end{theorem}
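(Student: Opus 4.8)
The plan is to build the algorithm in two layers and analyze it by an amortized potential / dual-fitting argument, using the $(1+\eps)$ speed-up to pay for the slack. Layer one is a \emph{new single-machine non-clairvoyant policy} (call it \slaps, a scaled \laps/\setf-type rule); layer two is a rule that, at every instant, maintains a \emph{fractional} assignment $\vec z(t)=(z_{ij}(t))$ of jobs to machines and runs \slaps\ on each machine among the jobs currently routed to it. Since the excerpt already gives us that any feasible $\vec z$ decomposes into a convex combination of injective job$\to$machine matchings preserving rates, it suffices to work with the fractional schedule and migrate freely. I would then prove a per-machine local competitiveness bound and combine it across machines.

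\textbf{Single-machine policy.} Consider one machine $i$ carrying a set $A_i(t)$ of jobs, where job $j$ is processed at rate $s_{ij}$ when it gets the whole machine. Non-clairvoyantly, I want a $(1+O(\eps))$-speed $O(1/\eps)$-competitive policy for total flow here. Use \slaps: order $A_i(t)$ by arrival time, and split machine $i$'s (sped-up) capacity among the latest-arriving $\lceil \eps |A_i(t)|\rceil$ jobs, with the split chosen so that what is equalized across those jobs is \emph{normalized} progress $s_{ij}^{-1}\!\int x_j\,\dt$ rather than raw processed volume (this is the "scaled'' part). The deliverable is a pointwise inequality: at all $t$,
\[
|A_i(t)| \;\le\; O(1/\eps)\Big({-}\,\tfrac{\dd}{\dt}\Phi_i(t)\;+\;\textstyle\sum_{j\in A_i(t)}\beta_j\Big),
\]
where $\Phi_i(t)=\frac{c}{\eps}\sum_{j\in A_i(t)}\mathrm{rank}_i(j,t)\cdot \ell_j(t)$ is the standard \laps\ potential, $\mathrm{rank}_i(j,t)$ counts jobs on $i$ that arrived no earlier than $j$, $\ell_j(t)$ is the nonnegative lag of $j$ (algorithm's normalized remaining work minus \opt's), and $\beta_j$ is a dual-type variable proportional to the flow \slaps\ charges to $j$. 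The $(1+\eps)$ speed is exactly what makes the lag term shrink fast enough to dominate $\frac1\eps|A_i(t)|$.

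\textbf{The assignment and the combination.} For the routing I would maintain $\vec z(t)$ by a continuous water-filling / mirror-descent update that balances the \emph{marginal potentials} $\partial \Phi_i/\partial z_{ij}$ across machines — intuitively, push job $j$'s fractional mass toward machines where it is fast ($s_{ij}$ large) and uncongested ($|A_i(t)|$ small), never requiring knowledge of $p_j$. With $(1+\eps)$ speed, one shows this $\vec z(t)$ "dominates'' a fractional version of any offline assignment up to an $O(1/\eps)$ factor, so the number of alive jobs $\sum_i|A_i(t)|$ is bounded, via the per-machine inequalities summed over $i$, by $O(1/\eps)\big({-}\frac{\dd}{\dt}\sum_i\Phi_i + \sum_j\beta_j\big)$. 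Integrating and using $\sum_j\beta_j=O(1/\eps)\cdot\text{(LP lower bound)}\le O(1/\eps)\,\opt$ (this is where the second $1/\eps$ enters, from the assignment), and that $\sum_i\Phi_i(0)=\sum_i\Phi_i(\infty)=0$, yields $\text{ALG}=\sum_j F_j=O(1/\eps^2)\,\opt$. Equivalently one can phrase the whole thing as dual fitting against the time-indexed LP relaxation of unrelated-machine flow time: set $\beta_j$ from \slaps's flow on $j$ and $\lambda_{it}\propto|A_i(t)|$, let the greedy assignment keep the primal "$j$ scheduled somewhere'' constraint tight, and verify dual feasibility using the per-machine local inequality and the speed slack.

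\textbf{Main obstacle.} The crux is that non-clairvoyance couples the layers: \slaps's guarantee on machine $i$ assumed a fixed job set, but the assignment changes over time and a job may secretly have a huge $s_{i'j}$ on a machine it is not on, which we cannot detect until completion. Two things need care. First, the potential must be defined with a \emph{global} rank (arrival order across all machines), and one must show that shifting a job's fractional mass between machines — or a job completing on one machine — only \emph{decreases} $\sum_i\Phi_i$, so there are no upward jumps to charge for; this is the heart of why the assignment rule must be driven by exactly $\partial\Phi_i/\partial z_{ij}$. Second, one must argue robustness of the size-oblivious routing: because the rule reacts to $s_{ij}$ and to current congestion (never to $p_j$), a job is never simultaneously slow and congested on its assigned machines, so its lag contribution stays controlled even though the "right'' machine (in hindsight, knowing $p_j$) may differ. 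Getting a clean statement of this robustness, and matching it to \opt's fractional assignment with only an $O(1/\eps)$ loss, is the technically demanding step; if the continuous re-assignment proves too unwieldy to analyze, the fallback is arrival-time immediate dispatch to the machine minimizing the marginal $\sum_i\Phi_i$, at the cost of a more delicate argument that early commitments are not too harmful.
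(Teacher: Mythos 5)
There is a genuine gap, and it sits exactly where the paper's technical work lives. Your single-machine policy is \laps\ (all capacity shared among the latest-arriving $\eps$-fraction of jobs), but the paper deliberately does \emph{not} use \laps: it uses a smoothed rule \slaps$(k)$ with $k=1/\eps$ in which \emph{every} alive job $j$ on machine $i$ receives a positive share $\pi_j(t)^k/(1^k+\cdots+N_i(t)^k)$ of the rate, where $\pi_j(t)$ is $j$'s arrival rank on that machine. The reason is the one you half-identify as your "main obstacle'': with a hard cutoff, the pairwise delays $d_{jj'}$ between jobs become opaque and cannot be bounded once jobs migrate, whereas the smoothed rule preserves (up to a factor $\tfrac{k+2}{k+1}$) the Round-Robin property that two jobs delay each other by comparable amounts. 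The entire analysis rests on the exact decomposition $\sum_j \Delta_j=\sum_j F_j$, where $\Delta_j$ is the total mutual delay between $j$ and earlier-arriving jobs, and on setting $\alpha_j\propto\Delta_j$, $\beta_{it}\propto N_i(t)$ in the dual of the time-indexed LP. Your lag-based potential $\Phi_i$ with \opt's remaining work in it is a different (and here unsubstantiated) route; the paper never compares remaining work against \opt\ pointwise.

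The second gap is the assignment layer. You propose a continuous fractional rebalancing driven by $\partial\Phi_i/\partial z_{ij}$ and assert that it "dominates'' any offline fractional assignment and that migrations only decrease $\sum_i\Phi_i$; neither claim is proved, and this is precisely the hard part. The paper keeps the assignment \emph{integral}: on arrival, job $j$ is dispatched to the machine maximizing $L(i,j,t)=s_{ij}/(N_{<j}(i,t)+1)$, the hypothetical Round-Robin rate among jobs of smaller global (arrival) rank; on each completion, a {\sc Rearrange} pass sweeps the alive jobs in increasing global-rank order and moves a job to the freed machine only if that strictly improves its $L$. The payoff is three monotonicity lemmas --- $N_{<j}(\sigma(j,t),t)$ is non-increasing, every job always sits on a machine maximizing $L$, and $L(\sigma(j,t),j,t)$ is non-decreasing over the job's lifetime --- and these are exactly what make the dual constraint verifiable on machines $i\neq\sigma(j,t)$ (via $L(\sigma(j,t),j,t)\geq L(i,j,t)$). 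Your proposal has no substitute for these invariants, and without them the bound $\Delta_j(t^*,C_j)\leq O(1)\,p_j(t^*)/L(\sigma(j,t^*),j,t^*)$, which drives dual feasibility, is not available. In short: right architecture, but the two load-bearing ideas --- the rank-power smoothing that makes delays symmetric, and the global-rank-ordered rearrangement that makes $L$ monotone --- are missing.
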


We show this result by developing a new algorithm that we term BLASS ({\em Balanced Latest Arrival Smooth Scheduling}) that assigns jobs to machines based on delays caused to other jobs. On each machine, jobs are scheduled smoothly favoring recent jobs over older jobs. When jobs depart, it rearranges jobs using arrival order as a priority rule, in order to balance the objective across machines. 

\subsection{Our Techniques}
All our analysis is based on dual fitting. Dual fitting is popular for design and analysis of approximation and online algorithms, but two elegant works in \cite{AnandGK12, GuptaKP12} initiated dual fitting approach for online scheduling. Since our focus is on a linear objective, the total (weighted) flow time, and \cite{GuptaKP12} is concerned with non-linear objectives, we mainly discuss the work in \cite{AnandGK12}, and compare it to our work. The work in \cite{AnandGK12} considered (clairvoyant) unrelated machine scheduling for the total weighted flow time. Their
approach formulates the natural LP relaxation for weighted flow time, and sets feasible dual variables of this program so that the dual objective is within a constant of the primal objective. Their algorithm couples natural single-machine scheduling policies with a greedy rule that assigns each arriving job to a machine that increases the objective the least assuming that no more jobs arrive. The algorithm is immediate-dispatch and non-migratory -- it immediately assigns an arriving job to a machine, and the job never migrates to other machines. However, such nice properties  require that the algorithm should be clairvoyant. In fact, there is a simple example that shows that any non-clairvoyant and immediate dispatch algorithm has an arbitrarily large competitive ratio if migration is not allowed~\cite{GuptaIKMP12}. Ironically, migration, which seems to give more flexibility to the algorithm, makes the analysis significantly more challenging. Hence it is no surprise that essentially all online algorithms for heterogeneous machine scheduling have been non-migratory -- the only exception being~\cite{GuptaIKMP12}, which gives a scalable algorithm for related machines for the flow time objective. For the same reasons, there has been very little progress in non-clairvoyant heterogeneous machine scheduling which is in sharp contrast to the recent significant progress in the clairvoyant counterpart \cite{ChadhaGKM09, AnandGK12}.

Since \psp captures non-clairvoyant scheduling on unrelated machines, the algorithms need to be migratory. Since migration disallows reduction to single machine scheduling, this precludes the types of dual variable settings considered in~\cite{AnandGK12}. To develop intuition, in dual fitting, we are required to distribute the total weight of unsatisfied (resp. alive) jobs to the dual variables corresponding to constraints in $\poly$. We therefore connect the dual values found by the KKT condition to the dual variables of the completion (resp. flow) time LP for \psp. This is a challenging task since the duals set by KKT are obtained by instantaneous (resource allocation) view of \pf while the duals in the LP should be globally set considering each job's completion time. For the completion time objective we manage to obtain $O(1)$-competitiveness by reconciling these two views using the fact that the contribution of the unsatisfied jobs to the objective only decreases over time. 
For flow time, such a nice structure is elusive, and in fact, we show that any online deterministic non-clairvoyant algorithm has a large competitive ratio when given speed less than $o(\sqrt{\log n})$. The lower bound is constructed by carefully hiding ``big" jobs in multiple layers.  However, we show that $\pf$ is $O(\log n)$-competitive when given $O(\log n)$-speed. At a high-level this is achieved by 
decomposing $\pf$'s schedule to a sequence of completion time analyses, and combining these dual variables at the end. 

For the scalable algorithm for unrelated machine scheduling (Theorem~\ref{thm:flow-no-scalable}), we note that the previous work~\cite{GuptaIKMP12} could handle only the case where machines have different speeds independent of jobs. Our algorithm BLASS is entirely different from that in~\cite{GuptaIKMP12}. We introduce a new technique for {\em rearranging} jobs when a job departs. This rearrangement procedure considers jobs in increasing order of arrival time, and performs local optimization of a job's processing assuming a hypothetical Round Robin scheduling policy only on jobs with earlier arrival times.  Our main technical contribution is to show that the rearrangement procedure maintains a certain global optimality property about which machine a job is scheduled on. The actual scheduling policy on each machine is a new single machine scheduling policy called \slaps and is different from the proxy we perform the local optimization on. \slaps is an extension of Round Robin, which smoothly favors recent jobs over older jobs. We need this smoothness to achieve scalability in Round Robin type algorithms. The interplay of these two ideas is critical in showing the competitive ratio. We note that our smooth scheduling idea builds on the Latest Arrival Processor Sharing (\laps) algorithm~\cite{EdmondsP12}; however, we do not know how to apply their algorithm to unrelated machine scheduling since we cannot bound the delays jobs introduce to other jobs. We hope that our smooth variant, which makes the inter-job delays more transparent, finds more applications in  multiple machine scheduling.

\subsection{Related Work} 
	\label{sec:related}
We only summarize related work that have not been discussed before. We note that \psp is NP-hard even when all jobs arriving are known a priori -- this follows from  the well-known NP-hardness of the problem of minimizing the total weighted completion time on a single machine. In the offline setting, it is easy to obtain a $O(1)$-approximation for $\psp$ in the metric $\sum_j w_j C_j$. It can be achieved by LP rounding, for example, see~\cite{ImMP11}; similar ideas can be found in other literature \cite{SchulzS97, QueyranneS02}. Tight upper bounds have been developed for individual scheduling problems in completion time metric; see~\cite{WS} for a nice overview. In the online setting,~\cite{ChadhaGKM09, AnandGK12} give a scalable (clairvoyant) algorithm for the weighted flow time objective on unrelated machines. Linear (or convex) programs and dual fitting approaches have been popular for online scheduling; 
for an overview of online scheduling see~\cite{PruhsST}. Though~\cite{AzarBFP13} study a general online packing and covering framework, it does not capture temporal aspects of scheduling and is very different from our framework. Our work is also different from \cite{AzarBFP13} from the technical point of view. Our algorithm uses a natural algorithm \pf and dual fitting using KKT conditions while \cite{AzarBFP13} uses the multiplicative weights update method.

\section{The Proportional Fairness (\pf) Algorithm and Dual Prices}
	\label{sec:algorithm-pf}
We first set up useful notation that will be used throughout this paper. We will refer to our algorithm \textsf{Proportional Fairness} (\pf) simply as $\cA$. We let $\cA_t:= \{j \; | \; r_j \leq t < C^{\cA}_j\}$ denote the set of outstanding/alive jobs at time $t$ in the algorithm's schedule. Similarly, let $U_t:= \{j \; | \; t < C^{\cA}_j\}$ denote the set of unsatisfied jobs. Note that $\cA_t \subseteq U_t$, and $U_t$ can only decrease as time $t$ elapses. We let $U_0$ denote the entire set of jobs that actually arrive.  We denote the inner product of two vectors $\vec{u}$ and $\vec{v}$ by $\vec{u} \cdot \vec{v}$. For a matrix $B$, $B_{i\cdot}$ denotes the $i^{th}$ row (vector) of matrix $B$. Likewise, $B_{\cdot i}$ denotes the $i^{th}$ column vector of matrix $B$. The indicator variable $\bone()$ becomes 1 iff the condition in the parentheses is satisfied, otherwise 0. 

As mentioned before, $C^\cA_j$ denotes job $j$'s completion time in $\cA$'s schedule. Let $F^\cA_j:= C^\cA_j - r_j$ denote job $j$'s flow time; recall that $r_j$ denotes job $j$'s release time.  For notational simplicity, we assume that times are {\em slotted}, and each time slot is sufficiently small compared to job sizes. By scaling, we can assume that each time slot has size 1, and we assume  that jobs arrive and complete only at integer times. These simplifying assumptions are w.l.o.g. and will make notation simpler.

To present our algorithm and analysis more transparently, we take a simpler yet equivalent view of the \psp by projecting the polytope $\poly$ into $\vec{x}$:
\begin{equation} \label{eq:simpler-main}
 \poly  = \left \{ B \vec{x} \le \vec{1};  \qquad \vec{x} \ge 0 \right\}, 
\end{equation}
\noindent where $B$ has no negative entries. The equivalence of these two expressions can be easily seen by observing that the definition in (\ref{eq:main}) is equivalent to that of general packing polytopes. We assume that $B$ has $D$ rows. 

Recall that $\cA_t:= \{j \; | \; r_j \leq t < C^{\cA}_j\}$ denotes the set of outstanding/alive jobs at time $t$ in our algorithm's schedule. At each time $t$ (more precisely, either when a new job arrives or a job is completed), the algorithm \textsf{Proportional Fairness} (\pf) solves the following convex program. 
\begin{align} 
	\max \sum_{j \in \cA_t} w_j \log x_j				  \tag{$\cppf$} \label{cppf}
\end{align}
\vspace{-5mm}
\begin{align*}   
&&	s.t. \quad	B \vec{x} &\le \vec{1}  && \\
&&			x_j  &= 0   && \forall j \notin \cA_t
\end{align*}
Then (\pf) processes each job $j$ at a rate of $x^*_{jt}$ where $x^*_{jt}$ is the optimal solution of the convex program at the current time $t$. Here the time $t$ is added to subscript since the scheduling decision changes over time as the set of outstanding jobs, $\cA_t$ does. For compact notation, we use a vector changing over time by adding $t$ to subscript -- for example,  $\vec{x}_t^*$ denotes the vector $\{x^*_{jt}\}_j$.  Observe that the constraint $\vx \geq \zerov$ is redundant since $x^*_j > 0$ for all $j \in \cA_t$. 

The dual of $\cppf$ has  variables  $y_d, d \in [D]$ corresponding to the primal constraints $B_{d\cdot} \cdot \vec{x} \leq 1$. Let $\vec{y}_t := ( y_{1t}, y_{2t}, ..., y_{Dt})$.   By the KKT conditions \cite{Boyd}, any optimal solution $\xv^*$ for $\cppf$ must satisfy the following conditions for some $\yv^*$:
\begin{align} 
&&	 	 y^*_{dt} \cdot (B_{d\cdot} \cdot \xv^*_t -1)  &= 0 &&\quad \forall t, d \in [D] \label{eqn:kkt-1}\\
&&		  \frac{w_j}{x^*_{jt}}  &=  B_{\cdot j} \cdot \yv^*_t  &&\quad \forall t, j \in \cA_t  \label{eqn:kkt-2} \\
&&		  \yv^*_t  &\geq  0  &&\quad \forall t  \label{eqn:kkt-3}
\end{align}

We emphasize that the new  definition of (\ref{eq:simpler-main}) of $\poly$ is only for ease of analysis; in reality, we will solve $\cppf$ over the original polytope given in (\ref{eq:main}) -- this is entirely equivalent to the above discussion.

\section{Analysis of Weighted Completion Time: Theorem~\ref{thm:completion}}
	\label{sec:completion}

The analysis will be based on linear programming and dual fitting. Consider the following LP formulation, which is now standard for the weighted completion time objective~\cite{HallSSW97}.

\begin{align} 
	\min \sum_{t, j} w_j \cdot   \frac{t}{p_j} \cdot x_{jt}  \tag{$\primal$} \label{primal} \vspace{-3mm}
\end{align}
\vspace{-5mm}
\begin{align} 
	 &&  s.t. \quad \sum_{t \geq r_j}\frac{ x_{jt}}{p_j} &\geq 1&& \forall j \in U_0 \nonumber\\
	&&   B \cdot \vec{x}_t &\leq \vec{1} &&\forall t \geq 0\nonumber \\
	&& x_{jt} &\geq 0 &&\forall j,t \geq 0  \nonumber 
\end{align}

The variable $x_{jt}$ denotes the rate at which job $j$ is processed at time $t$. The first constraint ensures that each job must be completed. The second is the polytope constraint. It is easy to see that the objective lower bounds the actual total weighted flow time of any feasible schedule.

For a technical reason which will be clear soon, we will compare our algorithm to the optimal schedule with speed $1/s$, where $s$ will be set to $32$ later -- this is only for the sake of analysis, and the final result, as stated in Theorem~\ref{thm:completion}, will not need speed augmentation. The optimal solution with speed $1/s$ must satisfy the following LP. 
\begin{align} 
	\min \sum_{t, j} w_j \cdot   \frac{t}{p_j} \cdot x_{jt}   \tag{$\primal_s$} \label{primals}
\end{align}
\vspace{-5mm}
\begin{align} 
	 &&  s.t. \quad \sum_{t \geq r_j}\frac{ x_{jt}}{p_j} &\geq 1&& \forall j \in U_0 \nonumber\\
	&&   B \cdot (s\vec{x}_t) &\leq \vec{1} &&\forall t \geq 0\nonumber \\
	&&\quad   x_{jt} &\geq 0 && \forall j,t \geq 0  \nonumber
\end{align}

Note that the only change made in $\primal_s$ is that  $\xv$ is replaced with $s \xv$ in the second constraint.   We take the dual of this LP; here $\vec{\beta}_t := (\beta_{1t}, \beta_{2t}, ..., \beta_{Dt})$.

\begin{align}
	\max \sum_{j} \alpha_j -  \sum_{d,t}  \beta_{dt}  \tag{$\mathsf{DUAL_s}$} \label{dual} \vspace{-3mm}
\end{align}
\vspace{-5mm}
\begin{align}
	  &&s.t. \quad \frac{\alpha_j}{p_j} -  s B_{\cdot j} \cdot \betav_t &\leq w_j \cdot \frac{t}{p_j} &&\quad \forall j, t	\geq r_j \label{eqn:dual-1}\\
	&&	\quad				\alpha_j &\geq 0 			&&\quad \forall j  \label{eqn:dual-3}\\
	&&	\quad				\beta_{dt} &\geq 0 			&&\quad \forall d,t  \label{eqn:dual-4}
\end{align}

\medskip
We will set the dual variables $\alpha_j$ and $\beta_{dt}$ using the optimal solution of $\cppf$, $x^*_{jt}$, and the corresponding dual variables $y^*_{dt}$. The following proposition shows the outcome we will derive by dual fitting. 

\begin{proposition}
	\label{p:combine}
	Suppose there exist $\{\alpha_j\}_j$ and $\{\beta_{dt}\}_{d,  t}$ that satisfy all constraints in $\dual_s$ such that the objective of $\dual_s$ is at least $c$ times the total weighted completion time of algorithm $\cA$. Then $\cA$ is $(s/c)$-competitive for minimizing the total weighted completion time. 
\end{proposition}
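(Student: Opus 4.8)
The plan is to prove this by a routine weak-duality accounting argument; no combinatorial insight is needed, only a careful comparison of three quantities: the dual objective, the optimal value $\mathrm{opt}(\primal_s)$ of the LP $\primal_s$, and the cost $\opt$ of a true unit-speed optimal schedule. First I would invoke weak LP duality: since $\dual_s$ is the linear-programming dual of the minimization LP $\primal_s$, the objective value of \emph{any} feasible solution $(\{\alpha_j\},\{\beta_{dt}\})$ of $\dual_s$ is at most $\mathrm{opt}(\primal_s)$. Combined with the hypothesis that the given feasible dual solution has objective at least $c\sum_j w_j C^{\cA}_j$, this yields $c\sum_j w_j C^{\cA}_j \le \mathrm{opt}(\primal_s)$.

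The one step that needs a small argument is the inequality $\mathrm{opt}(\primal_s)\le s\cdot\opt$. To see it, I would take an optimal unit-speed offline schedule with rate functions $x^{\opt}_j(t)$ satisfying $B\,\vec x^{\opt}(t)\le\vec 1$ for all $t$, $x^{\opt}_j(t)=0$ outside $[r_j,C^{\opt}_j)$, and $\int x^{\opt}_j(t)\,dt=p_j$, and ``stretch time'' by the factor $s$: set $\tilde x_j(t):=\tfrac1s\,x^{\opt}_j(t/s)$. Since $B$ has nonnegative entries, $B\,\vec{\tilde x}(t)=\tfrac1s B\,\vec x^{\opt}(t/s)\le\tfrac1s\vec 1$, i.e.\ $B\,(s\,\vec{\tilde x}(t))\le\vec 1$, so $\tilde x$ obeys the packing constraint of $\primal_s$ at every time; the substitution $u=t/s$ gives $\int \tilde x_j(t)\,dt=\int x^{\opt}_j(u)\,du=p_j$, so every job is completed (and $\tilde x_j$ vanishes before $r_j$ because $s\ge 1$), hence $\tilde x$ is feasible for $\primal_s$. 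The same substitution bounds its objective: $\int \tfrac{t}{p_j}\tilde x_j(t)\,dt=s\int \tfrac{u}{p_j}x^{\opt}_j(u)\,du\le s\,C^{\opt}_j$, using that $x^{\opt}_j$ is supported on $u<C^{\opt}_j$ and integrates to $p_j$. Summing over $j$ with weights $w_j$ shows the objective of $\tilde x$ is at most $s\cdot\opt$, so $\mathrm{opt}(\primal_s)\le s\cdot\opt$. (The slotted-time assumption is w.l.o.g.\ as noted earlier; in that formulation one replaces each unit slot by $s$ unit slots run at rate $1/s$, and the same estimates hold up to lower-order terms that vanish as the slot length does.)

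Chaining the two inequalities gives $c\sum_j w_j C^{\cA}_j \le \mathrm{opt}(\primal_s)\le s\cdot\opt$, i.e.\ $\sum_j w_j C^{\cA}_j \le (s/c)\cdot\opt$, which is exactly the claimed $(s/c)$-competitiveness of $\cA$. The only place requiring any care — the ``main obstacle,'' such as it is — is the time-stretch construction: one must verify both that dilating the schedule while scaling rates down by $s$ keeps it inside the tightened polytope $B\vec x_t\le\vec 1/s$ slot by slot (immediate from nonnegativity of $B$) and that the fractional LP objective $\sum_{t,j}w_j(t/p_j)x_{jt}$ remains a valid lower bound on weighted completion time, i.e.\ $\int(u/p_j)x^{\opt}_j(u)\,du\le C^{\opt}_j$, which relies on the support and total-work properties of $x^{\opt}$. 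Everything else is weak duality.
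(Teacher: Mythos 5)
Your proof is correct and follows essentially the same route as the paper: weak duality for $\dual_s$, plus the observation that stretching a unit-speed schedule horizontally by a factor of $s$ (with rates scaled down by $1/s$) yields a feasible solution of $\primal_s$ whose objective is at most $s$ times the weighted completion time of that schedule. The paper merely factors the argument as $\mathrm{opt}(\primal_s)\le s\cdot\mathrm{opt}(\primal)\le s\cdot\opt$, while you stretch the optimal schedule directly; the content is identical.
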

\begin{proof}
	Observe that the optimal objective of $\primal_s$ is at most $s$ times that of $\primal$. This is because that any feasible solution $\vec{x}_t$ for $\primal$ is also feasible for $\primal_s$ when the $\vec{x}_t$ is stretched out horizontally by a factor of $s$ --  the new schedule $\vec{x}'_t$ is defined as $\vec{x}'_{(st)} = (1 / s)\vec{x}_{t}$ for all $t \geq 0$. The claim easily follows from the fact that $\primal$ is a valid LP relaxation of the problem,  weak duality, and the condition stated in the proposition.
\end{proof}

We will first show that the dual objective is a constant times the total weighted completion time of our algorithm, and then show that all dual constraints are satisfied.  Recall that $U_t:= \{ j \; | \; j < C^\cA_j\}$ denote the set of unsatisfied jobs at time $t$ -- it is important to note that $U_t$ also includes jobs that have not arrived by time $t$, hence could be different from the set $\cA_t := \{ j \; | \; r_j \leq j < C^\cA_j\}$ of alive jobs at time $t$. Let $W_t := \sum_{j \in U_t} w_j$ denote the total weight of unsatisfied jobs at time $t$.

\smallskip
We now show how to set dual variables using the optimal solution $\vec{x}^*_t$ of $\cppf$, and its dual variables $\vec{y}^*_t$. We will define $\alpha_{jt}$, and set $\alpha_j := \sum_{t} \alpha_{jt}$ for all $j$.  

 Let $q_{jt}$ denotes the size of  job $j$ processed at time $t$. Define $\zeta_{t}$ to be the `weighted' median of $\frac{q_{jt}}{p_j}$ amongst all jobs $j$ in $U_t$ -- that is, the median is taken assuming that each job $j$ in $U_t$ has $w_j$ copies. 
\begin{align*}
	\alpha_{jt} &:= 
		\begin{cases} 	 
			 w_j   	&\quad \forall j, t  \mbox{ s.t. } j \in U_t,  \frac{q_{jt}}{p_j} \leq \zeta_{t}  \\
			0     &\quad \mbox{otherwise}					 
		\end{cases} 	
\end{align*}

We continue to define $\beta_{dt}$ as $\beta_{dt} := \sum_{t' \geq t }  \frac{1}{s} \zeta_{t'} y^*_{dt'}$.  
We now show that this definition of $\alpha_{jt}$ and $\beta_{dt}$ makes $\dual_s$'s objective to be at least $O(1)$ times the objective of our algorithm. 

\begin{lemma}
	\label{lem:obj-alpha}
	$\sum_j \alpha_j \geq (1/2) \sum_{j} w_j C^\cA_j$.
\end{lemma}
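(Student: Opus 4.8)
The plan is to show that for every time $t$, the total $\alpha$-weight assigned at $t$, namely $\sum_j \alpha_{jt}$, is at least $\tfrac12 W_t$, and then sum over $t$. Recall $\alpha_{jt} = w_j$ precisely for those alive/unsatisfied jobs $j \in U_t$ with $q_{jt}/p_j \le \zeta_t$, where $\zeta_t$ is the weighted median of the quantities $q_{jt}/p_j$ over $j \in U_t$ (each job counted with multiplicity $w_j$). By the very definition of a weighted median, the total weight of jobs $j \in U_t$ with $q_{jt}/p_j \le \zeta_t$ is at least half the total weight $W_t = \sum_{j \in U_t} w_j$. Hence $\sum_j \alpha_{jt} \ge \tfrac12 W_t$ for each $t$.

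Next I would relate $\sum_t W_t$ to the algorithm's objective. This is the standard identity: a job $j$ contributes $w_j$ to $W_t$ for every integer time $t < C^\cA_j$ — wait, one must be careful that $U_t$ includes jobs not yet arrived, so $j$ is in $U_t$ for all $t$ with $t < C^\cA_j$, i.e.\ for $C^\cA_j$ time slots (counting from $t=0$, using the slotted-time convention). Therefore $\sum_{t \ge 0} W_t = \sum_{t\ge 0}\sum_{j \in U_t} w_j = \sum_j w_j \cdot C^\cA_j = \sum_j w_j C^\cA_j$, which is exactly the algorithm's total weighted completion time. Combining, $\sum_j \alpha_j = \sum_t \sum_j \alpha_{jt} \ge \tfrac12 \sum_t W_t = \tfrac12 \sum_j w_j C^\cA_j$, as claimed.

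I do not expect any real obstacle here: the lemma is essentially a repackaging of the weighted-median property together with the textbook rewriting of $\sum_j w_j C_j$ as an integral (sum) over time of the surviving weight. The one place to be slightly careful is the bookkeeping of which time slots a job counts for in $U_t$ versus $\cA_t$ — since the $\alpha_{jt}$ are only nonzero for $j \in U_t$ (not merely $j \in \cA_t$), and $U_t$ is the set relevant to the completion-time identity, everything lines up; using $\cA_t$ instead would undercount. A minor subtlety is that jobs with $q_{jt} = 0$ at a given slot (e.g.\ a job in $U_t$ that has arrived but is currently getting zero rate, or a job that has not yet arrived) have $q_{jt}/p_j = 0 \le \zeta_t$, so they are on the "$\le \zeta_t$" side of the median; this only helps, and the weighted-median inequality $\sum_{j \in U_t:\, q_{jt}/p_j \le \zeta_t} w_j \ge \tfrac12 W_t$ holds regardless. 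This completes the plan.
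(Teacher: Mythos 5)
Your proof is correct and is essentially the paper's own argument: the paper's proof is the one-line observation that at each time $t$ the jobs in $U_t$ contribute at least $\tfrac12 W_t$ to $\sum_j \alpha_{jt}$, with the summation over $t$ and the identity $\sum_t W_t = \sum_j w_j C^\cA_j$ left implicit. You have simply made those implicit steps (the weighted-median property and the slotted-time accounting for $U_t$) explicit, which is fine.
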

\begin{proof}
	At each time $t$, jobs in $U_t$  contribute to $\sum_{j} \alpha_{jt}$ by at least half of  the total weight of jobs in $U_t$. 
\end{proof}

\begin{lemma}
	\label{lem:sumdual}
	For any time $t$, $\sum_{d} y^*_{dt} = \sum_{j \in \cA_t} w_j \leq W_t$.
\end{lemma}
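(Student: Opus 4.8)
The plan is to derive the equality $\sum_d y^*_{dt} = \sum_{j \in \cA_t} w_j$ directly from the KKT conditions (\ref{eqn:kkt-1})--(\ref{eqn:kkt-3}) of $\cppf$, and then obtain the inequality $\sum_{j \in \cA_t} w_j \le W_t$ for free from $\cA_t \subseteq U_t$ and non-negativity of the weights.

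For the equality, I would start from the stationarity condition (\ref{eqn:kkt-2}): for every $j \in \cA_t$, multiply both sides by $x^*_{jt}$ (which is strictly positive, since $\log x_j \to -\infty$ at the boundary, exactly the redundancy of $\xv \ge \zerov$ already noted after $\cppf$) to get $w_j = x^*_{jt}\,(B_{\cdot j} \cdot \yv^*_t) = \sum_d B_{dj}\, x^*_{jt}\, y^*_{dt}$. Summing over $j \in \cA_t$ and swapping the order of summation — using that $x^*_{jt} = 0$ for $j \notin \cA_t$, so the inner sum over $j$ equals $B_{d\cdot} \cdot \xv^*_t$ — yields $\sum_{j \in \cA_t} w_j = \sum_d y^*_{dt}\,(B_{d\cdot} \cdot \xv^*_t)$.

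Next I would apply complementary slackness (\ref{eqn:kkt-1}): for each row $d$, either $y^*_{dt} = 0$ or $B_{d\cdot} \cdot \xv^*_t = 1$, so in both cases $y^*_{dt}\,(B_{d\cdot} \cdot \xv^*_t) = y^*_{dt}$. Substituting term by term gives $\sum_{j \in \cA_t} w_j = \sum_d y^*_{dt}$, the claimed identity. Finally, since $\cA_t = \{j : r_j \le t < C^\cA_j\} \subseteq \{j : t < C^\cA_j\} = U_t$ and every $w_j \ge 0$, we conclude $\sum_{j \in \cA_t} w_j \le \sum_{j \in U_t} w_j = W_t$.

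There is essentially no hard step here; the only point needing a word of care is that $x^*_{jt} > 0$ for $j \in \cA_t$, which licenses the manipulation of (\ref{eqn:kkt-2}), and that is precisely the observation already made in the text about the redundancy of the non-negativity constraint. I do not anticipate needing any case analysis or additional machinery beyond the three KKT conditions.
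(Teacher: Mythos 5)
Your proof is correct and follows essentially the same route as the paper: the paper's proof runs the identical chain of equalities in the reverse direction, starting from $\sum_d y^*_{dt}$, applying complementary slackness (\ref{eqn:kkt-1}) to insert the factor $B_{d\cdot}\cdot\xv^*_t$, swapping the order of summation, and then using stationarity (\ref{eqn:kkt-2}) to recover $\sum_{j\in\cA_t} w_j \le W_t$. Nothing further is needed.
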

\begin{proof}
	\begin{align*}
		\sum_{d} y^*_{dt}  =  \sum_d y^*_{dt} (B_{d\cdot} \cdot \xv^*_t) =  \sum_d y^*_{dt} \sum_{j \in \cA_t} B_{dj} \  x^*_{jt} = \sum_{j \in \cA_t}  x^*_{jt}  (B_{\cdot j} \cdot \vec{y}^*_{t})  = \sum_{j\in \cA_t} x^*_{jt}  \ \frac{w_j}{x^*_{jt}} 
		\leq W_t
	\end{align*}
	The first and last equalities are due to the KKT conditions (\ref{eqn:kkt-1}) and (\ref{eqn:kkt-2}), respectively. 
\end{proof}

\begin{lemma}
	\label{lem:obj-beta}
	At all times $t$, $\sum_{d}  \beta_{dt} \leq \frac{8}{s} W_t$.
\end{lemma}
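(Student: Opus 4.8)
The plan is to expand $\sum_d \beta_{dt}$ from its definition, invoke Lemma~\ref{lem:sumdual} to replace $\sum_d y^*_{dt'}$ by the upper bound $W_{t'}$, and thereby reduce the whole statement to the inequality $\sum_{t' \ge t}\zeta_{t'}W_{t'}\le 2W_t$ (which is in fact stronger than the $\tfrac{8}{s}W_t$ bound claimed). Concretely, since $\beta_{dt}=\tfrac1s\sum_{t'\ge t}\zeta_{t'}y^*_{dt'}$ with every term nonnegative, summing over $d$ and using Lemma~\ref{lem:sumdual} gives $\sum_d\beta_{dt}\le\tfrac1s\sum_{t'\ge t}\zeta_{t'}W_{t'}$, so everything comes down to controlling this last sum.

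The crux is to charge $\zeta_{t'}W_{t'}$ against the fractional progress the algorithm actually makes at time $t'$, using the weighted-median property. First I would fix $t'$ and set $S_{t'}:=\{j\in U_{t'}: q_{jt'}/p_j\ge\zeta_{t'}\}$. By the definition of the weighted median of $\{q_{jt'}/p_j\}_{j\in U_{t'}}$ (job $j$ counted $w_j$ times), the ``upper'' side carries at least half of the total weight, i.e.\ $\sum_{j\in S_{t'}}w_j\ge W_{t'}/2$. Hence $\zeta_{t'}W_{t'}\le 2\zeta_{t'}\sum_{j\in S_{t'}}w_j\le 2\sum_{j\in S_{t'}}w_j\,\frac{q_{jt'}}{p_j}\le 2\sum_{j\in U_{t'}}w_j\,\frac{q_{jt'}}{p_j}$, where the middle step is exactly the defining inequality of $S_{t'}$ and the last uses nonnegativity of $q_{jt'}$.

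It then remains to sum over $t'\ge t$ and swap the order of summation, obtaining $\sum_{t'\ge t}\zeta_{t'}W_{t'}\le 2\sum_j w_j\sum_{t'\ge t,\,j\in U_{t'}}q_{jt'}/p_j$. For a fixed job $j$ the inner sum vanishes unless $j\in U_t$ (otherwise $C^\cA_j\le t$ and $j$ is never unsatisfied at a time $\ge t$), and when $j\in U_t$ it is at most $1$, because $q_{jt'}=0$ before $r_j$ and $\sum_{r_j\le t'<C^\cA_j}q_{jt'}=p_j$ (job $j$ is completely processed by its completion). Summing over $j\in U_t$ gives $\sum_{t'\ge t}\zeta_{t'}W_{t'}\le 2W_t$, hence $\sum_d\beta_{dt}\le\tfrac{2}{s}W_t\le\tfrac{8}{s}W_t$, as required.

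I expect the only delicate point — more bookkeeping than obstacle — to be keeping $U_t$ and $\cA_t$ straight: $U_{t'}$ may contain jobs that have not yet arrived, but those contribute $q_{jt'}=0$ and only push $\zeta_{t'}$ down, so they are harmless; and when $t>r_j$ one is summing $j$'s processing over merely a suffix of its lifespan, which is still at most $p_j$. Once the weighted-median inequality is phrased correctly, the remainder is a direct calculation.
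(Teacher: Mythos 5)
Your proof is correct, and it reaches the stated bound by a genuinely cleaner route than the paper's. Both arguments share the same two ingredients: Lemma~\ref{lem:sumdual} to replace $\sum_d y^*_{dt'}$ by $W_{t'}$, and the weighted-median inequality $\zeta_{t'}W_{t'}\le 2\sum_{j\in U_{t'}}w_j\,q_{jt'}/p_j$. Where you diverge is in controlling the remaining sum over $t'$. The paper partitions $[t,\infty)$ into dyadic intervals $M_k$ on which $W_{t'}\in\bigl((\tfrac12)^kW_t,(\tfrac12)^{k-1}W_t\bigr]$, shows $\sum_{t'\in M_k}\zeta_{t'}\le 4$ by charging against the weighted throughput processed in $M_k$ (at most the weight alive at the start of $M_k$), and then sums the geometric series, losing a factor of $4$ along the way. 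You instead keep the factor $W_{t'}$ attached, cancel it immediately against the median inequality, and swap the order of summation so that each job $j\in U_t$ contributes $\sum_{t'\ge t,\,j\in U_{t'}}q_{jt'}/p_j\le 1$; jobs with $C^\cA_j\le t$ contribute nothing since $U_{t'}$ only shrinks. This yields $\sum_{t'\ge t}\zeta_{t'}W_{t'}\le 2W_t$ and hence $\sum_d\beta_{dt}\le\tfrac2sW_t$, strictly better than the paper's $\tfrac8sW_t$ (and would in fact permit a smaller choice of $s$ downstream, improving the final constant). Your handling of the delicate points is also right: not-yet-arrived jobs in $U_{t'}$ have $q_{jt'}=0$ and are harmless in both the median and the throughput accounting, and $\zeta_{t'}\ge0$ is what licenses multiplying the weight bound by $\zeta_{t'}$.
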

\begin{proof}
	Consider any fixed time $t$.  We partition the time interval $[t, \infty)$ into subintervals  $\{M_k\}_{k \geq 1}$ such that the total weight of unsatisfied jobs at all times during in $M_k$ lies in the range $\Big((\frac{1}{2})^{k} W_t, (\frac{1}{2})^{k-1} W_t \Big]$. Now consider any fixed $k \geq 1$. We upper bound the contribution of $M_k$ to $\sum_{d} \beta_{dt}$, that is $\frac{1}{s} \sum_{t'  \in M_k} \sum_d \zeta_{t'}  y^*_{dt'}$. Towards this end, we first upper bound $\sum_{t' \in M_k} \zeta_{t'} \leq 4$. The key idea is to focus on the total weighted throughput processed during $M_k$. Job $j$'s fractional weighted throughput at time $t'$ is defined as $w_j\frac{q_{jt'}}{p_j}$, which is job $j$'s weight times the fraction of job $j$ that is processed at time $t'$; recall that $q_{jt'}$ denotes the size of job $j$ processed at time $t'$.

\begin{align*}
	\sum_{t' \in M_k} \zeta_{t'}
	&\leq \sum_{t' \in M_k} 2 \sum_{j \in \cA_{t'}} \frac{w_j}{W_{t'}}   \cdot \oneindi \Big( \frac{q_{jt'}}{p_{j}} \geq \zeta_{t'} \Big)  \cdot  \frac{q_{jt'}}{p_j} \leq 2 \frac{1}{(1/2)^k W_t} \sum_{t' \in M_k}   \sum_{j \in U_{t'}}  w_{j}  \ \frac{q_{jt'}}{p_j} \\
	&\leq 2 \frac{1}{(1/2)^k W_t}   (1/2)^{k-1} W_t 
	=4
\end{align*}
	
	The first inequality follows from the definition of $\zeta_{t'}$: for jobs $j$ with total weight at least half the total weight of jobs in $U_{t'}$, $\frac{q_{jt'}}{p_j} \geq \zeta_{t'}$. The second inequality is due to the fact that $W_{t'} \geq (\frac{1}{2})^{k} W_t$ for all times $t' \in M_k$. The last inequality follows since the total weighted throughput that can be processed during $M_k$ is upper bounded by the weight of unsatisfied jobs at the beginning of $M_{k'}$, which is at most $(\frac{1}{2})^{k-1} W_t$. Therefore,
\begin{align*}	
	 \sum_d  \beta_{dt} 
	&= \frac{1}{s} \sum_{t' \geq t} \sum_d \zeta_{t'} y_{dt'}^* 
	= \frac{1}{s} \sum_{k \geq 1} \sum_{t' \in M_k} \zeta_{t'} \sum_{d}  y^*_{dt'} \\
	&\leq \frac{1}{s} \sum_{k \geq 1} \sum_{t' \in M_k} \zeta_{t'} W_{t'} &&\mbox{ [By Lemma~\ref{lem:sumdual}]}\\
	&= \frac{1}{s} \sum_{k \geq 1} 4 (1/2)^{k-1} W_{t} &&\mbox{ [By definition of $M_{k}$ and the fact $\sum_{t' \in M_k} \zeta_{t'} \leq 4$]}\\
	&\leq \frac{8}{s}  W_t		
\end{align*}	
\end{proof}

\begin{corollary}
	\label{coro:obj-beta}
	$\sum_{d,t}  \beta_{dt} \leq \frac{8}{s} \sum_j w_j C^\cA_j$. 
\end{corollary}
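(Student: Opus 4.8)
\textbf{Proof proposal for Corollary~\ref{coro:obj-beta}.}

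The plan is to simply sum the pointwise bound of Lemma~\ref{lem:obj-beta} over all time slots $t$ and then interchange the order of summation. Concretely, first I would write
\[
\sum_{d,t} \beta_{dt} \;=\; \sum_{t \ge 0} \Big( \sum_d \beta_{dt} \Big) \;\le\; \sum_{t \ge 0} \frac{8}{s} W_t \;=\; \frac{8}{s} \sum_{t \ge 0} W_t,
\]
using Lemma~\ref{lem:obj-beta} for the middle inequality (and nonnegativity of $\beta_{dt}$, which comes from its definition $\beta_{dt} := \sum_{t' \ge t} \frac1s \zeta_{t'} y^*_{dt'}$ together with $\zeta_{t'} \ge 0$ and $y^*_{dt'} \ge 0$ from~\eqref{eqn:kkt-3}).

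Next I would unfold $W_t = \sum_{j \in U_t} w_j$ and swap the sums:
\[
\sum_{t \ge 0} W_t \;=\; \sum_{t \ge 0} \sum_{j \in U_t} w_j \;=\; \sum_{j} w_j \cdot \big|\{\, t \ge 0 : j \in U_t \,\}\big| \;=\; \sum_j w_j \cdot \big|\{\, t \ge 0 : t < C^\cA_j \,\}\big| \;=\; \sum_j w_j C^\cA_j,
\]
where the third equality is just the definition $U_t = \{\, j : t < C^\cA_j \,\}$, and the last equality uses the slotted-time convention of Section~\ref{sec:algorithm-pf} (time slots have size $1$ and completion times are integers, so the number of integer slots $t$ with $t < C^\cA_j$ equals $C^\cA_j$; if one prefers, replacing $=$ by $\le$ here is harmless and still gives the claimed bound). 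Combining the two displays yields $\sum_{d,t} \beta_{dt} \le \frac{8}{s} \sum_j w_j C^\cA_j$.

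There is essentially no obstacle here: the corollary is an immediate aggregation of Lemma~\ref{lem:obj-beta}. The only points requiring a moment's care are (i) justifying nonnegativity of $\beta_{dt}$ so that the termwise sum behaves, and (ii) the bookkeeping identity $\sum_{t\ge 0}\mathbf{1}(t < C^\cA_j) = C^\cA_j$ under the paper's slotting normalization; both are routine. Together with Lemma~\ref{lem:obj-alpha}, this corollary is what feeds into Proposition~\ref{p:combine}: the dual objective $\sum_j \alpha_j - \sum_{d,t}\beta_{dt} \ge (1/2 - 8/s)\sum_j w_j C^\cA_j$, which is a positive constant fraction of the algorithm's cost once $s$ is taken large enough (e.g.\ $s = 32$).
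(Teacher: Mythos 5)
Your proposal is correct and is exactly the argument the paper intends: the corollary is stated without proof as an immediate consequence of Lemma~\ref{lem:obj-beta}, obtained by summing the pointwise bound $\sum_d \beta_{dt} \le \frac{8}{s} W_t$ over all slots $t$ and using $\sum_t W_t = \sum_j w_j C^\cA_j$ under the slotted-time convention. Your additional remarks on nonnegativity of $\beta_{dt}$ and the bookkeeping identity are fine but not needed beyond what you wrote.
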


From Lemma~\ref{lem:obj-alpha} and Corollary~\ref{coro:obj-beta}, we derive that the objective of $\dual_s$ is at least half of $\pf$' total weighted completion time for $s  = 32$. By Lemma~\ref{p:combine}, it follows that the algorithm $\pf$ is $64$-competitive for the objective of minimizing the total weighted completion time. 

It now remains to show all the dual constraints are satisfied.  Observe that the dual constraint (\ref{eqn:dual-3}) is trivially satisfied. Also the constraint (\ref{eqn:dual-4}) is satisfied due to KKT condition (\ref{eqn:kkt-3}). 	

We now focus on the more interesting dual constraint (\ref{eqn:dual-1}) to complete the analysis of Theorem~\ref{thm:completion}.

\begin{lemma}
	The dual constraint (\ref{eqn:dual-1}) is satisfied. 
\end{lemma}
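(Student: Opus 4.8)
The plan is to verify the dual constraint~(\ref{eqn:dual-1}), namely $\frac{\alpha_j}{p_j} - sB_{\cdot j}\cdot\betav_t \le w_j\cdot\frac{t}{p_j}$, for every job $j$ and every time $t \ge r_j$. First I would recall the definitions: $\alpha_j = \sum_{t'}\alpha_{jt'}$, where $\alpha_{jt'} = w_j$ exactly when $j \in U_{t'}$ and $\frac{q_{jt'}}{p_j}\le\zeta_{t'}$, and $\beta_{dt} = \sum_{t'\ge t}\frac{1}{s}\zeta_{t'}y^*_{dt'}$. So $sB_{\cdot j}\cdot\betav_t = \sum_{t'\ge t}\zeta_{t'}\,(B_{\cdot j}\cdot\yv^*_{t'})$. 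By KKT condition~(\ref{eqn:kkt-2}), for $t'$ with $j\in\cA_{t'}$ we have $B_{\cdot j}\cdot\yv^*_{t'} = \frac{w_j}{x^*_{jt'}} = \frac{w_j}{q_{jt'}}$ (using that the processed size equals the rate in a unit slot), while for $t'$ with $j\notin\cA_{t'}$ the relevant column term contributes nonnegatively; I should be slightly careful here and argue that dropping those terms only helps, since $\yv^*\ge 0$ and $B\ge 0$. Hence $sB_{\cdot j}\cdot\betav_t \ge \sum_{t'\ge t,\,j\in\cA_{t'}}\zeta_{t'}\frac{w_j}{q_{jt'}}$.

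Next I would split the constraint into two regimes according to whether $j$ has been completed by time $t$. \emph{Case 1: $t\ge C^\cA_j$.} Then $j\notin U_{t'}$ for all $t'\ge t$, so $\alpha_{jt'}=0$ for $t'\ge t$, and $\alpha_j = \sum_{t'<t}\alpha_{jt'} \le \sum_{t'<t} w_j\cdot\bone(j\in\cA_{t'}) = w_j\cdot(\text{number of slots in which }j\text{ ran})$. Since this count is at most $C^\cA_j - r_j \le t$ (using integer slots and $r_j\ge 0$), we get $\frac{\alpha_j}{p_j}\le w_j\frac{t}{p_j}$, and the $-sB_{\cdot j}\cdot\betav_t\le 0$ term makes the inequality only easier. \emph{Case 2: $r_j\le t < C^\cA_j$.} Here I split $\alpha_j = \sum_{t'<t}\alpha_{jt'} + \sum_{t'\ge t}\alpha_{jt'}$. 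The first piece is bounded by $w_j t$ exactly as in Case 1 (at most $t$ slots before time $t$, contributing $w_j$ each), giving $\frac{1}{p_j}\sum_{t'<t}\alpha_{jt'}\le w_j\frac{t}{p_j}$. It therefore suffices to show that $\frac{1}{p_j}\sum_{t'\ge t}\alpha_{jt'}\le sB_{\cdot j}\cdot\betav_t$, i.e. that the ``future'' $\alpha$-contribution is dominated by the $\beta$-term.

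For that final inequality I would compare term by term over $t'\ge t$. Whenever $\alpha_{jt'}=w_j$ we have (by definition of $\zeta_{t'}$) $\frac{q_{jt'}}{p_j}\le\zeta_{t'}$; also job $j$ must be running at that time for the $\alpha$-indicator to be the binding one — more precisely, if $j\in U_{t'}$ but $q_{jt'}=0$ then $\frac{q_{jt'}}{p_j}=0\le\zeta_{t'}$ so $\alpha_{jt'}=w_j$, yet $j$ contributes nothing to the $\beta$ lower bound, so I need to handle this subtlety. The clean way: for $t' < C^\cA_j$ and $j\in\cA_{t'}$, $x^*_{jt'}>0$ (as noted in the setup, every alive job gets positive rate under \pf), so $q_{jt'}>0$ and $\zeta_{t'}\ge\frac{q_{jt'}}{p_j}$ gives $\zeta_{t'}\frac{w_j}{q_{jt'}}\ge\frac{w_j}{p_j}=\frac{\alpha_{jt'}}{p_j}$; and for $t'\ge t$ with $j\notin\cA_{t'}$ but $j\in U_{t'}$, these are slots where $j$ has not yet arrived — but since $t\ge r_j$ and $U_t$ only shrinks, actually $r_j\le t\le t'$ forces $j\in\cA_{t'}$ whenever $j\in U_{t'}$, so this case is vacuous. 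Summing the pointwise bound $\frac{\alpha_{jt'}}{p_j}\le\zeta_{t'}\frac{w_j}{q_{jt'}} = \zeta_{t'}(B_{\cdot j}\cdot\yv^*_{t'})$ over $t'\ge t$ yields exactly $\frac{1}{p_j}\sum_{t'\ge t}\alpha_{jt'}\le sB_{\cdot j}\cdot\betav_t$, completing Case 2.

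The main obstacle I anticipate is precisely this bookkeeping around slots where a job is alive but momentarily processed at (near-)zero rate, or alive-but-not-yet-arrived; the crucial facts that rescue the argument are (i) $t\ge r_j$ makes ``unsatisfied'' coincide with ``alive'' for all $t'\ge t$, and (ii) under \pf every alive job has strictly positive rate, so $q_{jt'}>0$ and the KKT identity $\zeta_{t'}(B_{\cdot j}\cdot\yv^*_{t'}) = \zeta_{t'}w_j/q_{jt'}$ is well-defined and at least $w_j/p_j$ whenever $\zeta_{t'}\ge q_{jt'}/p_j$. Everything else is the elementary ``at most $t$ slots before time $t$'' counting for the past contribution.
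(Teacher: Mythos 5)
Your proposal is correct and follows essentially the same route as the paper's proof: drop the pre-$t$ contribution of $\alpha_j$ against the $w_j t/p_j$ term (at most $t$ unit slots, each contributing at most $w_j$), then bound $\sum_{t'\ge t}\alpha_{jt'}/p_j$ termwise by $\zeta_{t'}(B_{\cdot j}\cdot\yv^*_{t'})$ using $\frac{q_{jt'}}{p_j}\le\zeta_{t'}$ together with the KKT identity $\frac{w_j}{x^*_{jt'}}=B_{\cdot j}\cdot\yv^*_{t'}$, and sum to recover $sB_{\cdot j}\cdot\betav_t$. Your extra case analysis and the care about alive-versus-unsatisfied slots and positive rates are sound (just note that $\alpha_{jt'}$ is supported on $j\in U_{t'}$, not $j\in\cA_{t'}$, which is why the crude ``at most $t$ slots before $t$'' count is the right bound for the past contribution); the paper compresses all of this into a single chain of inequalities.
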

\begin{proof}
\begin{align*}
	\frac{\alpha_j}{p_j} -  w_j \frac{t}{p_j} 
	&\leq \sum_{t' \geq t} \frac{\alpha_{jt'}}{p_j}   && \mbox{[Since $\alpha_{jt'} \leq w_j$ for all $t'$]}\\	
	&= \sum_{t' \geq t} \frac{ w_j}{p_j}\cdot \oneindi \Big( \frac{q_{jt'}}{p_j} \leq \zeta_{t'}\Big) = \sum_{t' \geq t} \frac{w_j}{q_{jt'}} \cdot \frac{q_{jt'}}{p_j}  \cdot \oneindi \Big( \frac{q_{jt'}}{p_j} \leq  \zeta_{t'}\Big) &&\\
	&= \sum_{t' \geq t} \frac{w_j}{x^*_{jt'}} \cdot \frac{q_{jt'}}{p_j}  \cdot \oneindi \Big( \frac{q_{jt'}}{p_j} \leq \zeta_{t'}\Big)     && \mbox{[Since $q_{jt'}  = x^*_{jt'}$]}\\	
	&\leq \sum_{t' \geq t} B_{\cdot j} \cdot (\zeta_{t'} \yv^*_{t'})     && \mbox{[By the KKT condition (\ref{eqn:kkt-2})]}\\	
	&= s B_{\cdot j} \cdot \betav_t    && \mbox{[By definition of $\betav_t$]}
\end{align*}
\end{proof}

\section{Unrelated Machine Scheduling and Theorem~\ref{thm:flow-no-scalable}}
\label{sec:unrelated}
We now consider the canonical special case of \psp termed {\em unrelated machine scheduling}. In this problem, there are $M$ unrelated machines. Job $j$ is processed at rate $s_{ij}  \in [0, \infty)$ on each machine $i$. (Unrelated machines generalizes related machines where machines have different speeds independent of jobs). The online algorithm is allowed to preempt and migrate jobs at any time with no penalty. The important constraint is that at any instantaneous time, 
a job can be processed only on a single machine. The objective is to minimize the sum of unweighted flow times $\sum_j F_j$.

In the non-clairvoyant scheduling model we consider, the scheduler knows the values of $s_{ij}$ but is not aware of the processing length $p_j$. It is easy to show that if $s_{ij}$ values are not known as well, then no online algorithm can have a bounded competitive ratio even with $\Omega(1)$-speed augmentation. Our main result is a {\em scalable} algorithm that,  for any $\epsilon > 0$, is $O(1/\epsilon^2)$ competitive with speed augmentation of $(1+\epsilon)$.

\subsection{The Balanced Latest Arrival Smooth Scheduling (BLASS) Algorithm}
The BLASS algorithm has three main components. 
\begin{description}
\item[Single-machine Scheduling Policy:] This determines the rates at which jobs assigned to a single machine are processed. In our algorithm, each machine runs a new scheduling policy called {\em Smoothed Latest Arrival Processor Sharing} (\slaps). \slaps is preemptive and non-clairvoyant.
\item[Dispatch Policy:] The dispatch policy determines the machine to which an arriving job is assigned. 
\item[Rearrangement Policy: ] This procedure is called upon the completion of a job. The procedure changes the assignment of jobs to machines. This is the crux of the algorithm.
\end{description}

\noindent We describe each component of the algorithm in more detail.

\subsubsection{Single-machine Scheduling Policy \slaps$(k)$}
	\label{sec:algorithm}
The policy $\slaps(k)$ is parameterized by a non-negative integer $k$. We will set $k = 1/\epsilon$, where $\epsilon$ is the extra speed given to the online algorithm; for simplicity, we assume that $1/ \eps$ is an integer. Focus on a particular machine, say machine $i$, and let $J_i(t)$ denote the set of alive jobs at time $t$ that are assigned to this machine and $N_i(t) = |J_i(t)|$. Order these jobs in increasing order of their arrival time. Let $\pi_j(t)$ denote job $j$'s rank at time $t$ in this order. Here, the earliest arriving job has rank 1 and the latest arriving job has rank $N_i(t)$. We assume w.l.o.g. that all jobs have distinct arrival times by breaking ties arbitrarily but consistently. We distribute the total processing among the jobs in $J_i(t)$ by giving each job $j$ the following fraction of its rate:

\begin{equation}
	\label{eqn:jobshare}
	\nu_j(t)  :=  \frac{ {\pi_j(t)}^k }{ 1^k + 2^k + ... + (N_i(t))^k  }, 
\end{equation}
\noindent

Therefore, the rate of processing job $j$ at time $t$ is $\nu_j(t) \times s_{ij}$. This completes the description of the algorithm that each machine runs. It is worth noting that $\slaps(k)$ with $k=0$ is exactly \textsc{Round Robin} ($\rr$). As $k$ grows, $\slaps(k)$ shifts its processing power towards more recently arriving jobs and the shift is made smooth as alluded by the name of the scheduling algorithm. However, $\slaps(k)$ retains many important characteristics of $\rr$. One crucial property of $\rr$ is that a pair of jobs delay each other by exactly the same amount. $\slaps(k)$ preserves this property to a factor of $O(k)$. 

The following proposition has an elementary proof.

\begin{proposition}
	\label{prop:upper}
	Consider any integer $k \geq 0$, and $n \geq 1$. Then
	$$\frac{n^k}{1^k + 2^k + ... + n^k} \leq \frac{k+1}{n} \qquad \mbox{and} \qquad \frac{n^k}{1^k + 2^k + ... + (n-1)^k} \geq \frac{k+1}{n}$$
\end{proposition}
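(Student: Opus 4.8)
The plan is to prove the two inequalities by comparing the power sum $S_k(n) := 1^k + 2^k + \cdots + n^k$ to an integral. For the first inequality, I would bound $S_k(n)$ from below: since $x^k$ is increasing on $[0,\infty)$, each term $j^k \ge \int_{j-1}^{j} x^k \, dx$, so $S_k(n) \ge \int_0^n x^k \, dx = \frac{n^{k+1}}{k+1}$. Dividing, $\frac{n^k}{S_k(n)} \le \frac{n^k (k+1)}{n^{k+1}} = \frac{k+1}{n}$, which is exactly the first claim. (For $k = 0$ one checks directly that $\frac{n^0}{S_0(n)} = \frac{1}{n} \le \frac{1}{n}$, so the bound holds with equality; the integral argument also goes through since $\int_0^n 1 \, dx = n$.)

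For the second inequality, I would instead bound $S_k(n-1)$ from above by an integral: again using monotonicity of $x^k$, each term $j^k \le \int_{j}^{j+1} x^k \, dx$, so $S_k(n-1) = \sum_{j=1}^{n-1} j^k \le \int_1^{n} x^k \, dx = \frac{n^{k+1} - 1}{k+1} \le \frac{n^{k+1}}{k+1}$. Hence $\frac{n^k}{S_k(n-1)} \ge \frac{n^k (k+1)}{n^{k+1}} = \frac{k+1}{n}$, giving the second claim. The edge case $n = 1$ needs a separate remark: there $S_k(0)$ is an empty sum equal to $0$, so $\frac{n^k}{S_k(n-1)}$ is $+\infty \ge k+1$, vacuously fine; and the case $k=0$ with $n \ge 2$ gives $\frac{1}{n-1} \ge \frac{1}{n}$, true.

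Alternatively, if one prefers a purely discrete argument avoiding integrals, both inequalities follow from the elementary term-by-term bounds $j^k \ge \int_{j-1}^j x^k\,dx$ and $j^k \le \int_j^{j+1} x^k\,dx$ being replaced by the telescoping identities coming from convexity of $x \mapsto x^{k+1}$: namely $(k+1) j^k \le (j+1)^{k+1} - j^{k+1}$ and $(k+1) j^k \ge j^{k+1} - (j-1)^{k+1}$, which can be proved by the mean value theorem or by noting $(j+1)^{k+1} - j^{k+1} = \sum_{i=0}^{k}\binom{k+1}{i} j^i \ge \binom{k+1}{k} j^k = (k+1) j^k$. Summing the first over $j = 1, \ldots, n$ telescopes to $(k+1) S_k(n) \le (n+1)^{k+1} - 1$... hmm, this gives a slightly different constant, so I would favor summing $(k+1) S_k(n) \ge \sum_{j=1}^n \big(j^{k+1} - (j-1)^{k+1}\big) = n^{k+1}$ for the first inequality, and $(k+1) S_k(n-1) = \sum_{j=1}^{n-1}(k+1)j^k \le \sum_{j=1}^{n-1}\big((j+1)^{k+1} - j^{k+1}\big) = n^{k+1} - 1 \le n^{k+1}$ for the second.

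There is no real obstacle here; the statement is, as the paper says, elementary. The only thing to be careful about is the boundary behavior ($n = 1$ in the second inequality, where the denominator is an empty sum, and $k = 0$ throughout), and making sure the chosen integral/telescoping bounds land on exactly $\frac{n^{k+1}}{k+1}$ rather than something off by lower-order terms in the wrong direction — which is why I would use $S_k(n) \ge \frac{n^{k+1}}{k+1}$ (lower bound on the larger sum) for the first inequality and $S_k(n-1) \le \frac{n^{k+1}}{k+1}$ (upper bound on the smaller sum) for the second.
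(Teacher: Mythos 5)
Your argument is correct and is essentially the paper's own proof: the paper also establishes the first inequality via $1^k+\cdots+n^k \geq \int_0^n x^k\,dx = n^{k+1}/(k+1)$ and the second via $1^k+\cdots+(n-1)^k \leq \int_0^n x^k\,dx$. Your extra care with the $n=1$ and $k=0$ boundary cases, and the alternative telescoping derivation, are fine but not needed.
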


\subsubsection{The Dispatch and Rearrangement Policies}
\begin{definition}
The {\em global-rank} of a job is the index of the job in the sorted list of jobs $J$, sorted in ascending order by release dates: $\Pi_j = |\{j' \; |\; r_{j'} < r_j\}| + 1$. 
\end{definition}

For any job $j$ and time $t$, let $\sigma(j, t)$ denote the machine to which the job is assigned by the BLASS algorithm.  For any machine $i$, job $j$, and time $t$, let $J_{< j}(i,t)$ denote the set of jobs that have global-rank less than global-rank of job $j$  and that are assigned to machine $i$ at time $t$: $J_{< j}(i,t) = \{j' \;|\; r_{j'} < r_j, \sigma(j', t) = i \}$.  Let $N_{< j}(i,t) = |J_{< j}(i,t)|$. Similarly, define $N_{\le j}(\sigma(j,t),t) = N_{< j}(\sigma(j,t),t) + 1$.

\begin{definition}
The {\em local-rank} of a job $j$ is the rank of the job with respect to the set of jobs which are assigned to machine $i$ at time $t$: $\pi_j(i, t) = N_{<j}(i,t) + 1$. 
\end{definition}

In the BLASS algorithm, each machine runs $\slaps(k)$ on the set of jobs assigned to it. For a job $j$ assigned to machine $i$, let $\nu_{j}(i,t)$ denote the fraction of rate $s_{ij}$ that job $j$ receives in this policy. If $N_i(t)$ is the number of jobs assigned to machine $i$ at time $t$, we have from equation (\ref{eqn:jobshare}), 

\begin{equation}
\label{eqn:speed}
\nu_{j}(i,t)  := \frac{ {\pi_{j}(i,t)}^k }{ 1^k + 2^k + ... + (N_i(t))^k  }
\end{equation}

 Let $L(i,j,t) = \frac{s_{ij}}{N_{< j}(i,t) + 1}$. This is the hypothetical rate at which job $j$ will be processed if it were assigned to machine $i$ and this machine were to execute $\rr$ on jobs with local rank at most that of job $j$.

\medskip
\noindent \textbf{The Dispatch Policy:} When a job $j$ arrives, it is assigned to the machine $i$ which maximizes $L(i,j,t)$.

\medskip
\noindent \textbf{Rearrangement Policy:} Suppose a job $j^*$ completes on machine $i^*$ at time $t$. Then, the algorithm calls the procedure {\sc Rearrange} described in Figure~\ref{fig1}. The procedure maintains a machine $b$ on which there is {\em slack}. Initially this is $i^*$, since a job departed from here. It considers jobs in increasing order of global rank starting at $j^*$ and finds the first job $j$ for which $ L(b,j,t) > L(\sigma(j,t),j,t)$. The job $j$ is moved from $i = \sigma(j,t)$ to machine $b = i^*$. This frees up processing on machine $i$, so that $b \leftarrow i$ now. Note that the quantities $L$ can change as a result. It iterates on the remaining jobs.

\begin{figure}[htbp]
\begin{center}
\fbox{\begin{minipage}{6in}
\begin{tabbing}
{\sc Rearrange} \\
\ \ \= /* Order the alive jobs in increasing order of $\Pi_j$ and denote them $1,2,\ldots, l$ in this order. */\\
\> /* Let $j^*$ denote the job that has departed from machine $i^*$. */\\ \\
\> $b \leftarrow i^*$.   \qquad \qquad /* $b$ is the current machine with {\em slack}. */ \\
\> {\bf For} $j := j^* + 1$ {\bf to} $l$ {\bf do}: \\
\> \ \ \= (Implicitly) recompute the quantities $L$. \\
\> \> {\bf If} $L(b,j,t) > L(\sigma(j,t),j,t)$ {\bf then}\\
\> \> \ \ \= $i \leftarrow \sigma(j,t)$ \\
\> \> \> Assign the job $j$ to machine $b$, {\em i.e.} set $ \sigma(j,t) \leftarrow b$ \\
\> \> \> $b \leftarrow i$. \\
\> \> {\bf EndIf} \\
\> {\bf EndFor}
\end{tabbing}
\end{minipage}}
\end{center}
\caption{\label{fig1}The {\sc Rearrange} Procedure.}
\end{figure}

We now show several monotonicity properties of $L(i,j,t)$. In particular, we show that any time instant, jobs are assigned to machines for which this quantity is largest (fixing the assignment of other jobs), and furthermore, this quantity is non-decreasing with time for any job. These properties will be crucial in our analysis, and their proof hinges on considering jobs in increasing order of global rank in {\sc Rearrange}.

\begin{lemma}
\label{lem:monotonicity}
Let $[t_1, t_2]$ be any interval of time where a job $j$ is on the machine $i$ for all time instants $t \in [t_1, t_2]$. Then, $N_{<j}(i,t)$ is non-increasing in $t$.
\end{lemma}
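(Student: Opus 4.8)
The plan is to track how the set $J_{<j}(i,t)$ of lower-global-rank jobs on $j$'s machine $i$ can change during $[t_1,t_2]$. The only events are job arrivals, job departures, and calls to \textsc{Rearrange}; I will argue that none of these can increase $N_{<j}(i,t)$ while $j$ stays on $i$. Arrivals are immediate: a newly arriving job has the largest release date so far, hence global rank $l$ (larger than $j$'s), so it never enters $J_{<j}(i,t)$ regardless of which machine it is dispatched to. Departures of jobs other than $j$ can only remove elements from $J_{<j}(i,t)$, never add. So the only subtle case is what \textsc{Rearrange} does when some job $j^\ast$ completes.

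First I would fix the structure of a single \textsc{Rearrange} call. The procedure walks jobs in increasing global rank, maintaining a ``slack'' machine $b$; when it moves a job $j'$ from $\sigma(j',t)$ to $b$, it sets $b$ to the vacated machine. The key observation is that the job currently being considered always has global rank at least that of $j^\ast$, and each move pushes $b$ to the machine that job just left. I would prove by induction on the loop iterations that at the moment the loop considers a job of global rank $r$, every job with global rank $<r$ sits on the same machine it did before the \textsc{Rearrange} call began (the only changes so far involved jobs of rank $\le$ current, and each such change moved a job \emph{off} a machine $=b$ at that time). Combining this with the fact that $j$ is not moved during $[t_1,t_2]$ by hypothesis: either $j$ has global rank below every job the loop touches, in which case $J_{<j}(i,t)$ is completely untouched except possibly losing $j^\ast$ itself; or the loop reaches ranks up to and beyond $j$'s rank. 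In the latter case I need to check that the net effect on $J_{<j}(i,t)$ is a (weak) decrease: a job $j'$ of rank $<\Pi_j$ can be moved \emph{onto} $i$ only if $i$ was the slack machine $b$ at that iteration, which (by the induction above, tracing how $b$ got to be $i$) means a job of rank $\le\Pi_{j'}<\Pi_j$ that was on $i$ just got moved off. So each addition to $J_{<j}(i,t)$ during the loop is preceded by a removal from $J_{<j}(i,t)$, giving the desired monotonicity across the whole \textsc{Rearrange} call.

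The step I expect to be the main obstacle is the bookkeeping in that last paragraph: making precise the claim that ``whenever a low-rank job lands on $i$, a low-rank job has just left $i$,'' because $b$ can chain through several machines and $j^\ast$ itself muddies rank $\Pi_{j^\ast}$. The clean way to handle this is to prove a loop invariant: \emph{at the start of the iteration for global rank $r$, on the slack machine $b$ there is one ``missing'' job slot among jobs of rank $<r$ relative to the pre-\textsc{Rearrange} configuration, and every other machine has exactly its original set of rank-$<r$ jobs.} This invariant is preserved because the conditional move either does nothing (invariant trivially carries over with $r\mapsto r+1$ since adding rank $r$ to the ``$<r$'' count on both machines is symmetric) or transfers job $r$ from its machine to $b$ and resets $b$ to job $r$'s old machine, again preserving the single-deficit structure. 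Once this invariant is in hand, monotonicity of $N_{<j}(i,t)$ over an interval where $j$ does not move follows by applying it at $r=\Pi_j$ and summing over the discrete events in $[t_1,t_2]$.

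Finally, I would also remark that the lemma as stated only needs the count, not the identities, so it suffices to track cardinalities; and that the analogous monotonicity claims asserted after the lemma statement (that $L(i,j,t)$ is non-decreasing in $t$ and that at every instant $j$ sits on an $L$-maximal machine) will follow from this lemma combined with the dispatch rule and the test $L(b,j,t)>L(\sigma(j,t),j,t)$ in \textsc{Rearrange}, but those are separate statements I would prove afterward.
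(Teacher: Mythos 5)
Your proposal is correct and follows essentially the same route as the paper: arrivals are dismissed because a new job has maximal global rank, and for \textsc{Rearrange} the paper's proof is exactly your key observation — a job of rank below $\Pi_j$ can move \emph{into} $i$ only when $i$ is the slack machine, i.e., only after a job of even smaller rank has just moved \emph{out} of $i$. Your explicit loop invariant (a single rank-$<r$ deficit confined to the current slack machine $b$, all other machines matching the pre-\textsc{Rearrange} counts) is just a more careful formalization of that one-line argument and is sound, modulo the trivial point that at initialization the deficit on $b=i^*$ is ``at most one'' depending on whether you measure relative to the configuration before or after $j^*$'s departure.
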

\begin{proof}
Fix a job $j$ assigned to machine $i$. The set of jobs on machine $i$ can change only upon the arrival of a job or on the completion of a job either on machine $i$ or some other machine $i'$. The arrival of a job does not change the quantity $N_{<j}(i,t)$ since the global-rank of an arriving job is greater than the global-rank of job $j$. When {\sc Rearrange} is invoked, $N_{<j}(i,t)$ increases if a job $j'$ with smaller global rank than $j$ moves into machine $i$. But this must be preceded by a job with even smaller rank moving {\em out} of $i$. Therefore, $N_{<j}(i,t)$ is non-increasing for all time $t \in [t_1,t_2]$.
\end{proof}

Next we show that at all time instants, each job $j$ is assigned to that machine which locally optimizes $L$.

\begin{lemma}
\label{lem:best}
For all jobs $j$, machines $i$, and time instants $t$ where $j$ is alive, $L(\sigma(j,t),j,t) \geq L(i,j,t)$.
\end{lemma}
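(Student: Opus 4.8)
The plan is to prove, by induction over the finitely many events (job arrivals and job completions), the following statement which is stronger and easier to maintain than what is stated: at every time instant, \emph{every} alive job is on its $L$-best machine, i.e.\ $L(\sigma(j,t),j,t)\ge L(i,j,t)$ for all machines $i$. The base case (no jobs) is vacuous, and between two consecutive events no quantity $N_{<j}(i,\cdot)$ or assignment $\sigma(j,\cdot)$ changes, so it suffices to check that each event preserves the invariant. An arrival is immediate: a newly arriving job $j$ has the largest global rank, so $N_{<j'}(i,t)$ — hence $L(i,j',t)$ — is unchanged for every job $j'$ already present, and the Dispatch Policy places $j$ on the machine maximizing $L(i,j,t)$, so $j$ too is on its best machine.

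The substance is the completion of a job $j^*$ on machine $i^*$, which triggers {\sc Rearrange} (Figure~\ref{fig1}). I would analyze this by introducing a \emph{virtual configuration}: the real assignment together with one \emph{phantom} job sitting on the current slack machine $b$, where the phantom is assigned a global rank smaller than that of the job currently being processed by the loop, so it only influences $N_{<j'}(\cdot,\cdot)$ for jobs $j'$ of larger rank. Writing $N^+_{<j'}$ and $L^+$ for the counts and rates computed in the virtual configuration (so $N^+_{<j'}(i,t)=N_{<j'}(i,t)+\bone(i=b)$), I would maintain, at the start of the iteration for a job $j$: (a) every alive job of global rank $<\Pi_j$ is on its $L$-best machine in the \emph{real} configuration; and (b) every alive job of global rank $\ge\Pi_j$ is on its $L^+$-best machine in the \emph{virtual} configuration. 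This holds at the start of the loop with $b=i^*$ and the phantom given the (now-vacated) rank $\Pi_{j^*}$: for jobs of rank $>\Pi_{j^*}$, the phantom on $i^*$ plays exactly the role that $j^*$ did, so the virtual configuration has the same counts $N^+_{<j'}$ as the pre-completion configuration, which was stable by the inductive hypothesis; jobs of smaller rank are untouched by the removal of $j^*$.

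For the inductive step inside the loop I would compare $L(b,j,t)$ with $L(\sigma(j,t),j,t)$. Because the phantom on $b$ has rank smaller than $j$, the quantity $L(b,j,t)=s_{bj}/(N_{<j}(b,t)+1)$ is precisely the rate job $j$ would receive if it took the phantom's slot on $b$; hence the test in {\sc Rearrange} decides exactly whether $j$ strictly prefers $b$ over its current machine in the virtual configuration. If it does not, $j$ stays, $b$ is unchanged, and invariant (b) applied to $j$ together with the test shows $L(\sigma(j,t),j,t)\ge L(i,j,t)$ for all $i$, so $j$ can be reclassified into group (a); nothing else changes. If it does, $j$ moves to $b$ and the new slack machine becomes $\sigma_0:=\sigma(j,t)$ (and the phantom is relocated to $\sigma_0$). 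The key observation — and the reason {\sc Rearrange} scans jobs in increasing global-rank order — is that in the virtual configuration this is equivalent to \emph{swapping} $j$ and the phantom between $b$ and $\sigma_0$; since both $j$ and the phantom have rank smaller than every not-yet-processed job $j'$ (which has rank $>\Pi_j$), this swap leaves $N^+_{<j'}(i,t)$, hence $L^+(i,j',t)$, unchanged for \emph{every} machine $i$ and every such $j'$, so invariant (b) persists verbatim; the same rank comparison shows the move changes no $N_{<j''}(\cdot,t)$ for $j''$ of rank $<\Pi_j$, so (a) is also preserved, and combining the test with invariant (b) for $j$ shows $j$ lands on its $L$-best real machine and joins group (a). When the loop terminates there are no alive jobs of rank exceeding the last processed rank, so (a) certifies that every alive job is on its $L$-best machine, which is what we want.

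I expect the main obstacle to be discovering the correct invariant — in particular realizing that a high-rank job sitting on the slack machine may be \emph{temporarily} suboptimal (its rate drops when a lower-rank job is later moved onto that machine) but is necessarily repaired further along the scan, and that the repair always routes it either back to optimality or onto the current slack machine, never leaving a third machine strictly better. Making this precise is exactly what the phantom/virtual-configuration bookkeeping accomplishes, and the ``swap'' argument is where the increasing-global-rank scan order is essential, since it guarantees that moving one job never undoes the optimality already certified for jobs of smaller rank.
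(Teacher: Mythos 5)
Your proof is correct and follows essentially the same route as the paper's: an induction over arrival/completion events in which the {\sc Rearrange} loop is analyzed iteration by iteration, using the fact that the increasing-global-rank scan order makes each move a count-preserving swap for every job of higher global rank. Your phantom/virtual-configuration bookkeeping is an explicit formalization of the paper's terser accounting that, at any point of the scan, exactly one machine (the current slack machine $b$) has one fewer low-rank job than before the completion while all other $L$-values are unchanged, so the test $L(b,j,t) > L(\sigma(j,t),j,t)$ decides optimality for the job currently reached.
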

\begin{proof}
We prove the lemma by induction on $t$. Consider a job $j$. From the dispatch rule, the lemma is true at $t = r_j$.  Suppose the condition is true at time $t_1 > r_j$.  The assignment of jobs to machines changes only upon the arrival or completion of a job. The arrival of a job does not change the invariant since the global-rank of the arriving job is greater than that of $j$. During {\sc Rearrange}, the quantity $L(i,j,t)$ can only change if a job with lower global rank than $j$ departs. In the iteration of the for loop, as long as $j$ is not reached, the machine $b$ loses one job and gains one job, so that $L(b,j,t)$ remains unchanged. (The $L(i,j,t)$ for remaining $i$ remain trivially unchanged.) When the loop reaches job $j$, either job $j$ moves to $b$ if $L(b,j,t) > L(\sigma(j,t),j,t)$ or it stays where it is. Subsequently, only jobs with global rank greater than that of $j$ change machines, and this cannot affect $L(i,j,t)$ for any $i$. This completes the proof.
\end{proof}

\begin{lemma}
\label{lem:increase}
For all jobs $j$, $L(\sigma(j,t),j,t)$ is non-decreasing in $t$ over the time interval where $j$ is alive.
\end{lemma}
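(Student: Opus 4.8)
The goal is to show that $L(\sigma(j,t),j,t)$ is non-decreasing in $t$ over the lifespan of job $j$. Recall $L(i,j,t) = s_{ij}/(N_{<j}(i,t)+1)$, so since $s_{ij}$ is time-invariant, this amounts to controlling how $N_{<j}(\sigma(j,t),j,t)$ evolves. I would analyze the only three events that can change anything: (a) arrival of a new job, (b) completion of a job triggering \textsc{Rearrange}, and (c) the passage of time with no event (trivial, since all quantities are piecewise constant in $t$). For (a): an arriving job has global-rank larger than that of $j$, and it is dispatched to some machine; this can only add a higher-global-rank job to $j$'s machine, which does not change $N_{<j}$, and it does not move $j$. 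So arrivals are harmless.

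The substance is case (b), the \textsc{Rearrange} procedure. I would split into two subcases according to whether $j$ itself is moved during the call. First, if $j$ is never moved: I would track $N_{<j}(\sigma(j,t),j,t)$ as the for-loop iterates over jobs in increasing order of global rank. While the loop processes jobs of global rank less than $\Pi_j$, the argument in the proof of Lemma~\ref{lem:best} applies — the ``slack'' machine $b$ loses one job (of rank $< \Pi_j$) and gains one job (of rank $< \Pi_j$), so for every machine the count of lower-global-rank jobs is preserved; in particular $N_{<j}$ on $j$'s machine is unchanged through this phase, except possibly at the step where the loop reaches $j$ — but $j$ is not moved in this subcase, and the very fact that $j$ was skipped means $L(b,j,t) \le L(\sigma(j,t),j,t)$ at that moment, i.e. the count on $j$'s machine did not go up. After the loop passes $j$, only jobs of global rank $>\Pi_j$ move, which cannot affect $N_{<j}$. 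Net effect: $N_{<j}(\sigma(j,t),j,t)$ does not increase, so $L$ does not decrease. Actually I expect the cleaner statement to be: when the loop reaches $j$ and does not move it, the machine $b$ at that moment has a \emph{strictly smaller} slack deficit, i.e. $N_{<j}(b,t)$ just dropped by one (a lower-rank job left $i^*$-side of the chain) — so if anything the count on $j$'s own machine is weakly smaller than before the call, invoking Lemma~\ref{lem:monotonicity}'s mechanism restricted to the single call.

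The second subcase is when $j$ \emph{is} moved, say from $i$ to $b$ at the step where the loop reaches $j$. The move happens precisely because $L(b,j,t) > L(\sigma(j,t),j,t) = L(i,j,t)$ at that instant, and after the move $\sigma(j,t) = b$, so $L(\sigma(j,t),j,t)$ strictly increases at this event. I still need to confirm nothing later in the same loop (jobs of global rank $>\Pi_j$) or in future events drags it back down — but those jobs have higher global rank, so they never enter $N_{<j}$ of any machine. Combining both subcases across all events, $L(\sigma(j,t),j,t)$ is non-decreasing.

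\textbf{Main obstacle.} The delicate point is the bookkeeping inside a single \textsc{Rearrange} call: making precise that, as the ``slack'' token passes down the chain through jobs of global rank $<\Pi_j$, the quantity $N_{<j}(i,t)$ stays constant on \emph{every} machine until the token reaches $j$, and that the decision to skip $j$ forces the count on $j$'s machine not to increase. This is exactly the ``jobs considered in increasing order of global rank'' structure flagged before Lemma~\ref{lem:monotonicity}, and Lemma~\ref{lem:monotonicity} and Lemma~\ref{lem:best} already encapsulate most of it; I expect this lemma to follow by essentially re-running their proofs while additionally noting that the move step, when it fires on $j$, increases $L(\sigma(j,t),j,t)$ by definition of the move condition. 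I would phrase it as: the only event that changes $L(\sigma(j,t),j,t)$ is \textsc{Rearrange}, and by Lemma~\ref{lem:monotonicity} (applied to whichever machine $j$ sits on before and after the call) together with the move rule, each such change is an increase.
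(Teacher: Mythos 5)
Your proposal is correct and follows essentially the same route as the paper: the paper's proof is simply that Lemma~\ref{lem:monotonicity} handles the intervals where $j$ stays on one machine (so $N_{<j}$ is non-increasing and $L$ non-decreasing), while the move condition $L(b,j,t) > L(\sigma(j,t),j,t)$ in {\sc Rearrange} (the mechanism behind Lemma~\ref{lem:best}) guarantees a strict increase whenever $j$ changes machines. Your write-up just unpacks these two cited lemmas in more detail.
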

\begin{proof}
By Lemma~\ref{lem:monotonicity}, if a job does not change machines in a time interval, this condition is true. By Lemma~\ref{lem:best}, this is also true when {\sc Rearrange} is invoked, completing the proof.
\end{proof}

\subsection{Analysis of BLASS: Proof of Theorem~\ref{thm:flow-no-scalable}}
We first write a linear programming relaxation of the problem \lpp described below which was first given by~\cite{AnandGK12,GargK07}. It has a variable $x_{ijt}$ for each machine $i \in [m]$, each job $j \in [n]$ and each unit time slot $t \geq r_{j}$. If the machine $i$ processes the job $j$ during the whole time slot $t$, then this variable is set to $1$. The first constraint says that every job has to be completely processed. The second constraint says that a machine cannot process more than one unit of jobs during any time slot. Note that the LP allows a job to be processed simultaneously across different machines.

\[ \mbox{Min} \ \ \ \sum_{j} \sum_{i} \sum_{t \geq r_{j}}  \left(\frac{ s_{ij}(t- r_j)}{p_{j}} + 1 \right) \cdot x_{ijt}  \qquad \tag{\lpp} \label{primal2} \]
\[ \begin{array}{rcllr}
\displaystyle 	\sum_{i} \sum_{t \geq r_{j}} \frac{s_{ij} \cdot x_{ijt}}{p_{j}} &\geq& 1 \qquad &\forall j   \\ 
\displaystyle	\sum_{j \, : \, t \geq r_{j}} x_{ijt} &\leq& 1 &\forall i,    t      \\
\displaystyle	x_{ijt} &\geq& 0  &\forall i,  j,  t \, : \, t \geq r_{j} \qquad 
\end{array}  \]

It is easy to show that the above LP lower bounds the optimal flow time of a feasible schedule within factor 2. As before, we use the dual fitting framework. We write the dual of $\lpp$ as follows.

\[ \mbox{Max} \ \ \ \sum_{j} \alpha_j -   \sum_{i} \sum_t \beta_{it} \qquad \tag{\lpd} \label{Dual} \]
\[ \begin{array}{rcllr}
\displaystyle \;\;\;\;	\frac{s_{ij} \cdot \alpha_j}{p_{j}} - \beta_{it} &\leq& \displaystyle \frac{s_{ij}(t- r_j)}{p_{j}} + 1 \qquad &\forall  i,j, t \, : \, t \geq r_{j}  \label{dual-constraint}\\ 
\displaystyle  \;\;\;\;	\alpha_j &\geq& 0  &\forall   j \\ 
\displaystyle  \;\;\;\;	\beta_{it} &\geq& 0  &\forall   i,t  
\end{array}  \]

We will show that there is a feasible solution to $\lpd$ that has objective $O(\epsilon^2)$ times the total flow time of BLASS, provided we augment the speed of BLASS by $\eta = (1 + 3 \epsilon)$. From now on, we will assume that each processor in BLASS has this extra speed when processing jobs. This modifies Equation (\ref{eqn:speed}) to:
$$ \nu_{j}(i,t)  := \eta \times \frac{ {\pi_{j}(i,t)}^k }{ 1^k + 2^k + ... + (N_i(t))^k  }$$

 Recall that each machine runs $\slaps$ with $k = 1/\epsilon$, and we assume without loss of generality that $1/\epsilon$ is an integer. Let $d_{jj'}$ denote the delay caused by job $j$ to job $j'$. The quantity $d_{jj'}$ is equal to the total processing done on job $j$ in the intervals where $j$ and $j'$ are on the same machine, scaled down by $\eta$. Let $\oneindi_{\{\sigma(j,t) = \sigma(j',t)\}}$ be the indicator function denoting whether job $j$ and $j'$ are assigned to the same machine at time $t$. Then,
$$
d_{jj'} = \frac{1}{\eta}\int^{C_j}_{r_j} \nu_{j}(\sigma(j,t),t) \cdot \oneindi_{\{\sigma(j,t) = \sigma(j',t)\}}dt
$$
Similarly, let $p_j(t^*) = \int^{C_j}_{t = t^*} s_{\sigma(j,t)j} \cdot \nu_{j}(\sigma(j,t),t) dt$ denote the residual size of job $j$ at time $t^*$.

Furthermore, let  $d_{jj'}(t,C_j) =  \frac{1}{\eta} \int^{C_j}_{t} \nu_{j}(\sigma(j,t),t) \oneindi_{\{\sigma(j,t) = \sigma(j',t)\}}dt$ denote the total delay incurred by job $j'$ due to job $j$ after time instant $t$.  Let $J_{<j}$ denote the set of jobs which have release dates less than release date of job $j$. For any job $j$ define,

$$\Delta_j = \sum_{\{j' \in J_{< j} \cup j\} } d_{jj'} +   \sum_{\{j' \in J_{<j}\}} d_{j'j} $$ 

 $\Delta_j$ denotes the total delay the job $j$ causes on all jobs in the set $J_{<j}$ plus the delay caused by the jobs in set $J_{<j}$ on the job $j$. Note that in the first term in the summation, we are including the total processing done on job $j$ itself. Similarly, define $\Delta_j(t,C_j) = \sum_{\{j' \in J_{<j} \cup j\}} d_{jj'}(t,C_j) +   \sum_{\{j' \in J_{<j}\}} d_{j'j}(t,C_j) $ as the residual delay after time $t$. Note that $\sum_{j} \Delta_j = \sum_{j} F_j$, where $F_j$ denotes the flow time of job $j$ in BLASS. 

Next, we establish the following important upper bound on the quantity $\Delta_j(t,C_j)$. 
\begin{lemma}
\label{lem1}
$$ \Delta_j(t^*,C_j) \le  \frac{1}{\eta} \cdot\frac{k+2}{k+1} \cdot \frac{p_j(t^*)}{L(\sigma(j,t^*),j,t^*)}$$
\end{lemma}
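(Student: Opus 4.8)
The plan is to collapse $\Delta_j(t^*,C_j)$ into a single integral over $[t^*,C_j]$, bound the integrand pointwise, and integrate. Fix $j$ and abbreviate $i(t):=\sigma(j,t)$, $\pi(t):=\pi_j(i(t),t)$, $N(t):=N_{i(t)}(t)$, and $S_m:=1^k+2^k+\cdots+m^k$ (so $S_0=0$). Expanding the definitions of $d_{jj'}(t^*,C_j)$ and $d_{j'j}(t^*,C_j)$ and regrouping by time, I would write
$$\Delta_j(t^*,C_j)=\frac{1}{\eta}\int_{t^*}^{C_j}\Big(\nu_j(i(t),t)\sum_{j'\in J_{<j}\cup\{j\}}\oneindi_{\{\sigma(j,t)=\sigma(j',t)\}}+\sum_{j'\in J_{<j}}\nu_{j'}(\sigma(j',t),t)\,\oneindi_{\{\sigma(j,t)=\sigma(j',t)\}}\Big)\,dt.$$

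Next I need two combinatorial facts, both immediate from the definitions of local rank and of $J_{<j}(i,t)$. First, the jobs contributing to the first indicator sum are exactly $j$ together with the currently-alive members of $J_{<j}$ that share machine $i(t)$ with $j$; there are $N_{<j}(i(t),t)+1=\pi(t)$ of them. Second, since \slaps orders a machine's jobs by arrival time while $J_{<j}$ is the set of jobs released before $j$, those alive members of $J_{<j}$ on $i(t)$ occupy precisely the local-rank slots $1,\dots,\pi(t)-1$ (job $j$ itself sitting in slot $\pi(t)$), so by the \slaps rate rule (\ref{eqn:speed}) --- which now carries the speedup $\eta$ --- their rate fractions sum to $\eta\,S_{\pi(t)-1}/S_{N(t)}$. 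Plugging these in, together with $\nu_j(i(t),t)=\eta\,\pi(t)^k/S_{N(t)}$, the bracketed quantity equals $\eta\bigl(\pi(t)^{k+1}+S_{\pi(t)-1}\bigr)/S_{N(t)}$, and hence $\Delta_j(t^*,C_j)=\int_{t^*}^{C_j}\bigl(\pi(t)^{k+1}+S_{\pi(t)-1}\bigr)/S_{N(t)}\,dt$.

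The rest is a short calculation. I apply the second inequality of Proposition~\ref{prop:upper} with $n=\pi(t)$, i.e. $S_{\pi(t)-1}\le\pi(t)^{k+1}/(k+1)$, to get $\pi(t)^{k+1}+S_{\pi(t)-1}\le\frac{k+2}{k+1}\,\pi(t)^{k+1}$. Then I write $\pi(t)^{k+1}/S_{N(t)}=\pi(t)\cdot\bigl(\pi(t)^k/S_{N(t)}\bigr)=\pi(t)\,\nu_j(i(t),t)/\eta$ and substitute $\pi(t)=N_{<j}(i(t),t)+1=s_{i(t)j}/L(i(t),j,t)$, so the integrand is at most $\frac{1}{\eta}\cdot\frac{k+2}{k+1}\cdot s_{i(t)j}\,\nu_j(i(t),t)/L(\sigma(j,t),j,t)$. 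Because $j$ is alive throughout $[t^*,C_j]$, Lemma~\ref{lem:increase} gives $L(\sigma(j,t),j,t)\ge L(\sigma(j,t^*),j,t^*)$ for every $t\ge t^*$, so I can pull the constant $1/L(\sigma(j,t^*),j,t^*)$ outside the integral; the remaining factor $\int_{t^*}^{C_j}s_{\sigma(j,t)j}\,\nu_j(\sigma(j,t),t)\,dt$ is by definition the residual size $p_j(t^*)$, which is exactly the claimed bound.

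The algebra (the \slaps rate substitutions, the integral estimate of Proposition~\ref{prop:upper}, and recognizing $p_j(t^*)$) is routine, and so is the use of monotonicity of $L$. The step that needs the most care --- though it is still essentially bookkeeping --- is the combinatorial identification: one must verify that at every $t\in[t^*,C_j]$ the alive elements of the static set $J_{<j}$ co-located with $j$ are precisely the jobs of local rank $1,\dots,\pi(t)-1$ on machine $i(t)$. This holds because the local-rank order and the ``released before $j$'' relation coincide, so a co-located job released before $j$ has strictly smaller local rank, and conversely each of the bottom $\pi(t)-1$ slots on $i(t)$ is filled by such a job. One also has to keep in mind that the indicator $\oneindi_{\{\sigma(j,t)=\sigma(j',t)\}}$ silently drops any $j'$ not alive at time $t$, which is why the count is $N_{<j}(i(t),t)$ rather than $|J_{<j}|$.
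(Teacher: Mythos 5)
Your proof is correct and follows essentially the same route as the paper's: collapse $\Delta_j(t^*,C_j)$ into a single time integral, identify the jobs of $J_{<j}$ co-located with $j$ as those of local rank $1,\dots,\pi_j-1$ so that Proposition~\ref{prop:upper} bounds their total \slaps share by $\nu_j\cdot N_{\le j}/(k+1)$, rewrite $N_{\le j}=s_{\sigma(j,t)j}/L(\sigma(j,t),j,t)$, and invoke Lemma~\ref{lem:increase} to pull out $1/L(\sigma(j,t^*),j,t^*)$, leaving exactly $p_j(t^*)$. The only difference is cosmetic notation ($S_m$ for the power sums).
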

\begin{proof}
{\small
\begin{eqnarray*}
\Delta_j(t^*,C_j) &=& \frac{1}{\eta} \int^{C_j}_{t = t^*} \left( \nu_{j}(\sigma(j,t),t) \cdot N_{\leq j}(\sigma(j,t),t) + \displaystyle \sum_{j' \in J_{<j}(\sigma(j,t),t)} \nu_{j'}(\sigma(j,t),t) \right) dt   \nonumber \\
&=& \frac{1}{\eta}  \int^{C_j}_{t = t^*} \left( \nu_{j}(\sigma(j,t),t) \cdot  N_{\leq j}(\sigma(j,t),t) + \frac{\eta \cdot \left(\displaystyle \sum_{j' \in J_{<j}(\sigma(j,t),t)}{\pi_{j'}(\sigma(j,t),t)}^k \right)}{\displaystyle \sum^{N_i(t)}_{a = 1}a^k} \right) dt  \nonumber\\ \nonumber \\
&=& \frac{1}{\eta}  \int^{C_j}_{t = t^*} \left( \nu_{j}(\sigma(j,t),t) \cdot  N_{\leq j}(\sigma(j,t),t) + \left( \frac{\displaystyle \sum_{j' \in J_{<j}(\sigma(j,t),t)}{\pi_{j'}(\sigma(j,t),t)}^k}{{\pi_j(\sigma(j,t),t)}^k} \cdot \frac{\eta \cdot {\pi_j(\sigma(j,t),t)}^k}{\displaystyle \sum^{N_i(t)}_{a = 1}a^k} \right) \right)dt  \nonumber \\
&=& \frac{1}{\eta}  \int^{C_j}_{t = t^*} \nu_{j}(\sigma(j,t),t) \cdot \left( N_{\leq j}(\sigma(j,t),t) + \frac{\displaystyle \sum_{j' \in J_{<j}(\sigma(j,t),t)}{\pi_{j'}(\sigma(j,t),t)}^k }{{\pi_j(\sigma(j,t),t)}^k} \right) dt  \nonumber \\
&\leq& \frac{1}{\eta}  \int^{C_j}_{t = t^*} \nu_{j}(\sigma(j,t),t) \cdot \left( N_{\leq j}(\sigma(j,t),t) + \frac{N_{\leq j}(\sigma(j,t),t)}{k+1} \right) dt  \qquad \mbox{[Proposition~\ref{prop:upper}]}\nonumber \\
&=&  \frac{1}{\eta} \cdot\frac{k+2}{k+1} \cdot \int^{C_j}_{t = t^*} \nu_{j}(\sigma(j,t),t) \cdot  N_{\leq j}(\sigma(j,t),t) dt \nonumber \\
&=&  \frac{1}{\eta} \cdot \frac{k+2}{k+1} \cdot \int^{C_j}_{t = t^*} s_{\sigma(j,t)j}\cdot \nu_{j}(\sigma(j,t),t)  \cdot \frac{1}{L(\sigma(j,t),j,t)} dt \nonumber \\
&\leq&  \frac{1}{\eta} \cdot \frac{k+2}{k+1} \cdot \frac{1}{L(\sigma(j,t^*),j,t^*)} \cdot \int^{C_j}_{t = t^*} s_{\sigma(j,t)j}\cdot \nu_{j}(\sigma(j,t),t) dt \qquad \mbox{[Lemma~\ref{lem:increase}]}\nonumber \\
&=&  \frac{1}{\eta} \cdot\frac{k+2}{k+1} \cdot \frac{p_j(t^*)}{L(\sigma(j,t^*),j,t^*)}
\end{eqnarray*}}
\end{proof}

Using the above inequality, it is easy to show the following:

\begin{lemma}
\label{lem:delay}
For any time instant $t^*$ and for any job $j$, 
$$
\Delta_j \leq \quad(k+2) (t^*-r_j) + \left( \frac{1}{\eta} \cdot \frac{k+2}{k+1} \cdot \frac{p_{j}(t^*)}{L(\sigma(j,t^*),j,t^*)} \right)
$$
\end{lemma}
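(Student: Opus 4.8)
The plan is to split $\Delta_j$ at the time $t^*$ into the delay accrued during $[r_j,t^*]$ and the residual delay $\Delta_j(t^*,C_j)$ accrued afterwards, i.e. $\Delta_j = \big(\Delta_j - \Delta_j(t^*,C_j)\big) + \Delta_j(t^*,C_j)$. The second summand is bounded directly by Lemma~\ref{lem1}, which already gives $\Delta_j(t^*,C_j) \le \frac{1}{\eta}\cdot\frac{k+2}{k+1}\cdot\frac{p_j(t^*)}{L(\sigma(j,t^*),j,t^*)}$, so the whole task reduces to showing that the first summand is at most $(k+2)(t^*-r_j)$. We may assume $t^* \ge r_j$; if $t^* \ge C_j$ then $\Delta_j(t^*,C_j)=0$ and the argument below, applied verbatim with $C_j$ in place of $t^*$, already bounds all of $\Delta_j$ by $(k+2)(C_j-r_j) \le (k+2)(t^*-r_j)$.

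First I would rewrite the delay accrued during $[r_j,t^*]$ exactly as in the opening lines of the proof of Lemma~\ref{lem1}, but with the integral over $[r_j,t^*]$ rather than $[t^*,C_j]$:
\[
\Delta_j - \Delta_j(t^*,C_j) = \frac{1}{\eta}\int_{r_j}^{t^*}\!\Big( \nu_{j}(\sigma(j,t),t)\,N_{\le j}(\sigma(j,t),t) + \sum_{j' \in J_{<j}(\sigma(j,t),t)} \nu_{j'}(\sigma(j,t),t)\Big)\,dt .
\]
Using $\pi_{j}(\sigma(j,t),t)=N_{\le j}(\sigma(j,t),t)$ together with the second inequality of Proposition~\ref{prop:upper} (noting that the local ranks of the jobs in $J_{<j}(\sigma(j,t),t)$ are exactly $1,2,\dots,N_{<j}(\sigma(j,t),t)$), the bracketed integrand is at most $\frac{k+2}{k+1}\,\nu_{j}(\sigma(j,t),t)\,N_{\le j}(\sigma(j,t),t)$, just as in the chain of (in)equalities inside Lemma~\ref{lem1}.

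The one genuinely new ingredient — and the only step requiring any care, since $N_{\le j}$ may change along $[r_j,t^*]$ — is a pointwise bound on $\nu_{j}(\sigma(j,t),t)\,N_{\le j}(\sigma(j,t),t)$ that is uniform in $t$. Writing $i=\sigma(j,t)$, from $\pi_j(i,t)=N_{\le j}(i,t)\le N_i(t)$ and $\nu_j(i,t)=\eta\,\pi_j(i,t)^{k}/\sum_{a=1}^{N_i(t)}a^{k}$ we get
\[
\nu_{j}(i,t)\,N_{\le j}(i,t) = \eta\cdot\frac{\pi_j(i,t)^{k+1}}{\sum_{a=1}^{N_i(t)}a^{k}} \le \eta\cdot\frac{N_i(t)^{k+1}}{\sum_{a=1}^{N_i(t)}a^{k}} \le \eta\,(k+1),
\]
the last inequality being the first part of Proposition~\ref{prop:upper}. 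Integrating this over an interval of length at most $t^*-r_j$ and multiplying by $\frac{1}{\eta}\cdot\frac{k+2}{k+1}$ gives $\Delta_j - \Delta_j(t^*,C_j)\le (k+2)(t^*-r_j)$; adding the bound from Lemma~\ref{lem1} yields the claim. Thus there is no real obstacle: the argument is essentially Lemma~\ref{lem1} re-run on the complementary time interval, with the telescoping into the residual size $p_j(t^*)$ replaced by the crude but $t$-independent estimate $\nu_j N_{\le j}\le \eta(k+1)$.
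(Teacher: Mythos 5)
Your proof is correct and follows essentially the same route as the paper: split $\Delta_j$ at $t^*$, invoke Lemma~\ref{lem1} for the residual part, and bound the integrand on $[r_j,t^*]$ pointwise by $\eta(k+2)$ via the first inequality of Proposition~\ref{prop:upper}. The only cosmetic difference is that you fold the cross-term $\sum_{j'}\nu_{j'}$ into $\frac{1}{k+1}\nu_j N_{\le j}$ before applying the uniform bound, whereas the paper bounds that sum directly by $\eta$; both yield the same constant $k+2$.
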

\begin{proof}
{\small
\begin{eqnarray*}
\Delta_j &=& \frac{1}{\eta} \int^{t^*}_{t = r_j} \left( \nu_{j}(\sigma(j,t),t) \cdot N_{\leq j}(\sigma(j,t),j) + \displaystyle \sum_{j' \in J_{<j}(\sigma(j,t),t)} \nu_{j'}(\sigma(j,t),t) \right) dt   + \Delta_j(t^*, C_j) \nonumber \\
&\leq& \frac{1}{\eta} \int^{t^*}_{t = r_j} \left( \frac{ \eta \cdot {\pi_j(\sigma(j,t),t)}^k}{\displaystyle \sum^{N_{\leq j}(\sigma(j,t),t)}_{a = 1}a^k} \cdot N_{\leq j}(\sigma(j,t),j) + \eta \right) dt   + \Delta_j(t^*, C_j) \qquad \mbox{[From the def of $\nu$, Equation~\ref{eqn:speed}]}\nonumber\\
&\leq& \frac{1}{\eta} \int^{t^*}_{t = r_j} \left( \frac{\eta ( k+1)}{N_{\leq j}(\sigma(j,t),t)} \cdot N_{\leq j}(\sigma(j,t),t) + \eta \right) dt  + \Delta_j(t^*, C_j) \qquad \mbox{[Proposition~\ref{prop:upper}]} \nonumber \\
&\leq&  \int^{t^*}_{t = r_j}(k+2) dt  + \Delta_j(t^*,C_j) \nonumber \\
&\leq& (k+2)(t^*-r_j) + \Delta_j(t^*,C_j)
\end{eqnarray*}}
Combined with Lemma~\ref{lem1}, this completes the proof.
\end{proof}

We now perform the dual fitting. We set the variables of the \ref{Dual} as follows. We set $\beta_{it}$ proportional to the total number of jobs alive on machine $i$ at time $t$: $\beta_{it} = \frac{1}{k+3} N_i(t)$. We set $\alpha_j$ proportional to the total delay  the job $j$ causes to the jobs of global-rank at most $\Pi_j$: $\alpha_j = \frac{1}{k+2} \Delta_j$.

We first bound the dual objective as follows (noting $k = 1/\epsilon$ and $\eta = 1+3 \epsilon$):
\begin{eqnarray}
\sum_{j} \alpha_j - \sum_{i,t}\beta_{it} &=&  \sum_{j} \frac{\Delta_j}{k+2} - \sum_{i,t} \frac{N_i(t)}{k+3} =  \epsilon \left(\sum_{j} \frac{\Delta_j}{1+2\epsilon} - \sum_{i,t} \frac{N_i(t)}{1 + 3\epsilon} \right) \nonumber\\ 
&=& \epsilon \cdot \sum_j F_j \cdot \left(\frac{1}{1+2\epsilon} - \frac{1}{1+3\epsilon}\right) = O(\epsilon^2) \sum_j F_j
\end{eqnarray}

It therefore remains to prove that  constraints of \ref{Dual} are satisfied. To see this, fix job $j$ and time instant $t$. We consider two cases.

\noindent \textbf{Case 1:} Machine $i = \sigma(j,t)$. Then
\begin{eqnarray}
\alpha_j - \frac{p_j}{s_{ij}}\beta_{it} &=& \frac{\Delta_j}{k+2} - \frac{p_j}{s_{ij}}\cdot \frac{N_i(t)}{k+3}  \nonumber \\
&\leq& (t-r_j) + \frac{p_j(t)}{\eta \cdot (k+1)} \cdot \frac{N_{<j}(i,t)+1}{s_{ij}} - \frac{p_j}{s_{ij}}\cdot \frac{N_i(t)}{k+3}  \qquad \mbox{[Lemma~\ref{lem:delay}]} \nonumber \\
&\leq& t - r_j \qquad [\mbox{since } \eta = 1+3\epsilon,k = 1/\epsilon]\nonumber
\end{eqnarray}

\noindent \textbf{Case 2:} Machine $i \neq \sigma(j,t)$. Then
\begin{eqnarray}
\alpha_j - \frac{p_j}{s_{ij}}\beta_{it} &=& \frac{\Delta_j}{k+2} - \frac{p_j}{s_{ij}}\cdot \frac{N_i(t)}{k+3} \nonumber \\
&\leq& t-r_j + \frac{1}{\eta} \cdot \frac{p_j(t)}{k+1}\cdot \frac{N_{<j}(\sigma(j,t), t) + 1}{s_{\sigma(j,t)j}} - \frac{p_j}{s_{ij}}\cdot \frac{N_i(t)}{k+3} \qquad \mbox{[Lemma~\ref{lem:delay}]} \nonumber \\
&\leq& t-r_j + \left (\frac{p_j(t)}{\eta\cdot(k+1)}\frac{N_{<j}(i,t)+1}{s_{ij}} - \frac{p_j}{k+3} \cdot \frac{N_i(t)}{s_{ij}} \right) \qquad \mbox{[Lemma~\ref{lem:best}]} \nonumber \\
&\leq& t-r_j + \frac{p_j}{ s_{ij}} \qquad [\mbox{since } \eta = 1+3\epsilon, k = 1/\epsilon]\nonumber \nonumber
\end{eqnarray}

Therefore, the dual constraints are satisfied for all time instants $t$ and all jobs $j$, and we derive that our algorithm is $(1+\eps)$-speed 
$O(1/\epsilon^2)$-competitive against $\lpp$, completing the proof of Theorem \ref{thm:flow-no-scalable}.

\section{Conclusions and Open Questions}
We mention several open questions. Our results for weighted flow time of \psp, though tight, are weak due to the generality of the problem we consider. Our worst-case examples, however, are quite artificial, and we believe that the each individual application (such as unrelated machine scheduling or multi-dimensional scheduling) will admit stronger positive results. We have indeed shown such a result for unweighted flow time on unrelated machines. We believe that there could exist a $O(1)$-competitive algorithm for multidimensional scheduling with $O(1)$- or $O(\log M)$-speed (the speed is independent of the total number of jobs arriving at the system).  In the same vein, it is open whether there is a $O(1)$-speed $O(1)$-competitive (clairvoyant) algorithm for $\psp$ for minimizing the total weighted flow time. We hope our positive results for the general $\psp$ will lead to a more systematic study of scheduling problems that arise in modern data center applications. 

We note that Im \etal recently obtained a non-clairvoyant scalable algorithm on unrelated machines for minimizing the total weighted flow time \cite{ImKMP14}. Further, their algorithm  extends to the objective of minimizing the total weighted flow time plus energy consumption.

\bibliographystyle{plain}
\bibliography{mmf}
\appendix
\section{Total Weighted Flow Time of Proportional Fairness: Theorem~\ref{thm:flow}}
	\label{sec:flowtime}

In this section we study the more challenging objective of minimizing the total weighted flow time for \psp. As discussed before, resource augmentation will be needed to obtain positive results for the flow objective.  We prove the upper and lower bound results claimed in Theorem~\ref{thm:flow} in Section~\ref{sec:flow-upper} and  \ref{sec:flow-lower}, respectively. As before, we use the same algorithm \pf and denote it by $\cA$.

\subsection{Upper Bound}
	\label{sec:flow-upper}

\subsubsection{High-level Overview} 

We first give a high-level overview of the analysis. Similar to the analysis we did for the completion time objective, we will use dual fitting. The mathematical programs should be modified accordingly. The only changes made in $\primal$ and $\primal_s$ are in the objective: $\sum_{t, j \in U(t)} w_j \cdot   \frac{t}{p_j} \cdot x_{jt}$ should be changed to $\sum_{t \in r_j, t} w_j \cdot   \frac{t - r_j}{p_j} \cdot x_{jt}$. Recall that $\primal_s$ gives a valid lower bound to the optimal adversarial scheduler with $1/s$-speed, which is equivalent to the algorithm being given $s$-speed and the optimal scheduler being given 1-speed. 
Then the dual of $\primal_s$ is as follows. 
\begin{align}
	\max \sum_{j} \alpha_j -  \sum_{d,t}  \beta_{dt}  \tag{$\mathsf{DUAL_s}$} \label{f-dual} 
\end{align}
\vspace{-5mm}
\begin{align}
	  &&s.t. \quad \frac{\alpha_j}{p_j} -  s \Bbf_{\cdot j} \cdot \betav_t &\leq w_j \cdot \frac{t - r_j}{p_j} &&\quad \forall j, t	\geq r_j \label{eqn:dual-1-flow}\\
	&&	\quad				\alpha_j &\geq 0 			&&\quad \forall j  \label{eqn:dual-3-flow}\\
	&&	\quad				\beta_{dt} &\geq 0 			&&\quad \forall d,t  \label{eqn:dual-4-flow}
\end{align}

Let's recall the high-level idea of the analysis for the completion time objective. In dual fitting, we set dual variables so that (i) the dual objective is at least $O(1)$ times $\cA$' total weighted completion time of PF, and (ii) all dual constraints are satisfied. Regarding (i), we had two things to keep in mind: to  make $\sum_{j} \alpha_j$ comparable to $\cA$'s total weighted completion time, and make  $\sum_{d,t}  \beta_{dt}$ smaller than $\sum_{j} \alpha_j$. On the other hand, in order to satisfy dual constraints, we have to ``cover''  the quantity $\frac{\alpha_j}{p_j}  - w_j \cdot \frac{t}{p_j}$ by $\beta_{dt}$. In this sense, (i) and (ii) compete each other. 

To give an overview of the analysis of the flow result, for simplicity, let's assume that all jobs are unweighted. We will construct a laminar family $\cL$ of intervals, and fit jobs into intervals with similar sizes. By ignoring jobs of small flow time, say less than $\frac{1}{2n}$ times the maximum weighted flow time of all jobs, we can assume that there are at most $O(\log n)$ levels of intervals in the family. Now we find the level such that the total weighted flow time  of jobs in the level is maximized. We crucially use the  fact that all jobs in the level have similar flow times. Namely, we will be able to define $\beta_{dt}$ for each laminar interval in the level we chose, via a linear combination of $\vec{y}_t$ over  the laminar interval. 

The actual analysis is considerably  more subtle particularly due to jobs with varying weights. In the weighted case, we will have to create multiple laminar families. Further, there could be many different laminar families, even more than $polylog(n)$. It is quite challenging to define $\beta_{dt}$ using a linear combination of the duals of $\cppf$ while trying to minimize the side-effect between different laminar families. Also ``ignoring" jobs is not as easy as it looks since such ignored jobs still contribute the duals of $\cppf$.

\subsubsection{Main Analysis}

In this section we give a formal analysis of the upper bound claimed in Theorem~\ref{thm:flow}. To make our analysis more transparent, we do not optimize constants. To set dual variables, we perform the following sequence of preprocessing steps. To keep track of changes made in each step, we define a set $\cA'_t \subseteq \cA_t$ of ``active" jobs at time $t$, which will help us to set dual variables later. Also we will maintain the set of ``globaly active" jobs $\cA'$. The set $\cA'$ is global in the sense that if $j \notin \cA'$ then $j \notin \cA'_t$ for all $t$. Intuitively, jobs in $\cA'$ will account for a large fraction ($\Omega( 1/  \log n)$) of the total weighted flow time. Initially, we set $\cA'_t := \cA_t$ for all $t$ (recall that $\cA_t$ denote jobs alive at time $t$), and $\cA'$ to be the entire set of jobs. When we remove a job $j$ from $\cA'$, the job will be automatically removed from $\cA'_t$ for all $t$. Since $\cA$ will change over the preprocessing steps, we will let $\cA^i$ denote the current set $\cA'$ just after completing $i^{th}$ step; $\cA^i_t$ is defined similarly. Also we will refer to  the quantity $\sum_{t: j \in \cA'_t} 1$ as job $j$'s residual flow time, which will change over preprocessing steps. A job's weighted residual flow time is similarly defined. Between steps, we will formally state some important changes made. For a subset $S$ of jobs, let $W(S)$ denote the total weight of jobs in $S$. 

\paragraph{Step 1. Discard jobs with small weighted flow time.}  For all jobs $j$ such that $w_j F^\cA_j \leq \frac{1}{2n} \max_{j'} w_{j'} F^\cA_{j'}$, remove them from $\cA'$.

\begin{proposition}
	\label{prop:1-2}
	$\sum_t  W(\cA^1_t)  = \sum_{j \in \cA^1} w_j F_j \geq  \frac{1}{2}\sum_{j} w_j F_j$.
\end{proposition}

\paragraph{Step 2. Group jobs with similar weights: odd/even classes.}  Let $w_{\max}$ denote the maximum job weight. A job $j \in \cA'$ is in class $\cC_h$, $h \geq 1$ if the job has weight in the range of $(w_{\max}/ n^{8h}, w_{\max} / n^{8(h-1)}]$. Classes $\cC_1, \cC_3, ..., $ are said to be odd, and the other classes even. Note that $\{\cC_h\}_{h \geq 1}$ is a partition of all jobs in $\cA'$. 

\paragraph{Step 3. Keep only odd [even] classes.} Between odd and even classes, keep the classes that give a larger total weighted flow time. More precisely, if $\sum_t \sum_{h: odd} W(\cC_h \cap \cA'_t) \geq \sum_t \sum_{h: even} W(\cC_h \cap \cA'_t)$, we keep only odd classes, i.e. remove all jobs in even classes from $\cA'$. Since the other case can be handled analogously, we will assume throughout the analysis that we kept odd classes.

\begin{proposition}
	\label{prop:3-4}
	$\sum_t  W(\cA^3_t) \geq \frac{1}{2} \sum_t  W( \cA^2_t) =  \frac{1}{2}\sum_t W( \cA^1_t) \geq \frac{1}{4} \sum_j w_j F_j$.
\end{proposition}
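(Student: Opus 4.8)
The plan is to unwind the bookkeeping introduced by the three preprocessing steps; no new idea beyond $\propref{1-2}$ is needed. First I would record the basic counting identity: for any ``globally active'' set $\cA'$ of jobs, with $\cA'_t := \cA' \cap \cA_t$, the slotted-time convention gives
\[
\sum_t W(\cA'_t) \;=\; \sum_t \sum_{j \in \cA' \cap \cA_t} w_j \;=\; \sum_{j \in \cA'} w_j \cdot \bigl|\{\,t : j \in \cA_t\,\}\bigr| \;=\; \sum_{j \in \cA'} w_j F_j,
\]
because a surviving job $j$ is alive in exactly $F_j$ unit slots and Steps 1--3 delete only whole jobs, never trimming any slot of a job that remains. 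Applying this with $\cA' = \cA^1, \cA^2, \cA^3$ gives $\sum_t W(\cA^i_t) = \sum_{j \in \cA^i} w_j F_j$ for $i \in \{1,2,3\}$; in particular this is the right-hand identity in the proposition when $i=1$.

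Next, the middle equality $\sum_t W(\cA^2_t) = \sum_t W(\cA^1_t)$ is immediate: Step 2 only partitions $\cA^1$ into the weight classes $\{\cC_h\}_{h \ge 1}$ and removes nothing, so $\cA^2 = \cA^1$ and $\cA^2_t = \cA^1_t$ for every $t$. For the first inequality, fix a time $t$. Since $\{\cC_h\}_{h\ge 1}$ partitions $\cA^2$, the sets $\{\cC_h \cap \cA'_t\}_{h \ge 1}$ partition $\cA^2_t$, hence
\[
\sum_{h \;\mathrm{odd}} W(\cC_h \cap \cA'_t) \;+\; \sum_{h \;\mathrm{even}} W(\cC_h \cap \cA'_t) \;=\; W(\cA^2_t).
\]
Summing over $t$, and using that Step 3 retains the parity class (odd or even) with the larger aggregate and discards the other, we obtain $\sum_t W(\cA^3_t) = \max\bigl\{\sum_t\sum_{h \;\mathrm{odd}} W(\cC_h\cap\cA'_t),\ \sum_t\sum_{h \;\mathrm{even}} W(\cC_h\cap\cA'_t)\bigr\} \ge \frac{1}{2} \sum_t W(\cA^2_t)$, since the maximum of two nonnegative numbers is at least half their sum.

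Finally, chaining these with $\propref{1-2}$, which asserts $\sum_t W(\cA^1_t) \ge \frac{1}{2} \sum_j w_j F_j$, yields $\sum_t W(\cA^3_t) \ge \frac{1}{2} \sum_t W(\cA^2_t) = \frac{1}{2} \sum_t W(\cA^1_t) \ge \frac{1}{4} \sum_j w_j F_j$, which is exactly the claimed chain. There is essentially no obstacle here; the only point that needs a moment's care is verifying that after Steps 1--3 the ``residual flow time'' of a surviving job is still its full $F_j$ (so that the halving from Step 3 is genuinely a halving of $\sum_t W(\cA'_t)$), which holds precisely because those steps remove entire jobs rather than shortening anyone's alive interval.
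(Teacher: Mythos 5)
Your proof is correct and is exactly the argument the paper leaves implicit (the proposition is stated without proof there): $\cA^2=\cA^1$ since Step 2 only partitions, Step 3 keeps the heavier of two parity classes whose totals sum to $\sum_t W(\cA^2_t)$, and the final bound is Proposition~\ref{prop:1-2}. Your observation that Steps 1--3 delete only whole jobs, so $\sum_t W(\cA^i_t)=\sum_{j\in\cA^i}w_jF_j$ for $i\le 3$, is the right justification for the middle equality and is consistent with the paper's bookkeeping.
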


\paragraph{Step 4. Black out times with extra large weights for odd classes.}  We say that time $t$ has extra large weights for odd classes if $\sum_{h: odd} W(\cA'_t \cap \cC_h)  \leq \frac{1}{8}  W(\cA_t)$.  For all such times $t$, we remove all jobs from $\cA'_t$. Let's call such time steps ``global black out'' times, which we denote by $\cT_B$. 

	We now repeatedly remove jobs with a very small portion left -- we say that a job $j$ is almost-removed if the job $j \in \cA'$ has global black out times for at least $1 - 1/ n^4$ fraction of times during its window, $[r_j, C^\cA_j]$. For each time $t \in \cA'_t$, if $w_j \geq \frac{1}{n^2} W(\cA'_t)$, we remove all jobs from $\cA'_t$, and add the time $t$ to $\cT_B$. Also we remove the almost-removed job $j$ from $\cA'$. We repeat this step until we have no almost-removed job in $\cA'$. 

\begin{proposition}
	\label{prop:4-5-1}
	For any time $t \notin \cT_B$, we have $\sum_{h: odd} W(\cA^4_t \cap \cC_h) \geq \frac{1}{16} W( \cA_t)$. 
\end{proposition}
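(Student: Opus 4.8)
The plan is to fix a time $t \notin \cT_B$ and control how much active weight can leave $\cA'_t$ during Step~4. The first observation I would record is a normalization: since Step~3 discarded all even-class jobs, both $\cA^3_t$ and $\cA^4_t$ contain only jobs lying in odd classes, so $\sum_{h:\,\mathrm{odd}} W(\cA^3_t \cap \cC_h) = W(\cA^3_t)$ and $\sum_{h:\,\mathrm{odd}} W(\cA^4_t \cap \cC_h) = W(\cA^4_t)$. Hence the claim is exactly $W(\cA^4_t) \ge \frac1{16} W(\cA_t)$.

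Next I would pin down the starting value $W(\cA^3_t)$ at the beginning of Step~4. By hypothesis $t$ is never placed into $\cT_B$; in particular it is not a ``global black out'' time in the first part of Step~4, so the defining inequality $\sum_{h:\,\mathrm{odd}} W(\cA'_t \cap \cC_h) \le \frac18 W(\cA_t)$ fails, i.e. $W(\cA^3_t) > \frac18 W(\cA_t)$.

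The core of the argument is then to bound how far $W(\cA'_t)$ can drop below $W(\cA^3_t)$ during the repeated-removal loop. The set $\cA'_t$ shrinks during Step~4 only when some job $j$ with $t$ still in its window is removed globally from $\cA'$ (arrivals do not affect times already present, and the initial black-out sweep emptying $\cA'_t$ would have put $t \in \cT_B$, excluded by hypothesis). For each such job the loop first checks whether $w_j \ge \frac1{n^2} W(\cA'_t)$ and, if so, adds $t$ to $\cT_B$. Since $t \notin \cT_B$, every job $j$ ever deleted from $\cA'_t$ satisfied $w_j < \frac1{n^2} W(\cA'_t)$ at the instant just before its deletion, so that deletion leaves a remaining weight strictly greater than $W(\cA'_t)\bigl(1 - 1/n^2\bigr)$. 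Because $\cA'_t \subseteq \cA_t$ contains at most $n$ jobs and jobs are only ever removed, at most $n$ such deletions occur; iterating,
\[ W(\cA^4_t) \;\ge\; W(\cA^3_t)\,\bigl(1 - 1/n^2\bigr)^{n} \;\ge\; W(\cA^3_t)\,\bigl(1 - 1/n\bigr) \;\ge\; \tfrac12\, W(\cA^3_t) \;>\; \tfrac1{16}\, W(\cA_t), \]
using Bernoulli's inequality and $n \ge 2$ (the case $n=1$ being immediate). This is the whole proof.

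There is no serious obstacle here; the only point requiring care is the bookkeeping in the third paragraph — namely, that for the fixed time $t$ the only mechanism that removes a job from $\cA'_t$ is a global removal from $\cA'$, and that the loop's own test then certifies the removed job's weight was at most a $1/n^2$-fraction of the then-current $W(\cA'_t)$. Everything else in Step~4 (the ``almost-removed'' predicate, the processing order of almost-removed jobs, and cascading black-outs of \emph{other} times) is irrelevant to this particular $t$, so no further argument is needed.
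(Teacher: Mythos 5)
Your proof is correct and follows essentially the same route as the paper's: use the failure of the first black-out test to get $W(\cA^3_t) > \frac{1}{8}W(\cA_t)$, then observe that each removal in the almost-removed loop that does not black out $t$ shrinks $W(\cA'_t)$ by a factor of at most $(1-1/n^2)$, and compound over at most $n$ removals. Your version is slightly more careful than the paper's (which just says "assuming $n$ is sufficiently large"), since the Bernoulli bound makes the constant explicit for all $n \ge 2$.
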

\begin{proof}
	Before the second operation of repeatedly removing almost removed jobs, we have that for all $t \notin \cT_B$, $\sum_{h: odd} W(\cA^4_t \cap \cC_h) \geq \frac{1}{8} W( \cA_t)$. Hence if a time step $t$ has ``survived" to the end of Step 4, i.e. $t \notin \cT_B$, it means that the quantity $W(\cA'_t)$ decreased by a factor of at most $1 - 1 / n^2$ when a job is removed by the second operation. Assuming that $n$ is greater than a sufficiently large constant, the proposition follows. 
\end{proof}

\begin{proposition}
	\label{prop:4-5-2}
	$\sum_t W(\cA^4_t)  \geq \frac{1}{4} \sum_t W( \cA^3_t)   \geq \frac{1}{16} \sum_j w_j F_j$.
\end{proposition}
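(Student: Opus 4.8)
The plan is to account separately for the weight removed by the two operations that comprise Step~4, charging each loss against $\sum_t W(\cA^3_t)$. The one external fact we need is the identity $\sum_t W(\cA_t) = \sum_j w_j F_j$: each alive job $j$ contributes $w_j$ to $W(\cA_t)$ in exactly the $F_j$ unit slots of its window $[r_j, C^\cA_j]$. Combined with Proposition~\ref{prop:3-4} this gives $\sum_t W(\cA_t) = \sum_j w_j F_j \le 4\sum_t W(\cA^3_t)$, which is the exchange rate that lets us replace occurrences of $W(\cA_t)$ by $W(\cA^3_t)$.

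For the first operation, note that at the start of Step~4 we have $\cA'_t = \cA^3_t$, and since only odd classes survive Step~3 we have $\sum_{h:\,\mathrm{odd}} W(\cA^3_t\cap\cC_h) = W(\cA^3_t)$; hence the times blacked out here are precisely those $t$ with $W(\cA^3_t) \le \tfrac18 W(\cA_t)$, and the weight removed is at most $\sum_t \tfrac18 W(\cA_t) = \tfrac18 \sum_j w_j F_j \le \tfrac12\sum_t W(\cA^3_t)$, leaving at least $\tfrac12\sum_t W(\cA^3_t)$ of active weight. For the second operation I would bound the total loss summed over all almost-removed jobs ever processed. When such a job $j$ is processed, at most $\tfrac1{n^4}F_j$ slots $t$ of its window still satisfy $j\in\cA'_t$ (the remaining $\ge(1-\tfrac1{n^4})F_j$ slots already lie in $\cT_B$, and $\cT_B$ only grows while $\cA'_t$ only shrinks). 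At each such slot we either black it out — which by the triggering condition only happens when $w_j \ge \tfrac1{n^2}W(\cA'_t)$, so the loss there is $W(\cA'_t) \le n^2 w_j$ — or merely delete $j$ from $\cA'_t$, losing $w_j$. Thus $j$ is charged at most $\tfrac1{n^4}F_j\cdot n^2 w_j + \tfrac1{n^4}F_j\cdot w_j \le \tfrac2{n^2}w_j F_j$, and summing over all removed jobs the second operation loses at most $\tfrac2{n^2}\sum_j w_j F_j \le \tfrac8{n^2}\sum_t W(\cA^3_t)$.

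Putting the two bounds together, $\sum_t W(\cA^4_t) \ge \big(\tfrac12 - \tfrac8{n^2}\big)\sum_t W(\cA^3_t) \ge \tfrac14\sum_t W(\cA^3_t)$ once $n$ exceeds a fixed constant, and the remaining inequality $\tfrac14\sum_t W(\cA^3_t) \ge \tfrac1{16}\sum_j w_j F_j$ is exactly Proposition~\ref{prop:3-4}. The delicate point is the second operation: one must check that the residual-flow bound $\tfrac1{n^4}F_j$ is valid \emph{at the moment $j$ is processed} (it is, by monotonicity of $\cT_B$ and of $\cA'_t$), and that the blackout loss is genuinely governed by $n^2 w_j$ through the condition $w_j \ge \tfrac1{n^2}W(\cA'_t)$; the fact that one removal can turn further jobs into almost-removed ones is harmless, since we only sum an upper bound over the jobs that are actually removed and the $(t,j)$-pairs charged by the two operations are disjoint.
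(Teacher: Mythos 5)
Your proof is correct, and it takes a genuinely different route from the paper's. The paper bounds the loss \emph{job by job}: it defines $V^1_j,V^2_j$ as the decrease in job $j$'s residual weighted flow time due to each of the two operations, isolates the set $R$ of jobs whose loss is large and dominated by the second operation, and then builds a forest in which each job in $R$ is charged to the job ``most responsible'' for its removal, arguing that a child's weighted flow time is $O(1/n^2)$ times its parent's so that the cascade of removals telescopes geometrically. You instead account for the loss \emph{event by event}, at the moment each unit of weight-time is deleted from $\sum_t W(\cA'_t)$: the first operation's loss is controlled directly by its own trigger ($W(\cA^3_t)\le\frac18 W(\cA_t)$ at every blacked-out time, giving total loss at most $\frac18\sum_j w_jF_j\le\frac12\sum_t W(\cA^3_t)$), and the second operation's loss is attributed to the almost-removed job being processed, using both trigger conditions — at most $\frac{1}{n^4}F_j$ residual slots, each costing at most $n^2w_j$ because a blackout fires only when $w_j\ge\frac{1}{n^2}W(\cA'_t)$. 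This sidesteps the forest construction entirely: the cascading removals that the paper's tree is designed to tame are automatically handled because your per-job bound is evaluated at the moment the job is processed (when $\cT_B$ has only grown) and already covers the weight of \emph{all} other jobs wiped out at the times it blacks out. The two correctness points you flag — that the $\frac{1}{n^4}F_j$ bound is taken at processing time, and that the $(t,j)$ charges are disjoint so nothing is double-counted — are exactly the right ones, and both hold. Your argument is shorter, yields the same constants, and in my view is easier to verify than the paper's.
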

\begin{proof}
	Define $V_1^j$ [$V_2^j$] to be the decrease of job $j$'s residual weighted flow time due to operation 1 [2]. Define $R$ to be all jobs $j$ in $\cA^3$ such that $\frac{1}{100} V^1_j \leq  V_j^2$ and $V^1_j + V^2_j \geq \frac{1}{4} F_j$.  We claim that the decrease in the total residual weighted flow time due to jobs in $\cA^3 \setminus  R$ is at most  $\frac{1}{2}  \sum_t W(A^3_t)$.
Firstly, the decrease due to jobs $j$ such that $V^1_j + V^2_j \leq \frac{1}{4} F_j$ is at most  $\frac{1}{4} \sum_t W( \cA^3_t)$. Secondly, the decrease due to jobs $j$ such that $\frac{1}{100} V^1_j \geq  V_j^2$ is at most the decrease due to the first operation in Step 4 times 101/100, which is upper bounded by $\frac{101}{100} \cdot \frac{1}{16} \sum_j w_j F_j$. This follows from the observation that the total weighted flow time that jobs in $\cA^3_t$ accumulate at black out times is at most $\frac{1}{16} \sum_j w_j F_j \leq \frac{1}{4} \sum_t W(A^3_t)$ by Proposition~\ref{prop:3-4}. Hence the claim follows. 
	
	We now focus on upper bounding the decrease in the total residual weighted flow time due to jobs $\cA^3  \cap R$. Observe the important property of jobs $j$ in $\cA^3 \setminus \cap R$: job $j$'s  residual flow time decreases by $\Omega( \frac{1}{n} F_j)$ when a certain job $j'$ is removed in the second operation in Step 4 (recall that removing $j'$ could lead to more times becoming blacked out, hence decreasing other jobs residual flow times). Motivated by this, we create a collection of rooted trees $\cF$, which describes which job  is most responsible for removing each job in $R$. We let $j'$ become $j$'s parent; if there are more than one such $j'$ we break ties arbitrarily. Note that roots in $\cF$ are those jobs $j$ whose residual flow time is $\frac{1}{n^4} F_j$ just before the second operation starts.  Now let's consider two jobs, $j'$ and $j$, a child of $j'$. We claim that $w_j F_j \leq O(\frac{1}{n^2}) w_{j'} F_{j'}$. Just before we remove job $j'$, job $j$ had residual flow time at least $\Omega(\frac{1}{n} F_{j'})$, and $j'$ had residual flow time at most  $\frac{1}{n^4} F_{j'}$. Further it must be the case that $w_{j'} \leq n^2 w_j$. This implies that job $j$'s weighted flow time is only a fraction of that of job $j'$, hence the claim follows.

	We complete the proof by charing a child's weighted flow time to its parent's weighted flow time. Eventually we will charge the total weighted flow time of all jobs in $\cF$ to that of the root jobs. 	Then it is easy to observe that the total weighted flow time of non-root jobs in $\cF$ is at most $O( 1/ n^2)$ times the total weighted flow time of the root jobs. Since we already upper bounded $(1 - 1/ n^4)$ times the total weighted flow time of root jobs by $\frac{1}{16} w_j F_j$, hence the decrease in the total residual weighted flow time due to jobs ($\cA^3 \cap R$ and $\cA^3 \setminus R$) in Step 3 is at most $\frac{n^4}{n^4 -1} \frac{1}{16} \sum_j w_j F_j$ + $\frac{1}{2}  \sum_t W(\cA^3_t) \leq \frac{3}{4}  \sum_t W(\cA^3_t)$ by Proposition~\ref{prop:3-4}. 
\end{proof}

\begin{proposition}
	\label{prop:4-5-3}
	For all jobs $j \in \cA^4$, we have $\sum_{t: j \in \cA_t^4} 1 \geq \frac{1}{n^4} F_j$.
\end{proposition}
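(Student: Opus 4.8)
The plan is to unwind the definition of $\cA^4$ together with the notion of an ``almost-removed'' job from Step 4. First I would observe that a job $j$ belongs to $\cA^4$ precisely when it survives both operations of Step 4; in particular, $j$ is never declared almost-removed during the iterative second operation. The set $\cT_B$ of global black out times only grows as that operation proceeds, while the window $[r_j, C^\cA_j]$ is fixed -- it is determined by $\cA$'s actual schedule, which the preprocessing does not alter, only the bookkeeping sets $\cA'_t$ change. Hence the fact that $j$ was not almost-removed at the moment the operation terminated implies that $j$ is not almost-removed with respect to the \emph{final} set $\cT_B$ either.

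Next I would translate this into a counting statement. By definition, $j$ not being almost-removed means that fewer than a $(1 - 1/n^4)$ fraction of the (integer) time slots in $[r_j, C^\cA_j]$ lie in $\cT_B$; equivalently, at least a $1/n^4$ fraction of these slots -- that is, at least $\frac{1}{n^4} F_j$ of them, using $F_j = C^\cA_j - r_j$ -- are not blacked out.

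Then I would show that at every such non-blacked-out slot $t$, job $j$ is still active, i.e., $j \in \cA^4_t$. Indeed, $j$ is alive throughout $[r_j, C^\cA_j]$, so $j \in \cA_t$; it was not discarded from $\cA'$ in Step 1 or Step 3, nor removed as an almost-removed job in Step 4 (otherwise $j \notin \cA^4$); and since $t \notin \cT_B$, job $j$ was not removed from $\cA'_t$ by any black out. Hence $j$ remains in $\cA'_t = \cA^4_t$. Summing over these at least $\frac{1}{n^4} F_j$ slots yields $\sum_{t : j \in \cA^4_t} 1 \ge \frac{1}{n^4} F_j$, as claimed.

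The only delicate point is keeping track of the iterative nature of the second operation in Step 4 -- removing one almost-removed job can blacken out further slots and thereby create new almost-removed jobs -- so the argument must be phrased with respect to the terminal $\cT_B$ rather than any intermediate version; once that is pinned down, the rest is a direct unwinding of the definitions.
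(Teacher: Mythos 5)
Your proof is correct: the paper states this proposition without proof, treating it as an immediate consequence of the termination condition of Step 4 (no almost-removed job remains in $\cA'$), and your argument is exactly that unwinding — a job surviving Step 4 has fewer than a $(1-1/n^4)$ fraction of its window blacked out, and it stays in $\cA^4_t$ at every non-blacked-out slot of its window. Your remark about phrasing the argument with respect to the terminal $\cT_B$ rather than an intermediate one is the right point of care, and it is handled correctly.
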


\paragraph{Step 5. Find jobs with similar flow times within each odd  class $\cC_h$  that have the largest total weighted flow time.} Consider each odd class $h$. Observe that all jobs in $\cC_h$ have flow times  which are all within factor $2 n^9$ -- this follows from the facts that all jobs in the same class have weights, all within factor $n^8$, and all jobs have weighted flow times, all within factor $2n$. Let $F^{\max}_h$ denote the maximum flow time of all jobs in $\cC_h$. We say that a job $j \in \cC_h$ is in $u^{th}$-level if $F^\cA_j \in  (F^{\max}_h / 2^{u}, F^{\max}_h / 2^{u-1}]$, and denote all $u^{th}$-level jobs in $\cC_h$ as $\cC_{hu}$. Let $u^*_h$ be the $u$ that maximizes the total ``residual" weighted flow time of jobs in $\cC_{hu}$, i.e. $\sum_{j \in \cC_{hu}} \sum_{t: j \in \cA_t'} w_j$. Observe that $1 \leq u^*_h \leq \lceil \log 2 n^9 \rceil \leq  2^4\log n$ for all $n$ greater than a sufficiently large constant.  We remove all jobs $j \in \cC_h \setminus \cC_{h, u^*_h}$ from $\cA'$. 

\begin{proposition}
	\label{prop:5-6}
	For all $h \geq 1$, $\sum_t W(\cA^5_t \cap \cC_h) \geq \frac{1}{2^4 \log n} \sum_t W(\cA^4_t \cap \cC_h)$.
\end{proposition}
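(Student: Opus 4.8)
The plan is to prove Proposition~\ref{prop:5-6} by a simple averaging (pigeonhole) argument over the flow-time levels $\cC_{hu}$ within a class; the only thing requiring care is the bookkeeping of what ``residual weighted flow time'' refers to at the point Step~5 is executed. Fix an index $h \ge 1$. If $h$ is even there is nothing to prove, since the jobs of $\cC_h$ were discarded in Step~3 and both sides of the claimed inequality equal $0$; so assume $h$ is odd.

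First I would record that, at the moment Step~5 begins, the active times of the jobs of $\cC_h$ are exactly those of $\cA^4$, so that the residual weighted flow time of $j \in \cC_h$ is $\sum_{t:\, j \in \cA^4_t} w_j$ and hence
\[
\sum_t W(\cA^4_t \cap \cC_h) = \sum_{j \in \cC_h}\sum_{t:\, j \in \cA^4_t} w_j = \sum_{u \ge 1}\;\sum_{j \in \cC_{hu}}\sum_{t:\, j \in \cA^4_t} w_j .
\]
Next I would invoke the level count already established in the description of Step~5: every nonempty level index $u$ satisfies $1 \le u \le \lceil \log(2n^9)\rceil \le 2^4\log n$ once $n$ exceeds a fixed constant, because all flow times within a class lie within a factor $2n^9$ of one another (weights within a factor $n^8$ by the class definition, weighted flow times within a factor $2n$ by Step~1). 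The outer sum above therefore has at most $2^4\log n$ nonzero terms, so by averaging some level $u$ contributes at least $\frac{1}{2^4\log n}\sum_t W(\cA^4_t \cap \cC_h)$; since $u^*_h$ is by definition the maximizer of $\sum_{j \in \cC_{hu}}\sum_{t:\, j\in\cA^4_t} w_j$, the same bound holds for $u = u^*_h$.

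Finally I would translate this back into the quantity in the statement. Step~5 deletes from $\cA'$ exactly the jobs of $\cC_h\setminus\cC_{h,u^*_h}$ (together with the analogous jobs of the other odd classes), so within class $h$ we have $\cA^5_t\cap\cC_h = \cA^4_t\cap\cC_{h,u^*_h}$ for every $t$, which gives
\[
\sum_t W(\cA^5_t \cap \cC_h) = \sum_{j\in\cC_{h,u^*_h}}\sum_{t:\, j\in\cA^4_t} w_j \ge \frac{1}{2^4\log n}\sum_t W(\cA^4_t \cap \cC_h),
\]
as claimed.

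The one point that genuinely needs justification --- and the place the argument could slip --- is the assertion that the residual weighted flow time of a job of $\cC_h$ is unaffected by the Step~5 deletions. This holds because Step~5, unlike Step~4, performs only global deletions from $\cA'$ and never adds times to $\cT_B$: for a surviving job $j$ the set $\{t : j \in \cA'_t\}$ is insensitive to removing other jobs, be they in a different class or in a different level of $\cC_h$. In particular $u^*_h$ is well-defined independently of the order in which the odd classes are handled in Step~5, and the identity $\cA^5_t\cap\cC_h = \cA^4_t\cap\cC_{h,u^*_h}$ is exact. Granting that, the proposition is immediate from the averaging step; apart from this bookkeeping and the mild ``$n$ above a fixed constant'' assumption already used throughout the preprocessing, there is no substantive obstacle.
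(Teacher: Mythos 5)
Your proof is correct and is exactly the argument the paper intends: the paper leaves this proposition without an explicit proof because it follows immediately from the observation in Step~5 that $1 \le u^*_h \le \lceil \log 2n^9\rceil \le 2^4\log n$ together with the definition of $u^*_h$ as the level maximizing the residual weighted flow time, i.e., the same averaging over at most $2^4\log n$ levels that you carry out. Your additional bookkeeping (that Step~5 performs only global job deletions and never blacks out times, so $\cA^5_t\cap\cC_h=\cA^4_t\cap\cC_{h,u^*_h}$) is a correct and welcome clarification of what the paper takes for granted.
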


\paragraph{Step 6. Fit jobs in $\cC_{h, u^*_h}$ into a disjoint intervals of similar sizes.}  Consider each $h$. Define a set of intervals $\cL_{hu} := \{  [0, F^{\max}_h/2^{u}), [F^{\max}_h/2^{u}, 2  F^{\max}_h/2^{u}), [2 F^{\max}_h/2^{u}, 3  F^{\max}_h/2^{u}), ... \}$. Observe that jobs in $\cC_{h, u^*_h}$ have flow times similar to the size of an interval in $\cL_{h,u^*_h}$. Associate each job $j$ in $\cC_{h,u^*_h}$ with the interval $L$ in $\cL_{h,u^*_h}$ that maximizes $\sum_{t: j \in \cA'_t, t \in L} 1$, breaking ties arbitrarily. Here we trim out job $j$'s window so that it completely fits into the interval $L$, i.e. for all times $t \notin L$, remove job $j$ from $\cA'_t$. We let $L_j$ denote the unique interval in $\cL_h$ into which we fit  job $j$. Here we will refer to $L_j$ as job $j$'s laminar window. 

	We say that a job $j \in \cA'$ is left if $L_j$'s left end point lies in job $j$'s window, $[r_j, C_j]$. All other jobs in $\cA'$ are said to be right. Between left and right jobs, we keep the jobs that give a larger total remaining weighted flow time, remove other jobs from $\cA'$.

\begin{proposition}
	\label{prop:6-7}
	For all jobs $j \in \cA^6$, we have $\sum_{t: j \in \cA_t^6} 1 \geq \frac{1}{4 n^4} F_j$. 
\end{proposition}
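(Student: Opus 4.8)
The plan is to track how a surviving job's residual flow time $\sum_{t\,:\,j\in\cA'_t}1$ changes through Steps~5 and~6, and to notice that it can shrink at exactly one place: the trimming of windows in Step~6. Everything else — the deletion of the classes $\cC_h\setminus\cC_{h,u^*_h}$ in Step~5, and the left/right pruning at the end of Step~6 — removes \emph{whole} jobs, hence leaves the active-time sets of the survivors untouched.

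First I would invoke Proposition~\ref{prop:4-5-3}: for every $j\in\cA^4$, $\sum_{t\,:\,j\in\cA^4_t}1\ge\frac{1}{n^4}F_j$. Since Step~5 only discards jobs outside $\cC_{h,u^*_h}$, any job $j$ that reaches Step~6 has the same residual flow time after Step~5 as after Step~4, so $\sum_{t\,:\,j\in\cA^5_t}1\ge\frac{1}{n^4}F_j$. Next comes the one geometric fact needed. A job $j\in\cC_{h,u^*_h}$ satisfies $F^\cA_j\le F^{\max}_h/2^{u^*_h-1}=2\cdot(F^{\max}_h/2^{u^*_h})$, i.e. its window $[r_j,C^\cA_j)$ has length at most twice the common length $F^{\max}_h/2^{u^*_h}$ of the blocks of $\cL_{h,u^*_h}$; hence the window, and in particular the set of active time slots of $j$ left after Step~5, lies inside the union of at most three consecutive blocks of $\cL_{h,u^*_h}$. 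By averaging over these (at most three) blocks, the block $L\in\cL_{h,u^*_h}$ maximizing $\sum_{t\,:\,j\in\cA^5_t,\ t\in L}1$ satisfies $\sum_{t\,:\,j\in\cA^5_t,\ t\in L}1\ge\frac13\sum_{t\,:\,j\in\cA^5_t}1\ge\frac{1}{3n^4}F_j$. Since $L_j$ is precisely this maximizing block and trimming $j$'s window to $L_j$ retains exactly those active slots, $j$'s residual flow time right after the trimming is at least $\frac{1}{3n^4}F_j$. The subsequent left/right pruning removes whole jobs only, so if $j\in\cA^6$ then $\sum_{t\,:\,j\in\cA^6_t}1\ge\frac{1}{3n^4}F_j\ge\frac{1}{4n^4}F_j$, which is the claim.

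I expect no real obstacle here: the whole argument is (i) recognizing that residual flow time is monotone under the whole-job removals of Steps~5 and~6, so the only loss is the window trimming, and (ii) a pigeonhole over the $O(1)$ laminar blocks meeting $j$'s window. The only mild care needed is the block count — a window of length at most $2L$ meets at most three blocks of a length-$L$ tiling (equality being possible only through the two boundary blocks it barely touches) — and the comfortable slack between the $\frac13$ obtained and the $\frac14$ claimed absorbs any off-by-one coming from the slotted-time convention.
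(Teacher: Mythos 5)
Your proposal is correct and follows essentially the same route as the paper: the paper's (terser) proof likewise combines Proposition~\ref{prop:4-5-3} with the observation that job $j$'s window meets only $O(1)$ intervals of $\cL_{h,u^*_h}$ (the paper says at most $4$, you argue at most $3$) and then pigeonholes to pick the best interval. Your additional observations — that Steps~5 and the left/right pruning remove whole jobs and hence cannot shrink a survivor's residual flow time — are exactly the implicit bookkeeping the paper's two-line proof relies on.
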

\begin{proof}
There are at most 4  intervals in $\cC_{h, u^*_h}$ that intersect job $j$'s window. This, together with Proposition~\ref{prop:4-5-3} implies the claim. 
	
\end{proof}

\begin{proposition}
	\label{prop:6-7-2}
	For all odd $h \geq 1$, $\sum_t W(\cA^6_t \cap \cC_h) \geq \frac{1}{2^3} \sum_t W(\cA^5_t \cap \cC_h) \geq \frac{1}{2^7  \log n} \sum_t W(\cA^4_t \cap \cC_h)$. 
\end{proposition}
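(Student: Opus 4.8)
The plan is to prove the two inequalities separately. The second one, $\sum_t W(\cA^5_t\cap\cC_h)\ge\frac{1}{2^4\log n}\sum_t W(\cA^4_t\cap\cC_h)$, is exactly Proposition~\ref{prop:5-6}, so everything reduces to the first inequality $\sum_t W(\cA^6_t\cap\cC_h)\ge\frac{1}{2^3}\sum_t W(\cA^5_t\cap\cC_h)$; multiplying the two then yields the claim since $2^3\cdot2^4=2^7$. To establish the first inequality I would fix an odd $h$ and follow the two operations that Step~6 performs on the jobs of $\cC_{h,u^*_h}$: (a) trimming each surviving job $j$ so that its residual window shrinks to the single interval $L_j\in\cL_{h,u^*_h}$ that currently captures the largest part of $j$'s residual window; and (b) discarding, between the ``left'' jobs and the ``right'' jobs of $\cC_h$, whichever of the two sides contributes less weighted residual flow time. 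Write $\cA^{5.5}$ for the state right after operation (a).

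For (a) the crux is a simple counting bound. Every interval of $\cL_{h,u^*_h}$ has the same length $b_h:=F^{\max}_h/2^{u^*_h}$, and by the definition of level $u^*_h$ every $j\in\cC_{h,u^*_h}$ satisfies $F^\cA_j\le F^{\max}_h/2^{u^*_h-1}=2b_h$; hence $j$'s window $[r_j,C_j]$ is an interval of length at most $2b_h$, and consequently $[r_j,C_j]$ — and therefore the residual window of $j$ at the start of Step~6, which is contained in it — meets at most $4$ of the length-$b_h$ intervals of $\cL_{h,u^*_h}$ (in fact at most $3$; the weaker bound $4$ is what Proposition~\ref{prop:6-7} records). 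Consequently the maximizing interval $L_j$ keeps at least a $1/4$ fraction of $j$'s residual flow time, and since operation (a) changes no weights, removes no job, and acts on each job independently, summing over $j\in\cC_{h,u^*_h}$ with weights gives $\sum_t W(\cA^{5.5}_t\cap\cC_h)\ge\tfrac14\sum_t W(\cA^5_t\cap\cC_h)$. For (b), the operation deletes only whole jobs and never shortens a surviving window, and it retains whichever of the two sides of $\cC_h$ has the larger weighted residual flow time, so $\sum_t W(\cA^6_t\cap\cC_h)\ge\tfrac12\sum_t W(\cA^{5.5}_t\cap\cC_h)$. Chaining, $\sum_t W(\cA^6_t\cap\cC_h)\ge\tfrac12\cdot\tfrac14\sum_t W(\cA^5_t\cap\cC_h)=\tfrac{1}{2^3}\sum_t W(\cA^5_t\cap\cC_h)$, and combining with Proposition~\ref{prop:5-6} completes the proof.

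The only place with real content is the counting bound used in (a): a job's window intersects only a constant number of laminar intervals, which is exactly where the Step~5 guarantee — all surviving jobs of a class have flow time within a factor $2$ of the common interval length $b_h$ — gets used. The bookkeeping point to stay alert to is that all quantities here are \emph{residual} flow times $\sum_{t:j\in\cA'_t}1$, which may be far below $F^\cA_j$; but Step~5 and operation (b) remove only entire jobs while operation (a) collapses each residual window into one interval, so the per-job fractional losses are cleanly $1$ and $1/4$, the per-class weighted sums multiply without any cross-job or cross-class interaction, and $\tfrac14\cdot\tfrac12=\tfrac{1}{2^3}$.
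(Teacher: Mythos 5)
Your proof is correct and is exactly the argument the paper intends (the paper omits a proof of this proposition, but its proof of Proposition~\ref{prop:6-7} relies on the same ``at most 4 intervals of $\cL_{h,u^*_h}$ meet $[r_j,C_j]$'' count, and Proposition~\ref{prop:5-6} supplies the second inequality as you note). The factor decomposition $\tfrac14$ (trimming to the best interval) times $\tfrac12$ (keeping the heavier of left/right jobs) is the right accounting, and your bookkeeping that all losses are per-job fractions of \emph{residual} flow time is the correct way to justify summing with weights.
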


\paragraph{Step 7. Define exclusive active times for each odd class.}  We say that a class $\cC_h$ is lower than $\cC_{h'}$ if $h <  h'$. Observe that jobs in a lower odd class have flow time significantly smaller than jobs in a higher odd class. Consider odd classes $\cC_h$ in increasing order of $h$. For each job $j$, and for all times $t$ where a smaller odd class job $j'$ in $\cA'$ is processed ($j' \in \cA_t$),  remove job $j$ from $\cA'_t$. This makes different odd classes have disjoint ``active" times -- for any two jobs $j, j'$ from different odd classes, and for all times $t$, it is the case that $j \notin \cA'_t$ or  $j' \notin \cA'_t$.  For an odd class $h$, define $\cT_h$ to be the set of times $t$ where no job in odd class lower than $h$ is processed, and there is a job in $\cA'_t \cap \cC_h$.   We say that the times in $\cT_h$ are active for class $h$. Let $h^*_t$ denote the class that is active at time $t$ -- by definition there is at most one such class, and we let $h^*_t := 0$ if no such class exists. 

\begin{proposition}
	\label{prop:7-8}
	For all odd $h \geq 1$, $\sum_t W(\cA^7_t \cap \cC_h) \geq \frac{1}{2} \sum_t W( \cA^6_t \cap \cC_h) \geq \frac{1}{2^8 \log n} \sum_t W(\cA^4_t \cap \cC_h)$.
\end{proposition}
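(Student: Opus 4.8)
The plan is to prove the two inequalities separately. The second inequality, $\sum_t W(\cA^6_t \cap \cC_h) \geq \frac{1}{2^7 \log n} \sum_t W(\cA^4_t \cap \cC_h)$, is exactly Proposition~\ref{prop:6-7-2}, so once the first inequality $\sum_t W(\cA^7_t \cap \cC_h) \geq \frac12 \sum_t W(\cA^6_t \cap \cC_h)$ is established, chaining the two and absorbing the factor $1/2$ into the constant gives the claimed $\frac{1}{2^8 \log n}$. So the whole work is in the first inequality, and I would prove it job by job within the class $\cC_h$. The key structural observation is that Step~7 deletes a job $j \in \cC_h \cap \cA^6$ from $\cA'_t$ only at times $t$ at which some job $j'$ in an odd class strictly lower than $\cC_h$ (and still in $\cA' = \cA^6$, since Step~7 never removes jobs from the global set) is alive; consequently the set of times removed from $j$'s active window is contained in $[r_j, C^\cA_j] \cap \bigcup_{j'} [r_{j'}, C^\cA_{j'}]$, where the union ranges over all such lower-odd-class jobs $j'$. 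It therefore suffices to show that the measure of this set is at most $\tfrac12 |\{t : j \in \cA^6_t\}|$ and then sum the resulting inequality, weighted by $w_j$, over $j \in \cC_h \cap \cA^6$.

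To bound that measure I would use the weight-class separation built up in Steps~1 and 2. Writing $M := \max_{j'} w_{j'} F^\cA_{j'}$, Step~1 gives $w_j F^\cA_j > M/(2n)$ and Step~2 gives $w_j \leq w_{\max}/n^{8(h-1)}$ for $j \in \cC_h$, so $F^\cA_j > \frac{n^{8h-9}}{2 w_{\max}} M$. On the other hand, any job $j'$ in an odd class $\cC_{h'}$ with $h' \leq h-2$ has $w_{j'} > w_{\max}/n^{8(h-2)}$ and $w_{j'} F^\cA_{j'} \leq M$, hence $F^\cA_{j'} \leq \frac{n^{8(h-2)}}{w_{\max}} M \leq \frac{2}{n^7} F^\cA_j$. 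Since there are at most $n$ jobs in total, the union $\bigcup_{j'} [r_{j'}, C^\cA_{j'}]$ over all lower-odd-class $j'$ has total length at most $n \cdot \frac{2}{n^7} F^\cA_j = \frac{2}{n^6} F^\cA_j$, so in particular $\bigl| [r_j, C^\cA_j] \cap \bigcup_{j'} [r_{j'}, C^\cA_{j'}] \bigr| \leq \frac{2}{n^6} F^\cA_j$. Proposition~\ref{prop:6-7} supplies the matching lower bound $|\{t : j \in \cA^6_t\}| \geq \frac{1}{4 n^4} F^\cA_j$. Dividing, the fraction of $j$'s active window deleted in Step~7 is at most $\frac{2/n^6}{1/(4 n^4)} = \frac{8}{n^2} \leq \frac12$, once $n$ exceeds a small absolute constant (which is the standing assumption throughout this section).

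Summing $w_j\,|\{t : j \in \cA^6_t \setminus \cA^7_t\}| \leq \tfrac12 w_j\,|\{t : j \in \cA^6_t\}|$ over $j \in \cC_h \cap \cA^6$ yields $\sum_t W\bigl((\cA^6_t \setminus \cA^7_t) \cap \cC_h\bigr) \leq \tfrac12 \sum_t W(\cA^6_t \cap \cC_h)$, which is the first inequality, and combining with Proposition~\ref{prop:6-7-2} completes the proof. I expect the only genuinely delicate point to be the bookkeeping in the key structural observation: one must be careful that the deletions performed while Step~7 processes the various lower odd classes — including the fact that trimming one class's jobs may be triggered by still-lower classes — can collectively delete from $j$'s window nothing more than (the part within $[r_j, C^\cA_j]$ of) the union of the alive-windows of jobs in odd classes below $\cC_h$. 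Everything after that is a comfortable $\frac{1}{n^6}$-versus-$\frac{1}{n^4}$ gap, so there is no delicate constant-chasing.
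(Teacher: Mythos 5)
Your proposal is correct and follows essentially the same route as the paper: the second inequality is Proposition~\ref{prop:6-7-2} combined with the factor $1/2$ from the first, and the first inequality is obtained exactly as the paper does it, by noting that any job in a strictly lower odd class has flow time at most a $2/n^{7}$ fraction of $F^\cA_j$ (via the $n^8$ weight separation between odd classes and the $2n$ spread of weighted flow times after Step~1), so the at most $n$ such jobs can black out at most $\frac{2}{n^{6}}F^\cA_j$ of $j$'s window, which is negligible against the $\frac{1}{4n^{4}}F^\cA_j$ residual guaranteed by Proposition~\ref{prop:6-7}. Your write-up is merely more explicit about the union-of-windows bookkeeping and the per-job summation than the paper's terse version.
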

\begin{proof}
	 Observe that every job in $\cA^1$ of a  lower class than $j$ has a  flow time at most $2n/ n^8$ times that of job $j$ -- two jobs in different odd groups have different weights which differ by a factor of at least $n^8$ while all jobs in $\cA^1$  have the same weight flow time within factor $2n$. Hence jobs in smaller classes can create at most $\frac{2n^2}{n^8} F_j$ additional inactive time slots for job $j$, which are negligible compared to $\sum_{t: j \in \cA_t^6} 1 \geq \frac{1}{4 n^4} F_j$ (See Proposition~\ref{prop:6-7}). The proposition follows assuming that $n$ is greater than a sufficiently large constant. 
\end{proof}

\paragraph{Step 8. Black out times with extra large weights for $\cC_{h^*_t}$ at each time $t$.}  Consider each time $t$. We say that time $t$ has extra large weights for class $h^*_t$ if $\sum_{j \in \cA' \cap \cC_h} w_j   \geq 2^{9}\log n \sum_{j \in \cA^4_t \cap \cC_h} w_j$. For all such times $t$, we remove all jobs $j \in \cC_{h^*_t}$ from $\cA'_t$. Let's call such time steps ``black out'' times for class $\cC_{h^*_t}$.  Also remove those times from $\cT_{h^*_t}$.

\begin{proposition}
	\label{prop:8-1}
	For all odd $h \geq 1$, 

			$\sum_t W( \cA^8_t \cap \cC_{h^*_t}) \geq \frac{1}{2} \sum_t W( \cA^7_t \cap \cC_{h^*_t}) \geq \frac{1}{2^9 \log n} \sum_t W(\cA^4_t \cap \cC_{h^*_t}) \geq \frac{1}{2^{14} \log n} \sum_j w_j F_j.$	

\end{proposition}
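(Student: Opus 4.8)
The statement is a chain of three inequalities, and the plan is to prove the first one from scratch (this is where Step~8 enters) and obtain the other two by bookkeeping with \propref{7-8} and \propref{4-5-2}. First I would record two structural facts that hold entering Step~8. (i) Since Step~3 kept only odd classes, for every $i\ge 3$ the odd classes partition every $\cA^i_t$, so $\sum_{h\text{ odd}}\sum_t W(\cC_h\cap\cA^i_t)=\sum_t W(\cA^i_t)$. (ii) Since Step~7 made the active times of distinct odd classes disjoint, and Steps~7--8 only delete jobs from the $\cA'_t$'s, at every time $t$ the whole of $\cA^i_t$ (for $i\in\{7,8\}$) lies in the one class $\cC_{h^*_t}$, so $\sum_t W(\cA^i_t\cap\cC_{h^*_t})=\sum_t W(\cA^i_t)$. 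I would also note that Step~8 only deletes jobs and times, so everything established in Steps~1--7 (in particular the per-class guarantee of \propref{7-8}) survives.

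For the first inequality, $\sum_t W(\cA^8_t\cap\cC_{h^*_t})\ge\tfrac12\sum_t W(\cA^7_t\cap\cC_{h^*_t})$, I would fix an odd class $h$ and show $\sum_t W(\cA^8_t\cap\cC_h)\ge\tfrac12\sum_t W(\cA^7_t\cap\cC_h)$, then sum over odd $h$ and use (i) and (ii). Let $B_h$ be the times blacked out for $\cC_h$ in Step~8; at $t\in B_h$ the weight deleted is exactly $W(\cA^7_t\cap\cC_h)$. Since each preprocessing step only removes jobs, $\cA^7_t\subseteq\cA^4_t$ and $\cA^7_t\subseteq\cA^7$; and the Step~8 black-out test ensures $W(\cA^4_t\cap\cC_h)\ge 2^9\log n\cdot W(\cA^7\cap\cC_h)$, where $W(\cA^7\cap\cC_h)$ is the total weight of the finally surviving class-$h$ jobs. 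Hence $W(\cA^7_t\cap\cC_h)\le W(\cA^7\cap\cC_h)\le\frac{1}{2^9\log n}W(\cA^4_t\cap\cC_h)$, so
\[
\sum_t\bigl(W(\cA^7_t\cap\cC_h)-W(\cA^8_t\cap\cC_h)\bigr)=\sum_{t\in B_h}W(\cA^7_t\cap\cC_h)\le\frac{1}{2^9\log n}\sum_t W(\cA^4_t\cap\cC_h).
\]
Now \propref{7-8} says $\sum_t W(\cA^7_t\cap\cC_h)\ge\frac{1}{2^8\log n}\sum_t W(\cA^4_t\cap\cC_h)$, i.e.\ $\sum_t W(\cA^4_t\cap\cC_h)\le 2^8\log n\cdot\sum_t W(\cA^7_t\cap\cC_h)$; plugging in, the deleted weight is at most $\tfrac12\sum_t W(\cA^7_t\cap\cC_h)$, giving $\sum_t W(\cA^8_t\cap\cC_h)\ge\tfrac12\sum_t W(\cA^7_t\cap\cC_h)$. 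The constant $2^9$ in the Step~8 test is chosen precisely to beat, with a factor $2$ to spare, the $2^8\log n$ loss accumulated in Steps~5--7.

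For the remaining two inequalities I would simply chain. By (ii), the first inequality, \propref{7-8} summed over odd $h$, (i) with $i=4$, and \propref{4-5-2},
\[
\sum_t W(\cA^8_t\cap\cC_{h^*_t})=\sum_t W(\cA^8_t)\ge\tfrac12\sum_t W(\cA^7_t)\ge\frac{1}{2^9\log n}\sum_t W(\cA^4_t)\ge\frac{1}{2^{13}\log n}\sum_j w_jF_j,
\]
which is stronger than the stated $\frac{1}{2^{14}\log n}$ bound; the two displayed intermediate quantities of the statement are recovered in passing, since $\sum_t W(\cA^7_t\cap\cC_{h^*_t})=\sum_t W(\cA^7_t)$ by (ii), and $\sum_t W(\cA^4_t\cap\cC_{h^*_t})\le\sum_t W(\cA^4_t)$.

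The one genuinely delicate step is the first inequality, and within it the point that needs care is the order of operations: bound the weight deleted at a black-out time by a $\tfrac{1}{2^9\log n}$-fraction of $W(\cA^4_t\cap\cC_h)$, and then \emph{immediately} convert $\sum_t W(\cA^4_t\cap\cC_h)$ into $2^8\log n\cdot\sum_t W(\cA^7_t\cap\cC_h)$ via \propref{7-8}. The tempting alternative --- bounding $|B_h|$ times a ``typical'' class-$h$ weight --- fails, because the per-time/per-job residual-flow estimates available at this point (\propref{4-5-3}, \propref{6-7}) carry polynomial-in-$n$ slack that would swamp the $\log n$ budget; pushing the whole estimate through the single aggregate inequality of \propref{7-8} is what makes the accounting close.
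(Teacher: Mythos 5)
Your treatment of the first two inequalities is essentially the paper's own argument: bound the weight deleted at Step-8 black-out times for class $\cC_h$ by $\frac{1}{2^{9}\log n}\sum_t W(\cA^4_t\cap\cC_h)$, then convert via \propref{7-8} into $\frac12\sum_t W(\cA^7_t\cap\cC_h)$ -- and your remark that $2^9$ is chosen to beat the $2^8\log n$ loss of \propref{7-8} with a factor $2$ to spare is exactly the intended accounting. One caveat on your reading of the Step-8 test: as printed it compares the \emph{global} weight $W(\cA'\cap\cC_h)$ to $W(\cA^4_t\cap\cC_h)$ and is garbled (neither your proof, nor the paper's, nor \propref{8-2} is consistent with the literal text); the condition that makes everything work is $W(\cA^4_t\cap\cC_{h^*_t})\ge 2^{9}\log n\cdot W(\cA'_t\cap\cC_{h^*_t})$, which hands you $W(\cA^7_t\cap\cC_h)\le\frac{1}{2^{9}\log n}W(\cA^4_t\cap\cC_h)$ at black-out times directly, with no detour through the global quantity $W(\cA^7\cap\cC_h)$. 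Your version of the test happens to suffice for this proposition but would not support \propref{8-2}.

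The one genuine divergence is the last link of the chain, and there your argument does not prove what is stated. The third inequality asserts $\sum_t W(\cA^4_t\cap\cC_{h^*_t})\ge\frac{1}{2^{5}}\sum_j w_jF_j$; the paper obtains it from the pointwise domination $W(\cA^4_t\cap\cC_{h^*_t})\ge\frac12\sum_{h\ \mathrm{odd}}W(\cA^4_t\cap\cC_h)$ (established in the proof of \propref{8-2}, using that weights across distinct odd classes differ by polynomial factors), combined with Step 3 and \propref{4-5-2}. You instead route the whole estimate through $\sum_t W(\cA^4_t)$. This does yield the end-to-end bound $\sum_t W(\cA^8_t\cap\cC_{h^*_t})\ge\frac{1}{2^{13}\log n}\sum_j w_jF_j$, which is slightly stronger than the stated $\frac{1}{2^{14}\log n}$ and is all that is used downstream; but your observation $\sum_t W(\cA^4_t\cap\cC_{h^*_t})\le\sum_t W(\cA^4_t)$ points the wrong way for the displayed inequality, so the chain as written is not "recovered in passing." To establish the proposition as stated you still need the \propref{8-2}-style domination of the active class at each time.
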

\begin{proof}
	From Proposition~\ref{prop:7-8}, we have that $\sum_t W(\cA^7_t \cap \cC_{h^*_t})\geq \frac{1}{2^{8} \log n } \sum_t W(\cA^4_t \cap \cC_{h^*_t}) $. At every black out time, the total weight of jobs in $\cA^8_t \cap \cC_{h^*_t}$ is at most $\frac{1}{2^{9} \log n}$ times the total weight of jobs in $\cA^4_t \cap \cC_{h^*_t}$. This implies that in Step 8,  the decrease in the quantity $\sum_t W(\cA'_t \cap \cC_{h^*_t})$ is at most $\frac{1}{2^{9} \log n}$ times $ \sum_t W(\cA^4_t \cap \cC_{h^*_t})$, hence the second inequality follows. 
	We now show the last inequality. We can prove that $\sum_t W(\cA^4_t \cap \cC_{h^*_t}) \geq \frac{1}{2} \sum_t \sum_{h: odd} W(\cA^4_t \cap \cC_{h})$ (See the proof of Proposition~\ref{prop:8-2}). Then no jobs in even class contribute to the right-hand-side quantity (See Step 3), we have $\sum_t \sum_{h: odd} W(\cA^4_t \cap \cC_{h}) = \sum_t \sum_{h} W(\cA^4_t \cap \cC_{h})$. This, together with Proposition~\ref{prop:4-5-2}, completes the proof. 	
\end{proof}

\begin{proposition}
	\label{prop:8-2}
	At all times $t \notin \cT_B$, $W(\cA^8_t \cap \cC_{h^*_t})  \geq \frac{1}{ 2^{9}\log n} W(\cA^4_t \cap \cC_{h^*_t}) \geq  \frac{1}{ 2^{14}\log n} W(\cA_t)$.
\end{proposition}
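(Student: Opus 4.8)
The statement bundles two inequalities, and I would prove them separately; the same bookkeeping also yields the aggregate bound $\sum_t W(\cA^4_t\cap\cC_{h^*_t})\ge\frac12\sum_t\sum_{h\text{ odd}}W(\cA^4_t\cap\cC_h)$ that is invoked in the proof of Proposition~\ref{prop:8-1}. Throughout I read ``$t\notin\cT_B$'' together with the convention that the Step~8 black-outs are folded into $\cT_B$, so that such a $t$ is active for a genuine odd class $h:=h^*_t$; in particular $\cA^8_t\cap\cC_h\ne\emptyset$, hence $\cA^4_t\cap\cC_h\supseteq\cA^8_t\cap\cC_h$ is nonempty and contains a job of weight $>w_{\max}n^{-8h}$.

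\medskip\noindent\emph{First inequality.} Since $\cA^8_t\cap\cC_h\subseteq\cA^4_t\cap\cC_h$, this asserts that Steps~5--8 thin $\cC_h$ at time $t$ by at most a factor $2^9\log n$. The plan is: (i) at a time of $\cT_h$ no job of $\cC_h$ is removed by Step~7 (by the definition of $\cT_h$), and Step~8 is all-or-nothing on $\cC_h$ at a fixed $t$, so $\cA^8_t\cap\cC_h=\cA^7_t\cap\cC_h=\cA^6_t\cap\cC_h$; (ii) the Step~8 black-out for class $h$ is designed precisely so that a surviving $t$ satisfies $W(\cA^6_t\cap\cC_h)\ge\frac1{2^9\log n}W(\cA^4_t\cap\cC_h)$, the slack $2^9\log n$ being exactly what is needed to absorb the level restriction of Step~5 (at most $2^4\log n$ levels, the heaviest one kept) together with the constant-factor losses of the window-trimming and left/right split of Step~6. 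Summing (ii) over the blacked-out times and using Proposition~\ref{prop:7-8} to lower bound $\sum_t W(\cA^7_t\cap\cC_{h^*_t})$ shows the weight that Step~8 deletes is at most half of that sum, which is the factor-$\frac12$ step quoted inside the proof of Proposition~\ref{prop:8-1}.

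\medskip\noindent\emph{Second inequality.} By Proposition~\ref{prop:4-5-1} (available since $t\notin\cT_B$) we have $\sum_{h\text{ odd}}W(\cA^4_t\cap\cC_h)\ge\frac1{16}W(\cA_t)$, so it suffices to show that the active class carries an $\Omega(1/\log n)$ fraction of $\sum_{h\text{ odd}}W(\cA^4_t\cap\cC_h)$. I would split the remaining odd classes present at $t$ into higher ones ($h>h^*_t$) and lower ones ($h<h^*_t$, whose jobs are heavier). Higher classes are disposed of by a crude count: there are at most $n$ of them, each job lighter than those of class $h^*_t$ by a factor $>n^{8}$, so against a single job of $\cA^4_t\cap\cC_{h^*_t}$ they contribute only a $\mathrm{poly}(1/n)$ fraction. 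For the lower classes, the definition of $\cT_{h^*_t}$ in Step~7 tells us that no job of a lower odd class that survived Step~6 is alive at $t$; hence every job of $\cA^4_t\cap\cC_h$ with $h<h^*_t$ was discarded in Step~5 or Step~6, and the task reduces to bounding the total weight of such displaced jobs that can be alive simultaneously at a single $t$.

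\medskip\noindent\emph{Main obstacle.} That last bound is the heart of the matter, since a displaced lower-class job can individually outweigh the whole active class. I expect to control it by combining three facts already set up: the flow-time uniformity of Step~1 (all $w_jF_j$ within a factor $2n$) together with the factor-$\ge n^{8}$ weight gap between consecutive odd classes forces a lower-class job to have flow time at most $\approx 2n^{-7}$ times that of an active class-$h^*_t$ job, hence to live inside a negligible subinterval of the laminar window $L_{j^*}\ni t$ supplied by Step~6; the Step~4 black-out rules, which for $t\notin\cT_B$ cap how much a heavily-blacked-out, near-expiring job can weigh relative to $W(\cA^4_t)$; and the heaviest-level choice of Step~5, which rules out the case that all the weight of a heavy lower class sits in a discarded level at $t$. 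Localizing to $L_{j^*}$ and charging the weighted flow time of these short, lighter jobs against it, I expect the displaced lower-class weight alive at $t$ to be a $\mathrm{poly}(1/n)=o(1/\log n)$ fraction of $W(\cA^4_t\cap\cC_{h^*_t})$, which closes the estimate; the remainder is bookkeeping with the constants already fixed in Steps~1--8.
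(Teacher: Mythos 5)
Your treatment of the first inequality is fine and coincides with what the paper leaves implicit: at a time $t\in\cT_{h^*_t}$ that survives Step~8, the inequality $W(\cA^8_t\cap\cC_{h^*_t})\ge\frac{1}{2^9\log n}W(\cA^4_t\cap\cC_{h^*_t})$ is immediate from the Step~8 black-out threshold itself (no accounting of the Step~5/6 losses is needed for this \emph{pointwise} statement -- those aggregate losses only matter for Proposition~\ref{prop:8-1}). The problems are in your second inequality. First, a quantitative gap: you declare that ``it suffices to show that the active class carries an $\Omega(1/\log n)$ fraction of $\sum_{h\text{ odd}}W(\cA^4_t\cap\cC_h)$.'' It does not. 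To pass from $\frac{1}{2^9\log n}W(\cA^4_t\cap\cC_{h^*_t})$ to $\frac{1}{2^{14}\log n}W(\cA_t)$ you need $W(\cA^4_t\cap\cC_{h^*_t})\ge\frac{1}{32}W(\cA_t)$, i.e.\ a \emph{constant} fraction (the paper shows the active class carries at least half of the odd-class sum, and $\frac12\cdot\frac1{16}=\frac1{32}$ via Proposition~\ref{prop:4-5-1}). An $\Omega(1/\log n)$ fraction would only yield $\Omega(1/\log^2 n)\,W(\cA_t)$, which is weaker than the proposition claims.

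Second, the step you label the ``main obstacle'' -- bounding the weight of lower-odd-class jobs that contribute to $\sum_{h\text{ odd}}W(\cA^4_t\cap\cC_h)$ at time $t$ -- is exactly where your argument stops being a proof: everything there is phrased as ``I expect,'' and the charging scheme you gesture at would have to overcome the fact that a \emph{single} surviving lower-class job outweighs the entire active class by a factor of at least $n^8$, so no bound on its flow time or on the length of its overlap with $L_{j^*}$ can rescue the weight comparison at the fixed instant $t$; you would have to show such a job simply cannot be alive at $t$. The paper's proof does precisely that in one line: by the definition of $h^*_t$ (a time in $\cT_{h^*_t}$ is one at which no job of a lower odd class is processed), the lower odd classes contribute nothing to $\sum_{h\text{ odd}}W(\cA^4_t\cap\cC_h)$ at such a $t$, and the higher odd classes are killed by the $n^4$ weight gap exactly as you argue. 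So either one adopts the paper's reading of Step~7, in which case your ``main obstacle'' is vacuous and the elaborate machinery is unnecessary, or one adopts your stricter reading, in which case the obstacle is real but your sketch does not close it. Either way the proposal as written does not establish the second inequality.
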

\begin{proof}
	We focus on the second inequality since the first inequality is obvious. 	By definition of $h^*_t$, we know that no job in odd class smaller than $h^*_t$ is processed by $\cA$ at time $t$. Also we know that there is at least one job in $\cA^8_t \cap \cC_{h^*_t}$ whose weight is $n^4$ times larger than any job in a higher odd class. Hence we have that $W(A^4_t \cap \cC_{h^*_t})   \geq \frac{1}{2} \sum_{h: odd} W(\cA^4_t \cap \cC_{h})$.  Then the second inequality immediately follows from Proposition~\ref{prop:4-5-1}. 
\end{proof}

This completes the description of all preprocessing time steps. At this point, let's recap what we have obtained from these  preprocessing steps. Proposition~\ref{prop:8-2} says that at each time $t \notin \cT_B$, we can just focus jobs in one class, and furthermore all those jobs are at the same level and the level is the same at all times for the class. This is because at each time $t$, the total weight of those jobs at the same level in a class  is at least $\Omega( 1 / \log n)$ times the total weight of all jobs alive at the time. Also since we have fit jobs at the same level into intervals of the same size, and kept say only left jobs, we will be able to pretend that those jobs arrive at the same time. Then the analysis basically reduces to that of the weighted completion time objective.

\medskip
We are almost ready to set dual variables. Let $q_{jt}$ denote the size of job $j$ processed at time $t$. For each $h$, we define $\zeta_{t}(h)$ as follows. Let $L$ be the unique interval in $\cL_{h, u^*_h}$ such that $t \in L$.  
Let $\zeta_{t}(h)$ denote the weighted median from the multiset  $M(h, L):= \{\frac{q_{jt}}{p_j} \; | \; j \in \cA_t^8 \cap \cC_h\}$ -- here the median is taken assuming that the quantity $\frac{q_{jt}}{p_j}$ has $w_j$ copies in the multiset $M(h,L)$. As before, we set dual variables using the optimal solution $\vec{x}_t$ of $\cppf$, and its dual variables $\vec{y}_t$. Recall that each time step $t$ is active for at most one class $h$ which is denoted as $h^*_t$; if no such class exists, let $h^*_t= 0$ and $\zeta_{t}(h^*_t) := -1$. Define, 
\begin{align*}
	\alpha_{jt} &:= 
		\begin{cases} 	 
			 w_j   	&\quad \forall t, j \in \cC_{h^*_t}  \cap \cA^8_t \mbox{ s.t. }  \frac{q_{jt}}{p_j} \leq \zeta_{t}(h^*_t) \\
			0     &\quad \mbox{otherwise}					 
		\end{cases} 	
\end{align*}
\noindent and let
\begin{align*}
	&& \alpha_j &:= \sum_{t} \alpha_{jt} && \forall j  
\end{align*}

We continue to define $\beta_{dt}$. We will first define $\beta_{dt}(h)$ for each odd $h$, and will let $\beta_{dt} := \sum_{h: odd} \beta_{dt}(h)$ for all $d,t$.
Consider any odd $h$ and $L \in \cL_{h, u^*_h}$. Then for all times $t \in L$, define

\begin{align*}
	\beta_{dt}(h) &:=  \frac{1}{s} \sum_{t' \geq t, t' \in  \cT_h \cap L \setminus \cT_B }  \zeta_{t'}(h) y^*_{dt'} 
\end{align*}
Also for all odd $h$ and $t$ such that there is no $L \in \cL_{h, u^*_h}$ with $t \in L$, we let $	\beta_{dt}(h): = 0$. This completes setting dual variables.

\medskip
As before, we will first lower bound the objective of $\dual_s$. We start with lower bounding the first part in the objective. 

\begin{lemma}
	\label{lem:alpha-sum-flow}
	$\sum_j  \alpha_j \geq \frac{1}{2} \sum_t W( \cA^8_t \cap \cC_{h^*_t})$.
\end{lemma}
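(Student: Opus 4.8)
The plan is to swap the order of summation and reduce the claim to a per-time-step inequality that follows immediately from the definition of the weighted median. Write
\[
\sum_j \alpha_j \;=\; \sum_j \sum_t \alpha_{jt} \;=\; \sum_t \sum_{j \in \cC_{h^*_t} \cap \cA^8_t \,:\, q_{jt}/p_j \le \zeta_{t}(h^*_t)} w_j,
\]
so it suffices to show that for every time $t$,
\[
\sum_{j \in \cC_{h^*_t} \cap \cA^8_t \,:\, q_{jt}/p_j \le \zeta_{t}(h^*_t)} w_j \;\ge\; \tfrac12\, W(\cA^8_t \cap \cC_{h^*_t}).
\]

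For the first step I would handle the degenerate case where no class is active at $t$, i.e.\ $h^*_t = 0$: then there is no job $j \in \cC_{h^*_t}$, so both sides are $0$ and the inequality is trivial. When $h^*_t = h \ge 1$, let $L$ be the unique interval of $\cL_{h,u^*_h}$ containing $t$, and recall $\zeta_{t}(h)$ is defined as the weighted median of the multiset $M(h,L) = \{q_{jt}/p_j \mid j \in \cA^8_t \cap \cC_h\}$, where each value $q_{jt}/p_j$ is included with multiplicity $w_j$. By the defining property of the weighted median, the total weight of jobs $j \in \cA^8_t \cap \cC_h$ whose value $q_{jt}/p_j$ is at most the median $\zeta_{t}(h)$ is at least half the total weight of $M(h,L)$, which is exactly $W(\cA^8_t \cap \cC_h) = W(\cA^8_t \cap \cC_{h^*_t})$. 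Since $\alpha_{jt} = w_j$ precisely for those jobs, the displayed per-time inequality holds, and summing over $t$ gives the lemma.

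This argument is essentially identical to Lemma~\ref{lem:obj-alpha} from the completion-time analysis, so there is no real obstacle; the only points requiring a line of care are the $h^*_t = 0$ convention (where $\zeta_{t}(h^*_t) = -1$ and the class is empty) and making sure the ``$\le \zeta$'' half of the weighted-median split is the one that receives weight $w_j$ in the definition of $\alpha_{jt}$, which it is.
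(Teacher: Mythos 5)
Your proof is correct and follows exactly the route the paper takes (the paper simply asserts the per-time-step inequality from the definition of $\alpha_{jt}$ and the weighted median, as in Lemma~\ref{lem:obj-alpha}). Your write-up just makes the summation swap and the median property explicit.
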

\begin{proof}
	 For any time $t$, it is easy to see from the definition of $\alpha_{jt}$ that the quantity $\sum_{j}  \alpha_{jt}$ is at least $\frac{1}{2} � W( \cA^8_t \cap \cC_{h^*_t})$. 

\end{proof}

In the following lemma we  lower bound the second part $\sum_{d,t} \beta_{dt}$ in the $\dual_s$ objective. The proof is very similar to that of Lemma~\ref{lem:obj-beta}. 

\begin{lemma}
	\label{lem:beta-sum-h-flow}
	For all odd $h$ and time $t \in \cT_h \setminus \cT_B$, $ \sum_d \beta_{dt}(h) \leq \frac{2^{20} \log n  }{s}  W( \cA^8_t \cap \cC_{h^*_t})$. 
\end{lemma}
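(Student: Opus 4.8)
The statement is the flow-time analogue of Lemma~\ref{lem:obj-beta}, and the plan is to copy that argument, with $W_t$ replaced by $W(\cA^8_{t'}\cap\cC_h)$ and with the summation restricted to the pruned time set $\cT_h\cap L\setminus\cT_B$ inside the single laminar interval $L\in\cL_{h,u^*_h}$ that contains $t$. First, unfolding the definition of $\beta_{dt}(h)$ and swapping the order of summation,
\[
\sum_d \beta_{dt}(h)=\frac1s\!\!\sum_{\substack{t'\ge t\\ t'\in \cT_h\cap L\setminus\cT_B}}\!\!\zeta_{t'}(h)\Big(\sum_d y^*_{dt'}\Big)=\frac1s\!\!\sum_{\substack{t'\ge t\\ t'\in \cT_h\cap L\setminus\cT_B}}\!\!\zeta_{t'}(h)\,W(\cA_{t'}),
\]
the last step being Lemma~\ref{lem:sumdual}. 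Every such $t'$ lies outside $\cT_B$ and is active for class $h$, so $h^*_{t'}=h$ and Proposition~\ref{prop:8-2} gives $W(\cA_{t'})\le 2^{14}\log n\cdot W(\cA^8_{t'}\cap\cC_h)$. Hence it is enough to bound $\sum_{t'}\zeta_{t'}(h)\,W(\cA^8_{t'}\cap\cC_h)$, the sum over the relevant $t'$, by $O(1)\cdot W(\cA^8_t\cap\cC_h)$.

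The structural fact I would establish first is that $W(\cA^8_{t'}\cap\cC_h)$ is non-increasing as $t'$ runs over $\cT_h\cap L\setminus\cT_B$ in increasing order. The laminar windows in $\cL_{h,u^*_h}$ are pairwise disjoint, so only jobs $j$ with $L_j=L$ contribute to $\cA^8_{t'}\cap\cC_h$ for $t'\in L$; since Step~6 retains only \emph{left} jobs, each such $j$ has $r_j\le$ (left endpoint of $L$) $\le t'$, and for $t'\in\cT_h\cap L\setminus\cT_B$ one checks that none of the remaining deletions (Steps~4, 7, 8) can apply, so $j\in\cA^8_{t'}$ iff $t'\le C_j$. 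Thus the set of surviving times each job occupies is a prefix of $\cT_h\cap L\setminus\cT_B$, which yields the monotonicity. With this in hand, I would partition $\{t'\ge t:\ t'\in\cT_h\cap L\setminus\cT_B\}$ into contiguous blocks $M_1,M_2,\dots$ on which $W(\cA^8_{t'}\cap\cC_h)\in\big((1/2)^kW^*,(1/2)^{k-1}W^*\big]$, where $W^*:=W(\cA^8_t\cap\cC_h)$.

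For a fixed $k$ I would then reprove $\sum_{t'\in M_k}\zeta_{t'}(h)\le 4$ verbatim from Lemma~\ref{lem:obj-beta}: by definition of the weighted median, the class-$h$ jobs $j$ with $q_{jt'}/p_j\ge\zeta_{t'}(h)$ carry weight at least $\tfrac12 W(\cA^8_{t'}\cap\cC_h)$, so $\zeta_{t'}(h)\le \tfrac{2}{W(\cA^8_{t'}\cap\cC_h)}\sum_{j\in\cA^8_{t'}\cap\cC_h}w_j\tfrac{q_{jt'}}{p_j}$; using $W(\cA^8_{t'}\cap\cC_h)>(1/2)^kW^*$ and summing over $t'\in M_k$, the bound becomes $\tfrac{2}{(1/2)^kW^*}$ times the total fractional weighted throughput of class-$h$ jobs in $M_k$, which (since $\sum_{t'}q_{jt'}/p_j\le1$ per job and, by monotonicity, every job occupying a time in $M_k$ already occupies the first time of $M_k$) is at most $(1/2)^{k-1}W^*$, giving $\le 4$. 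Combining,
\[
\sum_d\beta_{dt}(h)\le\frac{2^{14}\log n}{s}\sum_{k\ge1}(1/2)^{k-1}W^*\!\!\sum_{t'\in M_k}\!\!\zeta_{t'}(h)\le\frac{2^{14}\log n}{s}\cdot4\sum_{k\ge1}(1/2)^{k-1}W^*=\frac{2^{17}\log n}{s}\,W^*,
\]
and since $W^*=W(\cA^8_t\cap\cC_h)=W(\cA^8_t\cap\cC_{h^*_t})$ this is comfortably below the claimed $\tfrac{2^{20}\log n}{s}W(\cA^8_t\cap\cC_{h^*_t})$. The main obstacle is the monotonicity claim: one must argue carefully that no deletion in Steps~4--8 re-introduces a class-$h$ job into $\cA^8_{t'}$ at a later surviving time, and this is precisely where the ``keep only left jobs'' choice of Step~6 (together with Proposition~\ref{prop:8-2} ensuring $h^*_{t'}=h$) is used; the throughput counting and the geometric summation are then routine transcriptions of Lemma~\ref{lem:obj-beta}.
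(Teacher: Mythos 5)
Your proposal is correct and follows essentially the same route as the paper: Lemma~\ref{lem:sumdual} plus Proposition~\ref{prop:8-2} to pass from $\sum_d y^*_{dt'}$ to $W(\cA^8_{t'}\cap\cC_h)$, the dyadic partition into blocks $M_k$ exploiting monotonicity of $W(\cA^8_{t'}\cap\cC_h)$ over $\cT_h\cap L\setminus\cT_B$, and the weighted-median/throughput argument giving $\sum_{t'\in M_k}\zeta_{t'}(h)\le 4$, exactly as in the paper's transcription of Lemma~\ref{lem:obj-beta}. Your justification of the monotonicity via the ``left jobs'' choice of Step~6 is in fact slightly more explicit than the paper's one-line assertion, and your constant $2^{17}\log n/s$ sits comfortably under the stated $2^{20}\log n/s$.
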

\begin{proof}
	Consider any fixed odd $h$ and $L \in \cL_{h, u^*_h}$. Recall that amongst jobs $j \in \cC_{h, u^*_h}$ such that $L_j = L$, we kept left or right jobs in Step 6, We assume that we kept left jobs since the other case can be handled similarly. Consider any $t \in \cT_h \setminus \cT_B$.  Let $t_R$ denote the right end point of $L$. It is important to note that jobs contributing to $\cA^8_{t'} \cap \cC_h$ are consistent: the set of such jobs can only decrease in in time $t'  \in \cT_h \cap L \setminus \cT_B$ -- so does $W(\cA^8_{t'} \cap \cC_h)$.   We partition the time interval $[t, t_R)$ into subintervals  $\{M_k\}_{k \geq 1}$ such that the the quantity  $W(\cA^8_{t'} \cap \cC_h)$ at all times $t'$ during in $M_k$ lies in the range $\Big((\frac{1}{2})^{k} W( \cA^8_t \cap \cC_{h}), (\frac{1}{2})^{k-1} W( \cA^8_t \cap \cC_{h}) \Big]$. Now consider any fixed $k \geq 1$. We upper bound the contribution of $M_k$ to $\sum_{d} \beta_{dl}(h)$, that is $\frac{1}{s} \sum_{t'  \in M_k \cap \cT_h \setminus \cT_B} \sum_d \zeta_{t'}(h)  y^*_{dt'}$. Towards this end, we first upper bound $\sum_{t' \in M_k \cap  \cT_h \setminus \cT_B} \zeta_{t'}(h) \leq 4$. The key idea is to focus on the total weighted throughput processed during $M_k$. Job $j$'s fractional weighted throughput at time $t'$ is defined as $w_j\frac{q_{jt'}}{p_j}$, which is job $j$'s weight times the fraction of job $j$ that is processed at time $t'$; recall that $q_{jt'}$ denotes the size of job $j$ processed at time $t'$.

\begin{align*}
	\sum_{t' \in M_k\cap  \cT_h \setminus \cT_B} \zeta_{t'}(h)
	&\leq \sum_{t' \in M_k \cap \cT_h \setminus \cT_B} \Big(2 \sum_{j \in  \cA^8_t \cap \cC_h} \frac{w_j}{   W( \cA^8_{t'} \cap \cC_{h}) }\Big)  \cdot \oneindi \Big( \frac{q_{jt'}}{p_{j}}
	 \geq \zeta_{t'}(h) \Big)  \cdot  \frac{q_{jt'}}{p_j} \\
	 &\leq 2 \frac{1}{(1/2)^k W( \cA^8_t \cap \cC_{h}) } \sum_{t' \in M_k  \cap \cT_h \setminus \cT_B}   \sum_{j \in  \cA^8_t \cap \cC_h}  w_{j}  \ \frac{q_{jt'}}{p_j} \\
	&\leq 2 \frac{1}{(1/2)^k W( \cA^8_t \cap \cC_{h}) }   (1/2)^{k-1} W( \cA^8_t \cap \cC_{h})  
	=4
\end{align*}
	The first inequality follows from the definition of $\zeta_{t'}(h)$: for jobs $j$ with total weight at least half the total weight of jobs in $\cA^8_{t'} \cap \cC_h$, $\frac{q_{jt'}}{p_j} \geq \zeta_{t'}(h)$. The second inequality is due to the fact that $W( \cA^8_{t'} \cap \cC_{h}) \geq (\frac{1}{2})^{k} W( \cA^8_t \cap \cC_{h})$ for all times $t' \in M_k$. The last inequality follows since the total weighted throughput that can be processed during $M_k$ is upper bounded by the quantity $W( \cA^8_{t'} \cap \cC_{h})$ at the earliest time $t' \in M_{k'} \cap \cT_h \setminus \cT_B$, which is at most $(\frac{1}{2})^{k-1} W( \cA^8_t \cap \cC_{h})$. 
	We are now ready to complete the proof. 
\begin{align*}	
	 \sum_d  \beta_{dt}(h) 
	&= \frac{1}{s} \sum_{t' \geq t, t' \in \cT_h \setminus \cT_B} \zeta_{t'}(h) y_{dt'}^*  = \frac{1}{s} \sum_{k \geq 1} \sum_{t' \in M_k  \cap \cT_h \setminus \cT_B} \zeta_{t'}(h) \sum_{d}  y^*_{dt'} \\
	&= \frac{1}{s} \sum_{k \geq 1} \sum_{t' \in M_k  \cap \cT_h \setminus \cT_B} \zeta_{t'}(h) W( \cA_{t'}) \qquad \qquad \mbox{ [By Lemma~\ref{lem:sumdual}]}\\
	&= \frac{2^{17} \log n}{s} \sum_{k \geq 1} \sum_{t' \in M_k  \cap \cT_h \setminus \cT_B} \zeta_{t'}(h) W( \cA^8_{t'} \cap \cC_h) \qquad \mbox{ [By Proposition~\ref{prop:8-2}]}\\
	&= \frac{2^{17} \log n}{s} \sum_{k \geq 1} 4 (1/2)^{k-1} W( \cA^8_{t} \cap \cC_h) \qquad \mbox{ [By definition of $M_{k}$ and the fact $\sum_{t' \in M_k} \zeta_{t'}(h) \leq 4$]}\\
	&\leq \frac{2^{20} \log n}{s}  W( \cA^8_{t} \cap \cC_h)
\end{align*}

From Lemma~\ref{lem:alpha-sum-flow} and Lemma~\ref{lem:beta-sum-h-flow}, we derive that the $\dual_s$ objective is at least $\Omega(\sum_t W( \cA^8_t \cap \cC_{h^*_t}))$ with $s = 2^{22} \log n$. By Proposition, we conclude that the $\dual_s$ objective is $\Omega( 1/ \log n)$ times the total weighted flow time. 
\end{proof}

\medskip
To complete the proof of the upper bound result in Theorem~\ref{thm:flow}, it only remains to show that all dual constraints are satisfied.   Observe that the dual constraints (\ref{eqn:dual-3-flow}) and (\ref{eqn:dual-4-flow}) are trivially satisfied, hence we focus on dual constraint (\ref{eqn:dual-1-flow}). Recall that $q_{jt}$ denotes the size of job $j$ that is processed by the algorithm at time $t$. Note that $\sum_{t} q_{jt}  = p_j$.  

\begin{lemma}
	The dual constraint (\ref{eqn:dual-1-flow}) is satisfied. 
\end{lemma}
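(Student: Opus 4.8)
The plan is to reproduce the dual‑constraint argument from the weighted completion‑time analysis, but now carried out \emph{locally}, inside the laminar window $L_j$ and the active class $\cC_h$ of each job. Fix a job $j$ and a time $t \ge r_j$. If $j$ was discarded during preprocessing then $j \notin \cA^8$, so $\alpha_{jt'}=0$ for every $t'$ and $\alpha_j=0$; the left side of (\ref{eqn:dual-1-flow}) is then at most $0$ while the right side is non‑negative, and we are done. Otherwise $j$ lies in some retained odd class $\cC_h$, all of its surviving active times lie in one interval $L_j\in\cL_{h,u^*_h}$ (Step~6), and $\alpha_{jt'}=w_j$ holds exactly when $t'\in\cT_h\cap L_j\setminus\cT_B$, $j\in\cA^8_{t'}$ and $q_{jt'}/p_j\le\zeta_{t'}(h)$, while $\alpha_{jt'}=0$ otherwise.

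First I would pass to a forward sum. Since $\alpha_{jt'}=0$ for $t'<r_j$ and $\alpha_{jt'}\le w_j$ always, $\sum_{r_j\le t'<t}\alpha_{jt'}\le w_j(t-r_j)$, hence
\[
\frac{\alpha_j}{p_j}-w_j\cdot\frac{t-r_j}{p_j}\ \le\ \frac{1}{p_j}\sum_{t'\ge t}\alpha_{jt'}.
\]
On every $t'$ with $\alpha_{jt'}=w_j$ we have $j\in\cA_{t'}$, so $q_{jt'}=x^*_{jt'}>0$ and the KKT stationarity condition (\ref{eqn:kkt-2}) gives $w_j/q_{jt'}=B_{\cdot j}\cdot\yv^*_{t'}$. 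Using $q_{jt'}/p_j\le\zeta_{t'}(h)$ and $B_{\cdot j}\cdot\yv^*_{t'}\ge 0$ (KKT (\ref{eqn:kkt-3})), exactly as in the completion‑time proof,
\begin{align*}
\frac{1}{p_j}\sum_{t'\ge t}\alpha_{jt'}
&=\sum_{t'\ge t:\,\alpha_{jt'}=w_j}\frac{w_j}{q_{jt'}}\cdot\frac{q_{jt'}}{p_j}
\le\sum_{t'\ge t:\,\alpha_{jt'}=w_j}B_{\cdot j}\cdot\big(\zeta_{t'}(h)\,\yv^*_{t'}\big)\\
&\le\sum_{t'\ge t,\ t'\in\cT_h\cap L_j\setminus\cT_B}B_{\cdot j}\cdot\big(\zeta_{t'}(h)\,\yv^*_{t'}\big),
\end{align*}
the final step because $\{t':\alpha_{jt'}=w_j\}\subseteq\cT_h\cap L_j\setminus\cT_B$ and every summand is non‑negative (note $\zeta_{t'}(h)\ge 0$).

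It remains to bound the right‑hand sum by $s\,B_{\cdot j}\cdot\betav_t\ge s\,B_{\cdot j}\cdot\betav_t(h)$, and here I would split on where $t$ sits relative to $L_j$. If $t$ lies strictly to the right of $L_j$, the forward $\alpha$‑sum is empty and the constraint is immediate. Otherwise $t\in L_j$: by the left/right selection and the window‑trimming of Step~6 we may take the retained jobs so that every relevant time at or after $r_j$ falls inside $L_j$, so any $t\ge r_j$ with non‑empty forward $\alpha$‑sum lies in $L_j$. For such $t$, unwinding $\beta_{dt}(h)=\frac{1}{s}\sum_{t'\ge t,\ t'\in\cT_h\cap L_j\setminus\cT_B}\zeta_{t'}(h)\,y^*_{dt'}$ gives $s\,B_{\cdot j}\cdot\betav_t(h)=\sum_{t'\ge t,\ t'\in\cT_h\cap L_j\setminus\cT_B}B_{\cdot j}\cdot(\zeta_{t'}(h)\,\yv^*_{t'})$, which matches the bound above exactly. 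Combining the displays yields $\frac{\alpha_j}{p_j}-w_j\frac{t-r_j}{p_j}\le s\,B_{\cdot j}\cdot\betav_t$, i.e. (\ref{eqn:dual-1-flow}).

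The main obstacle is precisely the interval/class bookkeeping in the last paragraph. Unlike the completion‑time proof, where $\betav_t$ was an \emph{unrestricted} forward sum of the KKT duals, here $\betav_t(h)$ sees only the single laminar interval containing $t$, only the times active for $t$'s class, and none of the blacked‑out times, and summing $\beta_{dt}(h)$ over distinct odd $h$ must not over‑count. The argument therefore rests on the preprocessing having arranged that (i) each retained job's entire $\alpha$‑mass sits in one interval $L_j$ aligned with its window (Steps~5--6 and the left/right choice), (ii) distinct odd classes have disjoint active times and lower classes are never processed during $\cT_h$ (Step~7), and (iii) blacked‑out times carry no $\alpha$‑mass (Steps~4 and~8). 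Verifying that these facts really do force $t\in L_j$ for every $t\ge r_j$ carrying non‑trivial $\alpha$‑mass—so that the globally defined LP dual $\betav_t$ can be served interval‑by‑interval and class‑by‑class purely from the instantaneous KKT duals $\yv^*_{t'}$—is the delicate part; the rest is the routine KKT manipulation above.
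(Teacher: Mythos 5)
Your proof follows essentially the same route as the paper's: reduce to $j\in\cA^8$, discard the backward portion of the $\alpha$-sum via $\alpha_{jt'}\le w_j$, rewrite each surviving term using the KKT stationarity condition (\ref{eqn:kkt-2}) together with the median threshold $\zeta_{t'}(h)$, and recognize the resulting sum over $t'\in\cT_h\cap L_j\setminus\cT_B$ as $s\,B_{\cdot j}\cdot\betav_t(h)\le s\,B_{\cdot j}\cdot\betav_t$. The only differences are presentational: you spell out the interval/class/blackout bookkeeping and track the factor of $s$ explicitly, both of which the paper's chain of inequalities leaves implicit.
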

\begin{proof}
	We only need to consider $j \in \cA^8$. Say $j \in \cC_h$ ($h$ is odd as before). Also we only need to consider time $t$ before the interval $L_j$ ends. Observe that $j \in \cC_{h^*_t} \cap \cA^8$ only if $h = h^*_t$ and $t \notin \cT_B$.
\begin{align*}
	\frac{\alpha_j}{p_j} -  w_j \frac{t - a_j}{p_j} 
	&\leq \sum_{t' \geq t, t' \in \cT_h \setminus \cT_B} \frac{\alpha_{jt'}}{p_j}   && \mbox{[Since for all $t'$, $\alpha_{jt'} \leq w_j$]}\\	
	&\leq \sum_{t' \geq t, t' \in \cT_h \cap L_j \setminus \cT_B} \frac{\alpha_{jt'}}{p_j}   && \mbox{[Since $\alpha_{jt'} \leq w_j$ at all times $t' \notin L_j$]}\\	
	&= \sum_{t' \geq t, t' \in \cT_h \cap L_j \setminus \cT_B} w_j \frac{1}{p_j}  \cdot \oneindi \Big( \frac{q_{jt'}}{p_j} \leq \zeta_{t'}(h)\Big)&& \\
	&= \sum_{t' \geq t, t' \in \cT_h  \cap L_j \setminus \cT_B} w_j \frac{q_{jt'}}{p_j} \frac{1}{q_{jt'}} \cdot \oneindi \Big( \frac{q_{jt'}}{p_j} \leq  \zeta_{t'}(h)\Big) &&\\
	&= \sum_{t' \geq t, t' \in \cT_h  \cap L_j \setminus \cT_B} \frac{w_j}{x^*_{jt'}} \frac{q_{jt'}}{p_j}  \cdot \oneindi \Big( \frac{q_{jt'}}{p_j} \leq \zeta_{t'}(h)\Big)     && \mbox{[Since $q_{jt'} = x^*_{jt'}$]}\\	
	&\leq \sum_{t' \geq t, t' \in \cT_h  \cap L_j \setminus \cT_B} B_{\cdot j} \cdot \yv^*_{t'} \cdot \zeta_{t'}(h)    && \mbox{[By the KKT condition (\ref{eqn:kkt-2})]}\\	
	&= B_{\cdot j} \cdot \betav_t(h)    && \mbox{[By definition of $\betav_t(h)$]} \\
	&\leq B_{\cdot j} \cdot \betav_t    && \mbox{[By definition of $\betav_t$]}
\end{align*}
\end{proof}

\subsection{Lower Bound}
	\label{sec:flow-lower}
	
In this section, we prove the lower bound claimed in Theorem~\ref{thm:flow}. Towards this end, we will first prove a lower bound for makespan.
	
\begin{theorem}
	\label{thm:lb-makespan}
	Any deterministic non-clairvoyant algorithm is $\Omega( \sqrt {\log n})$-competitive for minimizing the makespan (the maximum completion time). Further, this is the case even when all jobs arrive at time $0$. 
\end{theorem}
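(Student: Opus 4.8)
The plan is to exhibit, for a growing parameter $d$, a \psp instance $\cI_d$ in which every one of the $n=n(d)$ jobs is released at time $0$, the offline optimum has makespan tiny compared with $d$, yet every deterministic non-clairvoyant algorithm is forced to a makespan $\Omega(d)$; choosing the size of the gadgets so that $n=2^{\Theta(d^2)}$ then gives $d=\Theta(\sqrt{\log n})$ and hence the bound. The instance is organised into $d$ nested \emph{layers}. A layer-$\ell$ gadget is built from $m=2^{\Theta(d)}$ mutually interchangeable \emph{tracks}; the packing constraints are chosen so that the tracks of one gadget jointly draw on a single unit of ``layer-$\ell$ capacity'' (their rates sum to at most $1$), while exactly one track, the \emph{spine}, is the route through which a fresh layer-$(\ell+1)$ gadget is reached, the other $m-1$ tracks holding only dummy jobs; the layer-$(d{+}1)$ gadget is a single $O(1)$-size job. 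Because the polytope is public, the $m$ tracks of a gadget must be indistinguishable as constraints, so which track is the spine — and, crucially, the placement of the job sizes — is fixed by the adversary \emph{after simulating the given deterministic algorithm}.

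The lower bound on the algorithm comes from an averaging argument applied layer by layer: at each layer the algorithm must spread its unit of layer-$\ell$ capacity over $m$ indistinguishable tracks, so some track gets at most a $1/m$ share over the relevant window; the adversary designates that under-served track as the spine and sizes the dummies so that the algorithm cannot \emph{recover} the misallocated capacity quickly enough. Consequently the algorithm's entry into layer $\ell+1$ is delayed by $\Omega(1)$ relative to an informed scheduler, and since the layers are nested these delays accumulate to $\Omega(d)$ on the algorithm's makespan. In the other direction $\opt$, knowing all the sizes, knows the whole spine: it dedicates each layer's capacity to that layer's spine track, descends all $d$ layers in $O(1)$ total time, and disposes of the (small, layer-disjoint) dummy jobs concurrently, keeping its makespan $o(d)$.

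I expect the technical heart to be the design of the packing constraints realising the gadgets, so that simultaneously: (a) the $m$ tracks of each gadget are information-theoretically identical to a non-clairvoyant scheduler; (b) capacity wasted on non-spine tracks is genuinely non-recoverable — which is precisely why the adversary's sizing must depend on the (deterministic) algorithm's whole trajectory, and why one must rule out \emph{every} rate allocation, not just one natural round-robin-style policy; and (c) the nesting makes the per-layer $\Omega(1)$ losses compose across all $d$ layers while $\opt$'s cost does not grow commensurately. The quantitative balance is exactly what fixes the exponent at $\tfrac12$: forcing a constant loss at a layer appears to require hiding the spine among $2^{\Theta(d)}$ tracks, so $n=2^{\Theta(d^2)}$ and the amplification is $\Theta(d)=\Theta(\sqrt{\log n})$ rather than $\Theta(\log n)$. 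Making this trade-off tight — and checking that the resulting makespan lower bound lifts, via a standard flow-time gadget that turns a makespan blow-up into many late jobs, to the $\Omega(n^{1-\eps})$ bound under $o(\sqrt{\log n})$-speed of Theorem~\ref{thm:flow} — is the delicate part.
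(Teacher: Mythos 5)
Your outline has the right shape --- a depth-$d$ nested construction with $2^{\Theta(d)}$ indistinguishable branches per level, one hidden ``special'' branch per node fixed by the adversary after simulating the deterministic algorithm, $n=2^{\Theta(d^2)}$ jobs and hence $d=\Theta(\sqrt{\log n})$ --- and this matches the quantitative skeleton of the paper's argument. But there is a genuine gap at exactly the point you flag as ``the technical heart'': you never construct the polytope, and the mechanism you gesture at does not fit the \psp framework. You describe layer $\ell+1$ as being ``reached'' through the spine track of layer $\ell$, with the algorithm's ``entry into layer $\ell+1$'' delayed and misallocated capacity being ``non-recoverable.'' A packing polytope over instantaneous rates has no precedence or unlocking structure: every job released at time $0$ can be processed at some positive rate at every instant, so ``entry into a layer'' is not a well-defined event, and a per-layer additive $\Omega(1)$ delay does not automatically compose. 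The paper realizes the nesting by an entirely concrete device you are missing: a tree of routers with \emph{multiplicative speed propagation}. Each non-leaf node has $4^D$ children and $4^D$ routers, exactly one of which has speed $2$; a feasible schedule is a matching of children to routers at every node, and a leaf job's rate is the \emph{product} of the speeds on its root path. One hidden ``big'' child per node carries jobs of size about $2^{\eta+1}$, where $\eta$ is the number of big ancestors. \opt\ routes every big child through the $2$-speed router and finishes in time $2$; the algorithm, unable to identify the big child, can give the $2$-speed router to any particular child for only $1/2^{D+1}$ time by time $1/2$ before the adversary deletes that subtree, so at least $4^{D-1}$ children survive with every job processed by less than one unit, and the surviving instance is (after the adversary shrinks job sizes) a scaled copy of the depth-$(D-1)$ instance. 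It is this exact self-reducibility --- not a sum of per-layer delays --- that makes the losses compose to makespan $\geq D/2$.

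A second, smaller issue: even granting a gadget, your averaging step (``some track gets at most a $1/m$ share, so designate it the spine'') only forces \emph{one} starved branch per gadget, whereas the recursion needs the adversary to keep an entire exponentially large sub-gadget alive and indistinguishable at the next level; the paper handles this by showing that \emph{most} (at least $4^{D-1}$ of $4^D$) children remain indistinguishable at time $1/2$, and by carefully formalizing indistinguishability via a shrinking family $\cS(t)$ of consistent instances together with the adversary's freedom to decrease job sizes. Without these two ingredients --- the multiplicative-rate coupling that encodes ``nesting'' inside a pure packing polytope, and the argument that a scaled copy of the smaller instance survives --- the proposal does not yet constitute a proof.
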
	

We prove that Theorem~\ref{thm:lb-makespan} implies the desired result. 

\begin{proofof}[the lower bound in Theorem~\ref{thm:flow}]
To see this, let $\cI_0$ denote the lower bound instance consisting of $N$ unweighted jobs that establishes the lower bound stated in Theorem~\ref{thm:lb-makespan}. By scaling, we can without loss of generality assume that the optimal (offline) makespan for this instance is 1. For any fixed $\eps >0$, we create $N^{1 / \eps}$ copies of instance $\cI_0$, $\{ \cI_e \}_{e \in \{0, 1, 2, ..., N^{1 / \eps} \}}$ where all jobs in  $\cI_e$ arrive at time $e$. There is a global constraint across all instances $\cI_e$ -- two jobs from different instances cannot be scheduled simultaneously. Then any deterministic non-clairvoyant algorithm that is given speed less than half the lower bound stated in Theorem~\ref{thm:lb-makespan} cannot complete all jobs in each $\cI_e$ within 2 time steps.  It is easy to see that there are at least $e/2$ jobs not completed during $[e, e+1)$ for any $e \in \{0, 1, 2, ..., N^{1 / \eps}\}$. Hence any deterministic online algorithm has total flow time $\Omega(N^{2 / \eps})$. In contrast, the optimal solution can finish all jobs within 1 time step, thus having total flow time $O(N \cdot N^{1 / \eps})$. This implies that the competitive ratio is $\Omega(n^{ (1  - \eps) / ( 1+ \eps)})$ where $n$ is the number of jobs in the entire instance concatenating all $\cI_e$, completing the proof of lower bound stated in Theorem~\ref{thm:flow}.
\end{proofof}

\smallskip
Henceforth, we will focus on proving Theorem~\ref{thm:lb-makespan}. Our lower bound instance comes from single source routing in a tree network with ``multiplicative speed propagation". As the name suggests, this network is hypothetical: a packet is transferred from node $v_a$ to $v_b$ at a rate equal to the multiplication of speeds of all routers that the packet goes through. To give a high-level idea of the lower bound, we fist discuss one-level tree, and then describe the full lower bound instance. Throughout this section, we refer to an arbitrary non-clairvoyant algorithm as $\cA$. 

\paragraph{One-level instance: \mbox{$\cI(1)$.}}
The root $\rho$ has $\Delta_1 := 4$ routers where only one router has $2$-speed and the other routers have $1$-speed. There are $\Delta_1$ packets (or equivalently jobs) to be routed to the root $\rho$. Only one job has size $2^2 - 1= 3$, and the other jobs have size $2^1 - 1 = 1$. Each job must be completely sent to the root, and it can be done only using routers. At any time, each router can process only one job. This setting can be equivalently viewed as the related machine setting, but we stick with this routing view since we will build our lower bound instance by multilayering this one-level building block. Obviously, the optimal solution will send the big job via the 2-speed router, thus having makespan $3/2$. Also intuitively, the best strategy for $\cA$ is to send all jobs at the same rate by equally assigning the 2-speed router to all jobs. Then it is easy to see that the online algorithm can complete all 1-size jobs only at time $\Delta_1 / ( \Delta_1 +1)$, and complete the 2-size job at time $\Delta_1/ ( \Delta_1 +1)+ 1 = 9/5$. Observe that giving more 1-speed routers does not give any advantage to the online algorithm since the main challenge comes from  finding the big job and processing it using a faster router.

\paragraph{Multi-level instance: \mbox{$\cI(h)$, $h \in [D = \Theta (\sqrt{ \log n})]$.}}
We create a tree $T_h$  with root $\rho$ where all jobs are leaves and each job $j$ can communicate with its parent node $u(j)$ via one of $u(j)$'s router, and the parent $u(j)$ can communicate with its parent node $u^{(2)}(j)$ via one of $u^2(j)$'s router, and so on; node/job $v$'s parent is denoted as $u(v)$. The tree $T_h$ has depth $h$. Every non-leaf node $v$ has $\Delta_h = 4^h$ children, which are denoted as $\cC_v$. Also each non-leaf node $v$ has a set $\cR_v$ of routers, whose number is exactly the same as that of $v$'s children, i.e. $|\cR_v| = |\cC_v| = \Delta_h$. All routers in $\cR_v$ have 1-speed except only one which has 2-speed. 

At any time, a feasible scheduling decision is a matching between routers $\cR_v$ and nodes $\cC_v$ for all non-leaf nodes $v$; when some jobs complete, this naturally extends to an injective mapping from $\cC_v$ to $\cR_v$. To formally describe this, let $g$ denote each feasible scheduling decision. Note that each feasible schedule $g$ connects each job to the root by a unique sequence of routers. Let $z_g$ denote the indicator variable. Let $\eta_j(g)$ denote the number of 2-speed routers on the unique path from $j$ to the root $\rho$ for $g$. When the schedule follows $g$ each job $j$ is processed at a rate of $2^{\eta_j(g)}$. We can formally describe this setting by $\psp$ as follows:
$$\poly = \Big\{ x_j \leq \sum_{g} 2^{\eta_j(g)} z_g \; \forall j; \quad \sum_g z_g \leq 1; \quad \vec{x} \geq 0; \quad \vec{z} \geq 0 \Big\}$$

We now describe job sizes, which are hidden to $\cA$.  Each non-leaf node $v$  has one special ``big" child amongst its $\Delta_h = 4^h$ children $\cC_v$ -- roughly speaking, a big child can have bigger jobs in its subtree. Note that $\cA$ is not aware of which child is big. For each node $v$ of depth $h-1$, define $\eta_v$ to be the  number of ``big" children on the path from $v$ to the root possibly including $v$ itself. Then $v$'s children/jobs, $\cC_v$ have the following sizes: for any integer $0 \leq k < \eta_v$, the number of jobs of size $2^{k+1} - 1$ is exactly $4^{h - \eta_v} ( 4^{\eta_v - k} - 4^{\eta_v -k  -1})$; for $k = \eta_v$, there are $4^{h - \eta_v}$  jobs of size $2^{\eta_v+1} -1$. Note that there is only one job of size $2^{h+1} - 1$ in $T_h$ and it is the biggest job in the instance.

The final instance will be $\cI(D)$. Since $\cI(D)$ has $4^{D^2}$ jobs, we have $D = \Theta(\sqrt {\log n})$.  For a visualization of the instance, see Figure~\ref{fig:lb-makespan}.

\begin{figure}[!ht]
  \centering
      \includegraphics[width=0.85\textwidth]{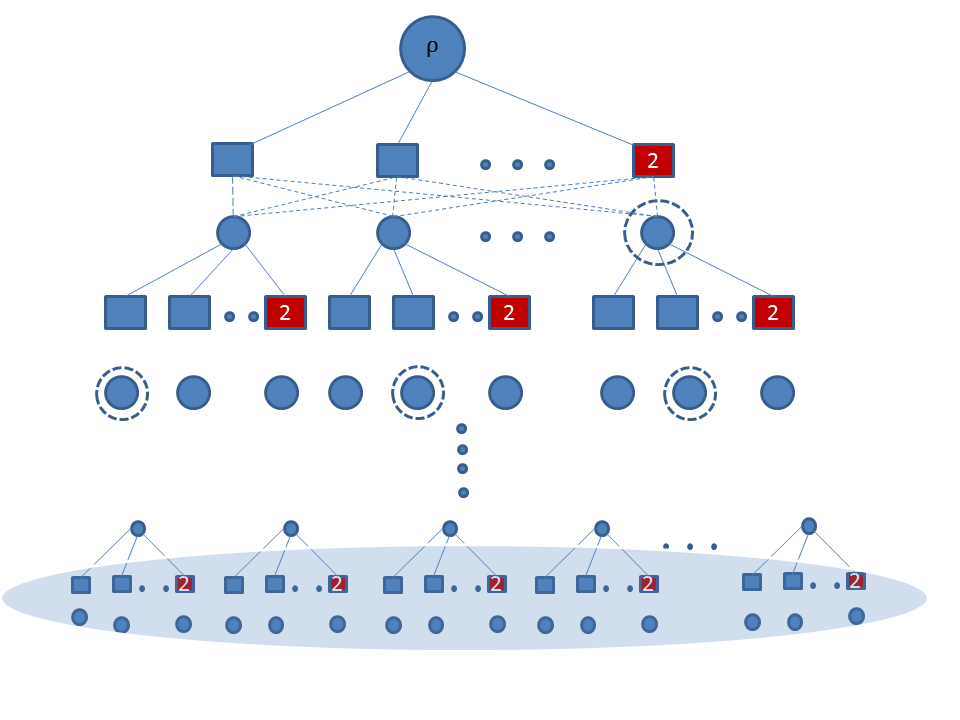} 	
  \caption{\label{fig:lb-makespan}Here, routers are represented by rectangles and jobs by circles. Height of the tree is $\Theta(\sqrt {\log n})$. Each job has $4^D$ children and for each level less than $D-1$ there is one big node, shown by dotted circle, which is hidden from the online algorithm. Each job has to be mapped to a router. The sizes of jobs at the last level depend on the number of big nodes on the path connecting a job to the root.}
\end{figure}

\begin{lemma}
	\label{lem:makespan-opt}
	There is an offline schedule that completes all jobs by time 2. 
\end{lemma}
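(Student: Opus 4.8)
The plan is to write down an explicit offline schedule and use crucially that the offline scheduler knows which nodes are ``big''. First I would fix, once and for all, a routing: at every non-leaf node $v$ I assign $v$'s unique $2$-speed router to $v$'s big child, and assign $v$'s remaining $1$-speed routers bijectively (arbitrarily) to the other $\Delta_h-1$ children of $v$. This is a perfect matching between $\cR_v$ and $\cC_v$; as subtrees complete over time, its restriction to the still-alive children of $v$ is still an injective assignment of children to routers, hence a feasible scheduling decision at every instant (finished subtrees simply free their routers, possibly leaving the $2$-speed router idle, which is harmless). The point of never re-routing is that under this fixed matching every leaf job $j$ is processed at a \emph{constant} rate $2^{b_j}$ for its whole lifetime, where $b_j$ is the number of $2$-speed routers on the root-to-$j$ path, i.e.\ the number of big nodes among $j$ and its proper ancestors.

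Next I would relate $b_j$ to the size of $j$. By the size-assignment rule, among the children of a depth-$(h-1)$ node $v$ a job of size $2^{k+1}-1$ occurs only for $k\le\eta_v$, the unique largest such job has size $2^{\eta_v+1}-1$, and it is exactly $v$'s big child. Unwinding the definition of $\eta_v$ and of the big children along the path from $v$ up to the root, one checks that $b_j=\eta_v$ when $j$ is $v$'s big child and $b_j=\eta_v-1$ otherwise; in both cases a job of size $2^{k+1}-1$ has $b_j\ge k$. In particular the single job of size $2^{h+1}-1$ lies at the bottom of the big path and runs at rate $2^{h}$.

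Finally I would combine these: a job $j$ of size $2^{k+1}-1$ is processed at the constant rate $2^{b_j}\ge 2^k$ starting at time $0$, so it completes by time $(2^{k+1}-1)/2^{k}=2-2^{-k}<2$. Since the exponents $k$ occurring in the instance are at most $h\le D$, every job finishes by time $2-2^{-D}<2$, which proves the lemma.

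The main obstacle I anticipate is purely the bookkeeping in the middle step: confirming from the closed-form counts in the size-assignment rule that every job of size $2^{k+1}-1$ really does have at least $k$ big ancestors (so that the offline routing puts at least $k$ $2$-speed routers on its path), and checking that the indexing of big nodes along a path agrees with the index appearing in the size formula. The feasibility of the schedule is by contrast immediate once one notices that no re-routing is ever needed: simply dedicating the $2$-speed routers to the big path already gives every job a rate of at least half its size.
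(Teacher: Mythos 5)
Your proof takes exactly the same route as the paper's: dedicate each node's unique $2$-speed router to its big child once and for all, and observe that every child $j$ of a depth-$(D-1)$ node $v$ then runs at a constant rate of at least $2^{\eta_v}$ while having size at most $2^{\eta_v+1}-1$, hence finishes before time $2$. One bookkeeping slip worth fixing: with $\eta_v$ counting the big nodes on the path from $v$ (inclusive) up to the root, a child $j$ of $v$ has $b_j=\eta_v+1$ if $j$ is big and $b_j=\eta_v$ otherwise --- not $\eta_v$ and $\eta_v-1$ as you wrote --- and the largest size $2^{\eta_v+1}-1$ is shared by $4^{D-\eta_v}$ of $v$'s children, not only the big one. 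As written, your formula for $b_j$ would contradict your own conclusion $b_j\ge k$ for the non-big jobs of maximal size $2^{\eta_v+1}-1$; the corrected count gives $b_j\ge k$ in every case, so the argument closes exactly as you intended.
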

\begin{proof}
	This is achieved by assigning each big node/job to the 2-speed router at all levels. This is possible since each non-leaf node has exactly one faster router and one big node/job. 	
	Consider any non-leaf node $v$ of depth $D-1$.  Since all jobs in $\cC_v$ have size at most $2^{\eta_v +1} - 1$, and all those jobs are processed at a rate of at least $2^{\eta_v}$, the claim follows. 
\end{proof}

We now discuss how $\cA$ performs for the instance $\cI(D)$. We first give a high-level overview of the adversary's strategy which forces $\cA$ to have a large makespan. Then, we will formalize several notions to make the argument clear -- the reader familiar with online adversary may skip this part. 

\paragraph{A high-level overview of the adversary's strategy.} As mentioned before, the main difficulty for the non-clairvoyant algorithm $\cA$ comes from the fact that $\cA$ does not know which jobs/nodes are big, hence cannot process big nodes using faster routers. This mistake will accrue over layers and will yield a gap $\Omega(D)$. To simplify our argument, we allow the adversary to {\em decrease} job sizes. That is,  at any point in time, the adversary observes the non-clairvoyant algorithm $\cA$'s schedule, and can remove any alive job. This is without loss of generality since the algorithm $\cA$ is non-clairvoyant, and can only be better off for smaller jobs. Obviously, this does not increase the optimal solution's makespan. 

Consider any node $v \neq \rho$.  Let us say that the subtree $T_{v'}$ rooted at $v'$ is big [small] if the node $v'$ is big [small]. 
If the node $v$ have used the unique 2-speed router in $\cR_{u(v)}$ for $1 / 2^{D+1}$ time steps, the adversary removes  the subtree $T_{v'}$ rooted at $v'$ (including all jobs in $T_{v'}$). 
We now show that at time $1/2$, the adversary still has an instance as effective as $\cI(D-1)$ -- by repeating this, the online algorithm will be forced to have a makespan of at least $D/2$. We claim two properties. 
\begin{enumerate}
	\item At time $1/2$, each alive non-leaf node has at least $4^{D-1}$ children. 
	\item Any job has been processed by strictly less than 1.
\end{enumerate}

The first property easily follows since each non-leaf node has $4^{D}$ children, and at most $2^D$ children are removed by time 1/2.  To see why the second property holds, consider any job $j$. Observe that each of $j$'s ancestor (including $j$ itself) used the 2-speed router only for $1 / 2^{D+1}$ time steps. Here the maximum processing for job $j$ can be achieved when $j$'s all ancestors use the 2-speed router simultaneously for $1 / 2^{D+1}$ time steps, which is most $1/2$. Also note that the length of time  job $j$ is processed by a combination of 1-speed routers only is strictly less than 1/2 time step. Hence the second property holds. 

Due to the second property, the online algorithm cannot find the big subtree incident to the root. This is because all subtrees incident to the root are indistinguishable to the algorithm by time 1/2, hence the adversary can pick any alive  one $T_v$ of those alive, and declare it is big. Likewise, for each alive non-leaf node $v'$, the adversary can keep alive the big child of $v'$. Hence the adversary can remove all nodes and jobs keeping only $4^{D-1}$ children including the big child for non-leaf node, and keeping only non-unit sized jobs. Also the adversary can pretend that all the remaining jobs have been processed exactly by one unit by decreasing job sizes. Observe that each alive job has remaining size $2^1 + 2^2 + ... + 2^k$ for some $k \geq 1$ Since $T_v$ is the only subtree incident to the root, we can assume that $\cA$ let $v$ use the 2-speed router from now on. This has the effect of decreasing each job's remaining size by half, and this exactly coincides with the instance $\cI(D-1)$. This will allow the adversary recurse on the instance $\cI(D-1)$, thereby making $\cA$'s makespan no smaller than $D/2 = \Omega(\sqrt{\log n})$.  This, together with, Lemma~\ref{lem:makespan-opt}, establish a lower bound $\Omega(\Delta) = \Omega (\sqrt{ \log n })$ for makespan, thus proving Theorem~\ref{thm:lb-makespan} and the lower bound claimed in Theorem~\ref{thm:flow}.

\bigskip
We formalize several notions (such as decreasing job sizes, indistinguishable instances) we used above to make the argument more clear. To this end it will be useful to define the collection $\cS: = \cS(0)$ of possible instances that the adversary can use.   The adversary will gradually decrease the instance space $\cS(t)$ depending on the algorithm's choice; $\cS(t)$ can only decrease in time $t$. Equivalently, the deterministic algorithm $\cA$ cannot distinguish between instances in $\cS(t)$ at the moment of time $t$, and hence must behave exactly the same by time $t$ for all the instances in $\cS(t)$. In this sense, all instances in $\cS(t)$ are indistinguishable to $\cA$ by time $t$.  All instances in $\cS$ follow the same polytope constraints for $\cI(D)$. There are two factors that make instances in $\cS$ rich.  The first factor is ``hidden" job IDs: Each non-leaf node $v$ has only one big child, and it can be any of its children $\cC_v$. In other words, this is completely determined by a function $\psi$ that maps each non-leaf node $v$ to one of $v$'s children, $\cC_v$. Consider any  fixed $\psi$. Then for each non-leaf node $v$ of depth $D-1$, the sizes that $v$'s children can have are fixed -- however, the actual mapping between jobs and job sizes can be arbitrary. So far, all instances can be viewed equivalent in the sense that they can be obtained from the same instance by an appropriate mapping. By a job $j$' ID, we mean the job in the common instance which corresponds to job $j$ in the common instance. The second factor  is ``flexible" job sizes.  Note that in $\cI(D)$,  a job $j$'s ID determines its size completely. This is not the case in $\cS$, and we let each job have any size up to the size determined by its ID. 

The adversary will start with set $\cS(0, D)$ -- here we added $D$ since the set is constructed from $\cI(D)$. The adversary's goal is to have $\cS(1/2, D)$ which essentially includes $\cS(0, D-1)$. By recursively applying this strategy, the adversary will be able to force $\cA$ have a makespan of at least $D/ 2$. As observed in Lemma~\ref{lem:makespan-opt}, for any instance in $\cS(0)$, all jobs in the instance can be completed by time 2 by the optimal solution, and this will complete the proof of Theorem~\ref{thm:lb-makespan}.

We make use of the two crucial properties we observed above. In particular, the second property ensures that at time $1/2$, any job ID mapping remains plausible in the solution set $\cS(1/2, D)$. 
Hence the adversary can choose any alive subtree $T_v$ incident to the root, and delete other sibling subtrees. From time $1/2$, any instance in $\cS(1/2, D)$ must satisfy the constraint that $v$ is big. 
The adversary now deletes all nodes/jobs in $T_v$ so that each node has $4^{D-1}$ children. Here the adversary can still choose any mapping (from each set of the alive siblings, the adversary can set any node/job to be big), and this 
has the same structure that $\cS(0, D-1)$ has regarding  the ``hidden Job ID" flexibility. Then, as mentioned before, by decreasing job sizes (more precisely, the corresponding instances are removed from $\cS(1/2, D)$) so that $\cS(1/2,D)$ becomes the same as $\cS(0, D-1)$ -- the only difference is that the root of the big subtree in $\cS(1/2, D)$ is processed via 2-speed router, however this difference is nullified by the fact that  the each job in any instance in $\cS(1/2, D)$ has exactly the double size that the corresponding job in the corresponding instance in $\cS(0, D-1)$ has. This allows the adversary to apply his strategy recursively. Hence we derive the following lemma which completes the proof of Theorem~\ref{thm:lb-makespan}  and the lower bound in Theorem~\ref{thm:flow}.

\begin{lemma}
	For any instance in $\cS(D)$, there is a way to complete all jobs in the instance within time 2. In contrast, for any deterministic non-clairvoyant algorithm $\cA$, there is an instance in $\cS(D)$ for which $\cA$ has a makespan of at least $D/2$.
\end{lemma}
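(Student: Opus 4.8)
The plan is in two parts. The first assertion is already Lemma~\ref{lem:makespan-opt}: routing every big node/job onto the unique $2$-speed router of its parent at all levels gives each job in $\cC_v$ (for $v$ of depth $D-1$) rate at least $2^{\eta_v}$ against size at most $2^{\eta_v+1}-1$, so all jobs finish by time $2$; and deleting jobs, which the adversary does below, only helps the optimum, so this stays true for whatever instance the adversary ultimately settles on. Hence all the work is in the lower bound, which I would prove by induction on $D$, turning the informal adversary argument above into a precise induction over a shrinking family $\cS(t,D)$ of instances that are indistinguishable to $\cA$ up to time $t$.

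For the inductive step, the adversary runs $\cA$ and applies the pruning rule: as soon as a node $v$ has been matched to the unique $2$-speed router of $\cR_{u(v)}$ for a cumulative $1/2^{D+1}$ units of time, it deletes the whole subtree $T_v$ (a legal move, since $\cA$ is non-clairvoyant). I would then verify two invariants at time $1/2$. First, every surviving non-leaf node still has at least $4^D-2^D\ge 4^{D-1}$ children, because the single $2$-speed router at that node offers only $1/2$ units of fast-router time over $[0,1/2]$ and each deletion consumes $1/2^{D+1}$ of it, so at most $2^D$ children are removed. Second, every job $j$ has been processed by strictly less than one unit: $j$ has at most $D$ ancestors (counting $j$ itself), each of which spent at most $1/2^{D+1}$ time on a $2$-speed router before being pruned, and since the instantaneous rate of $j$ equals $\prod_i (1+\chi_i(t))$, where $\chi_i(t)$ indicates whether $j$'s $i$-th ancestor is on a fast router at time $t$, expanding the product and bounding $\int_0^{1/2}\prod_{i\in S}\chi_i \le 1/2^{D+1}$ for each nonempty $S$ gives total processing at most $\tfrac12 + (2^D-1)\tfrac{1}{2^{D+1}} = 1 - \tfrac{1}{2^{D+1}} < 1$.

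These two invariants let me close the induction. At time $1/2$ no job is complete and, by construction, $\cA$ has learned nothing about which child of any node is ``big'', so all subtrees incident to the root are still indistinguishable; the adversary picks any surviving one, $T_v$, declares $v$ big, and deletes the sibling subtrees. Recursively, within each surviving non-leaf node it retains exactly $4^{D-1}$ children including a designated big one, discards the unit-size jobs, and --- using the ``flexible sizes'' built into $\cS$ --- shrinks each remaining job's size to exactly half of what its ID prescribes. The residual instance is then a copy of a member of $\cS(0,D-1)$, the only discrepancy being that the retained big subtree's root is served through a $2$-speed router; that global factor of $2$ in every rate is cancelled exactly by the halving of sizes, so $\cA$'s subsequent behaviour is governed by the $D-1$ case and costs at least $(D-1)/2$ more time, for a total makespan at least $1/2+(D-1)/2 = D/2$. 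Together with Lemma~\ref{lem:makespan-opt} this also re-establishes Theorem~\ref{thm:lb-makespan}.

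The main obstacle is bookkeeping, not any single estimate: one has to maintain the family $\cS(t,D)$ explicitly, check after every pruning step that it is still nonempty (i.e.\ that the deletions made so far are jointly consistent with some choice of big-children and some assignment of job sizes to IDs), and verify that the deterministic $\cA$ is genuinely forced to behave identically on all of $\cS(t,D)$ before time $t$, so that the adversary may legitimately defer the identity of every big subtree until the moment it is needed. Once the invariant ``$\cS(1/2,D)$ contains a rescaled copy of $\cS(0,D-1)$'' is phrased precisely, the counting in the two invariants and the induction itself are routine.
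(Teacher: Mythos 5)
Your proposal is correct and follows essentially the same route as the paper: the same pruning threshold of $1/2^{D+1}$, the same two invariants at time $1/2$ (each surviving non-leaf node keeps at least $4^{D}-2^{D}\ge 4^{D-1}$ children, and every job has received strictly less than one unit of processing), and the same recursion identifying $\cS(1/2,D)$ with a rescaled copy of $\cS(0,D-1)$ after halving sizes to cancel the $2$-speed router at the root. Your expansion $\int_0^{1/2}\prod_i(1+\chi_i(t))\,dt\le 1/2+(2^D-1)/2^{D+1}<1$ is merely a slightly more explicit formalization of the paper's processing estimate, not a different argument.
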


\end{document}